\pgfplotsset{compat=1.9}
\newcommand\mtlarge{\fontsize{15pt}{15pt}\selectfont}
\tikzset{
  font=\mtlarge, line width=1.0pt, mark size=3.0pt}
\pgfplotsset{compat=1.9}
\definecolor{blue1}{RGB}{0,0,255}
\definecolor{darkorange1}{RGB}{255,127,0}
\definecolor{darkgreen}{RGB}{0,100,0}
\definecolor{darkorchid}{RGB}{153,50,204}
\definecolor{cadetblue1}{RGB}{152,245,255}
\definecolor{goldenrod1}{RGB}{255,193,37}
\definecolor{deeppink2}{RGB}{238,18,137}
\definecolor{hotpink255129192}{RGB}{255,129,192}
\definecolor{paleturquoise}{RGB}{175,238,238}
\definecolor{springgreen}{RGB}{0,255,127}
\definecolor{mediumspringgreen}{RGB}{0,250,154}
\definecolor{brown3}{RGB}{205,51,51}
\definecolor{brick}{RGB}{156,102,31}
\definecolor{red1}{RGB}{255,0,0}
\definecolor{cadetblue3}{RGB}{122,197,205}
\definecolor{lime02380}{RGB}{0,238,0}
\newenvironment{customlegend}[1][]{%
    \begingroup
    \csname pgfplots@init@cleared@structures\endcsname
    \pgfplotsset{#1}%
}{%
\csname pgfplots@createlegend\endcsname
\endgroup
}%
\def\addlegendimage{\csname pgfplots@addlegendimage\endcsname}
\newtheorem{definition}{Definition}
\newtheorem{lemma}{Lemma}
\newtheorem{theorem}{Theorem}
\newtheorem{fact}{Fact}
\DeclareMathOperator*{\argmin}{argmin}
\def\BibTeX{{\rm B\kern-.05em{\sc i\kern-.025em b}\kern-.08em
    T\kern-.1667em\lower.7ex\hbox{E}\kern-.125emX}}
\begin{document}

\title{GCON: Differentially Private Graph Convolutional Network via Objective Perturbation
}

\author{\IEEEauthorblockN{Jianxin Wei\textsuperscript{*},
Yizheng Zhu\textsuperscript{*},
Xiaokui Xiao\textsuperscript{*},
Ergute Bao\textsuperscript{*},
Yin Yang\textsuperscript{\dag},
Kuntai Cai\textsuperscript{*},
Beng Chin Ooi\textsuperscript{*},
}
\IEEEauthorblockA{\textsuperscript{*}National University of Singapore, 
\textsuperscript{\dag}Hamad Bin Khalifa University}
jianxinwei@comp.nus.edu.sg,
zhuyizheng@gmail.com,
xkxiao@nus.edu.sg,
ergute@comp.nus.edu.sg,\\
yyang@hbku.edu.qa,
caikt@comp.nus.edu.sg,
ooibc@comp.nus.edu.sg
}

\maketitle

\begin{abstract}
Graph Convolutional Networks (GCNs) are a popular machine learning model with a wide range of applications in graph analytics, including healthcare, transportation, and finance. However, a GCN trained without privacy protection measures may memorize private interpersonal relationships in the training data through its model parameters. This poses a substantial risk of compromising privacy through link attacks, potentially leading to violations of privacy regulations such as GDPR. To defend against such attacks, a promising approach is to train the GCN with differential privacy (DP), a rigorous framework that provides strong privacy protection by injecting random noise into the training process. However, training a GCN under DP is a highly challenging task. Existing solutions either perturb the graph topology or inject randomness into the graph convolution operations, or overestimate the amount of noise required, resulting in severe distortions of the network's message aggregation and, thus, poor model utility.

Motivated by this, we propose {\sf GCON}, a novel and effective solution for training GCNs with edge differential privacy. {\sf GCON} leverages the classic idea of perturbing the objective function to satisfy DP and maintains an unaltered graph convolution process. Our rigorous theoretical analysis offers tight, closed-form bounds on the sensitivity of the graph convolution results and quantifies the impact of an edge modification on the trained model parameters. Extensive experiments using multiple benchmark datasets across diverse settings demonstrate the consistent superiority of {\sf GCON} over existing solutions.

\end{abstract}

\begin{IEEEkeywords}
Gcraph convolutional network; Differential privacy.
\end{IEEEkeywords}

\section{Introduction}\label{sec:Introduction}
Graph analytics has promising applications in various domains such as medicine~\cite{ni2017cross, shi2024dmrnet}, social networks~\cite{akcora2012privacy, kou2020efficient}, finance~\cite{karpman2018learning}, and transportation~\cite{bakalov2015time}. In particular, a graph $G = \left< V, E\right>$ comprises a node set $V$ (often representing entities such as user profiles) and an edge set $E$ (e.g., connections between users). Each node in $V$ is also commonly associated with features and labels (e.g., attributes of users). In many applications, connections between individual nodes can be sensitive information. For instance, on a social network, a user may not want to disclose her/his connection with a particular person, political party, or activist group. In such scenarios, the graph analytics results must not leak the individuals' private connections. 

This paper focuses on the training and release of graph convolutional networks (GCNs)~\cite{kipf2016semi}, which is a popular deep learning solution that achieves state-of-the-art performance for common graph analytic tasks such as node classification~\cite{li2024adafgl}. Similar to other neural network models (NNs), GCNs tend to memorize parts of the training data including sensitive connections in its model parameters, which could be extracted with edge inference attacks, e.g., \cite{wu2022DPGCN, he2021stealing,zhang2022inference}, violating privacy regulations such as GDPR~\cite{hoofnagle2019european} and CCPA~\cite{goldman2020introduction}. 

Defending against such edge inference attacks is highly non-trivial, since the GCN training process is rather complex, and it is difficult to control which parts of the training data are memorized by the model. To address this issue, a rigorous solution is to train the GCN with edge-level privacy protection. As a classic privacy protection standard, differential privacy (DP)~\cite{dwork2006differential}, limits the adversary's confidence in inferring the presence or absence of any individual record in the input dataset. Due to its mathematical rigor, DP has been widely accepted by both academia~\cite{hong2021collecting, xia2021dpgraph} and industry~\cite{apple, near2018uber}. Edge DP, in particular, limits the adversary's confidence in inferring the presence or absence of an edge by introducing random perturbations in the released GCN model parameters, as elaborated in Section~\ref{sec:preliminary:dp}. Such random perturbations adversely impact model performance; hence, the main objective in this paper is to preserve the \textit{utility} of the released GCN model, while satisfying the \textit{edge DP} requirement.

\vspace{1pt}
\noindent\textbf{Challenges.} In the literature, a typical mechanism for enforcing DP involves two key steps: (i) determining the \textit{sensitivity} of the analysis result to be released or an intermediate computation result, and (ii) injecting calibrated random noise according to the sensitivity and the privacy parameters of DP, as described in Section~\ref{sec:preliminary:dp}. In the context of GCN training with edge DP, the sensitivity is the maximum impact of adding or removing an edge in the input graph on the final model parameters or the intermediate results (e.g., the adjacency matrix of the graph, gradients of the model weights, etc.), which, in general, is rather high. We explain the reasoning as follows. During the training process for a GCN, it iteratively conducts graph convolution, (also known as message propagation/aggregation). As we will elaborate in Section~\ref{sec:GCN preliminary}, The operation of graph convolution aggregates features from each node's local neighborhood via its edges, and updates the node’s features with the aggregation results. Specifically, in the first convolution, the presence or absence of an edge influences at least the aggregations of its two endpoints. In the $m$-th convolution ($m>1$), this edge can have far-reaching effects on all ($m-1$)-hop neighbors of its endpoints, affecting the aggregations of $O(k^{m-1})$ nodes, where $k$ is the maximum degree of nodes. Consequently, training an effective GCN under edge DP is a highly challenging task, since the high sensitivity leads to heavy perturbations and, thus, low model utility.

\vspace{1pt}
\noindent\textbf{Limitations of existing solutions.} Although there exist DP algorithms~\cite{abadi2016dpsgd,chaudhuri2011DPERM} for training other types of NNs, such as convolutional neural networks~\cite{wei2022dpis} and complex architectures like ResNet~\cite{de2022unlocking} and BERT~\cite{rust2023differential}, these solutions are mainly designed for the case where data records have no interconnections, e.g., images or textual documents, where adding or removing one record only influences the computations on the record itself (e.g., the individual gradient). However, applying these algorithms to training GCNs under edge DP would result in overwhelming noise and poor model utility. We explain this through an example of a classic algorithm DP-SGD~\cite{abadi2016dpsgd}, which consists of two steps. First, for each record, DP-SGD computes the gradient of the loss function to the model parameters, and artificially clips each gradient's $\mathcal{L}_2$-norm by $\tau$, a hyperparameter. Second, it injects noise proportional to $\tau$ into the sum of the gradients and updates the model parameters by the noisy gradient sum. Accordingly, in DP-SGD, adding or removing one record is equivalent to adding or removing one gradient, whose sensitivity is bounded by $\tau$. However, when applied to even the simplest 1-layer GCN, 
the sensitivity becomes at least $2\tau$, because one edge influences at least its two endpoints' aggregation results and, thus, the values of two gradients. In general, when applied to an $m$-layer GCN, the sensitivity is amplified by a factor of at least $2k^{m-1}$. Consequently, a deep GCN with a large $m$ would require overwhelmingly high noise to preserve DP, leading to poor model utility, as we demonstrate in Section~\ref{sec:Experiments}.

Currently, there is only a limited collection of works on integrating DP with GCNs. Among these, {\sf DPGCN}~\cite{wu2022DPGCN} injects random noise directly into the adjacency matrix of the input graph to satisfy edge DP. As a result, every element in the adjacency matrix is perturbed, which severely disrupts the message aggregation of nodes in the GCN, leading to poor model utility. A subsequent method {\sf LPGNet}~\cite{kolluri2022lpgnet} attempts to mitigate this problem by iteratively refining and perturbing a smaller matrix that compresses the information of the original adjacency matrix. In another line of research, {\sf GAP}~\cite{sajadmanesh2023gap} and {\sf ProGAP}~\cite{sajadmanesh2023progap} introduce noise into the aggregate features after each round of message aggregation directly, instead of perturbing the adjacency matrix. As shown in our evaluation results, these methods achieve somewhat improved model utility compared to DPGCN; yet, their performance is still far lower compared to the non-private case, especially with more stringent privacy budget parameters. We conjecture that such poor model performance is due to the large noises injected into the message aggregation process, which is the core of GCN. \textit{Can we preserve edge-level differential privacy while maintaining the integrity of the message aggregation process of GCN, to obtain high-utility models?} 

\vspace{1pt} 
\noindent\textbf{Our contributions.} We provide a positive answer to the above question with {\underline{g}raph \underline{c}onvolutional networks via \underline{o}bjective perturbatio\underline{n} ({\sf GCON}), a novel algorithm for training GCNs with edge DP. The main contributions of our work are summarized as follows.

Firstly, we design a complete training framework adapted to objective perturbation that maintains an unaltered message aggregation process in GCNs, unlike previous solutions that perturb this process and degrade model utility. The main idea is to view GCN training as an optimization problem, and inject calibrated random noise into the corresponding objective function directly, extending the classic DPERM framework~\cite{chaudhuri2011DPERM} to the graph domain. Applying objective perturbation to GCNs requires the objective function to exhibit convexity with respect to the network parameters, which is challenging for traditional multi-layer GCNs with nonlinear activation functions between layers, as their loss functions are inherently non-convex. We address this through a simplified model structure, detailed in Section~\ref{sec:SGC}, and a convex, higher-order differentiable objective function with bounded derivatives. Previous research~\cite{wu2019simplifying, he2020lightgcn, qin2023seign, mao2021ultragcn} has shown such simplified GCNs can still achieve high model utility; meanwhile, as shown in the experiments, the reduction of noise required to satisfy edge-DP more than compensates for the performance loss of the simplified structure compared to the original GCNs.

Secondly, we are the first to derive tight, closed-form bounds on the sensitivity of aggregate features under arbitrary hops. 
Similar to DP-SGD discussed earlier, the original DPERM, when applied directly to GCN training, has the problem that it severely overestimates the GCN's 
sensitivity. Specifically, consider a node $i$ with $k$ neighbors. In a 1-layer GCN, an edge $e$ only influences $\frac{1}{k}$ of its aggregate feature. With a deeper GCN, the influence of $e$ on $i$'s more distant neighbors becomes even smaller. Previous methods, including DPERM and DP-SGD, are unable to account for these progressively smaller influences, and they treat those nodes' features or gradients as completely changed, severely overestimating the sensitivity. In contrast, the proposed solution {\sf GCON} analyzes the subtle influences and tightly bounds the sensitivity of the aggregate features by $O(m)\ll O(k^{m-1})$. To control the amount of information aggregated to each node and further reduce the sensitivity, we adopt two advanced aggregation schemes: PPR and APPR, rooted in personalized PageRank~\cite{luo2019efficient}, which are detailed in Section~\ref{sec:SGC}.

Thirdly, the original DPERM is only feasible in scenarios where two neighboring datasets differ by a single record. To handle graph data, we made non-trivial modifications of the original derivation in DPERM to make it applicable in scenarios where all records could potentially change. Moreover, we correct the proof process of the core theory, Theorem 9, in DPERM~\cite{chaudhuri2011DPERM}, transforming the original eigenvalue analysis into a singular value analysis, detailed in Section~\ref{sec:Analysis}. A broader significance of our derivation is that our framework can be applied under various aggregation schemes used in non-private GCNs, not just the original GCN or PPR/APPR, as long as we have the sensitivity of the aggregate features.

Lastly, to demonstrate the generalization capability and usability of our proposed model, we compare {\sf GCON} on 4 real-world datasets with a wide range of homophily ratios against 6 competitors, under varying privacy budgets. 
The results demonstrate that {\sf GCON} consistently and significantly outperforms existing solutions in terms of prediction accuracy while maintaining the same level of privacy protection across a broad spectrum of configurations.
In the following, Section~\ref{sec:Preliminaries} provides the necessary background on GCN and differential privacy. Section~\ref{sec:problem} clarifies our problem setting, while Section~\ref{sec:Solution} presents the proposed solution {\sf GCON}. Section~\ref{sec:Analysis} formally establishes the correctness of {\sf GCON} and analyzes its privacy costs. Section~\ref{sec:Experiments} contains an extensive set of experiments. Section~\ref{sec:Related Work} overviews related work. Finally, Section~\ref{sec:Conclusion} concludes the paper with directions for future work.
\section{Preliminaries}\label{sec:Preliminaries}
We use $\|\bm{x}\|_2$ and $\|\bm{x}\|_1$ to denote the $\mathcal{L}_2$-norm and $\mathcal{L}_1$-norm of a vector $\bm{x}$, respectively, and $\|\bm{M}\|_F$ to denote the Frobenius norm of a matrix $\bm{M}$. Sections~\ref{sec:GCN preliminary} and \ref{sec:SGC} present the general GCN concepts and two specific flavors of GCNs called SGC and PPNP, respectively. Section~\ref{sec:preliminary:dp} presents the formal definition of differential privacy. The definitions of convexity and strong convexity, which are important concepts in this paper, are presented in Appendix~\ref{app:convexity}.

\subsection{Graph Convolutional Networks}\label{sec:GCN preliminary}
Graph Convolutional Networks (GCNs)~\cite{kipf2016semi} have emerged as a prominent neural network architecture for graph analytics. In what follows, we review the setup and basics of GCNs.

Consider an input graph with nodes $V=\{1,2,\ldots,n\}$ and edges $E$. The edge set $E$ is typically represented by an adjacency matrix $\bm{A} \in \{0,1\}^{n \times n}$, where $\bm{A}_{ij}=1$ signifies an edge from node $i$ to node $j$. Let $\bm{D}$ be a diagonal matrix where each diagonal element, $\bm{D}_{ii}$, corresponds to the degree of node $i$. In GCNs, we often transform $\bm{A}$ into a message passing matrix by multiplying the inverse of $\bm{D}$~\cite{chen2020scalable}, written as $\Tilde{\bm{A}}=\bm{D}^{r-1}\bm{A}\bm{D}^{-r}$, $r\in [0,1]$.

We present GCNs in the context of node classification. Specifically, each node $v \in V$ is associated with a $d_0$-dimensional feature vector, as well as a $c$-dimensional one-hot label vector, i.e., if $v$ belongs to the $j$-th class ($1 \leq j \leq c$), then the $j$-th entry in the label vector is set to 1, and the remaining entries are set to 0. We denote
$\bm{X} \in \mathbb{R}^{n \times d_0}$ and $\bm{Y} \in \mathbb{R}^{n \times c}$ as the feature matrix and label matrix of the $n$ nodes, respectively. 

An $m$-layer GCN iteratively computes the output of layer $l$ ($l=1,2,\ldots,m$), denoted as $\bm{Y}^{(l)}$, based on the message passing matrix $\Tilde{\bm{A}}$ and network parameter $\{\bm{\Theta}^{(l)}\}_{l=1}^m$. In particular, we define $\bm{Y}^{(0)}=\bm{X}$. At each layer $l$, the GCN aggregates the information of every node's neighbors by multiplying the normalized adjacency matrix $\Tilde{\bm{A}}$ with the preceding layer's output $\bm{Y}^{(l-1)}$, and the corresponding layer's parameters $\bm{\Theta}^{(l)}$. Formally, the GCN iteratively computes 
\begin{equation}\label{eq:GCN layer l}
\bm{Y}^{(l)}=H_l \left(\Tilde{\bm{A}} \bm{Y}^{(l-1)} \bm{\Theta}^{(l)}\right), 
\end{equation}
for $l=1,2,\ldots,m$. 
Here $H_l$ is some pre-defined activation function, e.g., the sigmoid function $H_l(u)=1/(1+\exp(-u))$ or the linear mapping $H_l(u)=u$. The output of the last layer $\bm{Y}^{(m)}$ is regarded as the prediction for the nodes in the input graph, denoted as $\bm{\hat{Y}}$.

\vspace{3pt}
\noindent\textbf{Training objective.} In a typical use case of GCN, a portion of the labels are used as the training set, say, with $n_1 < n$ node labels. The training objective is to minimize the dissimilarity between the predicted labels and true labels of the labeled nodes in the training set by updating the network parameter $\{\bm{\Theta}^{(l)}\}_{l=1}^m$ (e.g., using stochastic gradient descent with backpropagation~\cite{zhang2020sgd}). The dissimilarity is measured by the loss function, as follows:
\begin{equation}\label{eq:L_Lambda}
\begin{aligned}
    &L_{\Lambda}(\{\bm{\Theta}^{(l)}\}_{l=1}^m;\bm{\hat{Y}},\bm{Y}) \\
    &= \frac{1}{n_1}\sum^{n_1}_i L(\{\bm{\Theta}^{(l)}\}_{l=1}^m;\bm{\hat{y}}_i,\bm{y}_i) +  \frac{\Lambda}{2} \sum_l^m \|\bm{\Theta}^{(l)}\|_F^2, 
\end{aligned}   
\end{equation}
where $\frac{1}{2}\sum_l^m \|\bm{\Theta}^{(l)}\|_F^2$ is a regularization term for the network parameters, $\Lambda$ is a hyperparameter, and the component $L(\{\bm{\Theta}^{(l)}\}_{l=1}^m;\bm{\hat{y}}_i, \bm{y}_i)$ is some function that evaluates the distance between the prediction $\bm{\hat{y}}_i$ and the true label $\bm{y}_i$ for node $i$, where the prediction $\hat{y}$ is computed based on the network parameters $\{\bm{\Theta}^{(l)}\}_{l=1}^m$. 

\subsection{SGC and PPNP}\label{sec:SGC}
Next, we review two popular techniques, namely SGC and PPNP, which were proposed to improve GCN model performance in terms of both efficiency (e.g., memory consumption and computational cost in training and inference) and effectiveness (e.g., measured by prediction accuracy). 

\vspace{6pt}
\noindent \textbf{SGC.} Simple Graph Convolution (SGC)~\cite{wu2019simplifying} simplifies a multi-layer GCN by using a linear mapping, e.g., $H_l(u) = u$, as the activation function in each graph convolution layer defined in Eq. (\ref{eq:GCN layer l}).
This leads to a streamlined GCN which still maintains an identical message propagation pathway as the original GCN. Formally, the model equation of an $m$-layer GCN is simplified to
\begin{equation}\label{eq:SGC}
    \hat{\bm{Y}} = \bm{Y}^{(m)} = H(\Tilde{\bm{A}}^{m} \bm{X} \bm{\Theta}^{(1)}\bm{\Theta}^{(2)}\cdots \bm{\Theta^}{(m)}) = H(\Tilde{\bm{A}}^{m} \bm{X} \bm{\Theta}),
\end{equation}
where we have transformed the matrix product of $\bm{\Theta}^{(1)}\bm{\Theta}^{(2)}\cdots \bm{\Theta^}{(m)}$ to a single matrix $\bm{\Theta}$ of dimension $d_0\times c$. It has been demonstrated in \cite{wu2019simplifying} that SGC improves training efficiency while retaining the accuracy performance of the original GCN.

\vspace{3pt}\noindent
\textbf{PPNP.} Recall from Eq (\ref{eq:GCN layer l}) that in each layer $l$, the traditional aggregation in GCN directly multiplies the previous output $\bm{Y}^{(l-1)}$ with the normalized adjacency matrix $\Tilde{\bm{A}}$, which can be seen as the transition probability matrix in random walks~\cite{ppnp}. PPNP~\cite{ppnp} designs an advanced propagation scheme for improving the performance of GCNs, based on personalized PageRank (PPR)~\cite{luo2019efficient}, instead of iteratively applying $\Tilde{\bm{A}}$ to aggregate features. In particular, PPR introduces the concept of restarting probability to a random walk. Let $\bm{R}_0$ be an identity matrix $\bm{I}$ denoting the roots. The PPR propagation matrix with $m$ propagation steps, denoted as $\bm{R}_m$, is defined as
\begin{equation}\label{eq:PPR recursive}
    \bm{R}_m = (1-\alpha)\Tilde{\bm{A}} \bm{R}_{m-1} + \alpha \bm{I},
\end{equation}
where $\alpha \in (0,1]$ is the restart probability. When $m$ approaches infinity, $\bm{R}_m$ converges to $\bm{R}_{\infty}$, defined as 
\begin{equation}\label{eq:PPR matrix}
    \bm{R}_{\infty} = \lim_{m\rightarrow \infty} \bm{R}_m = \alpha\left(\bm{I}-(1-\alpha) \Tilde{\bm{A}}\right)^{-1} .    
\end{equation}
PPNP initially processes $\bm{X}$ through a feed-forward neural network $M$ denoted as $M(\bm{X})$. Subsequently, it uses $\bm{R}_{\infty}$ to aggregate the processed features $M(\bm{X})$. Formally, its model equation is
\begin{equation*}
    \hat{\bm{Y}} = \bm{R}_{\infty} M(\bm{X}).
\end{equation*}

There remains an efficiency issue with the above approach: many real graphs are sparse, meaning that their message passing matrix $\Tilde{\bm{A}}$ is a sparse matrix; however, the inversion in Eq. (\ref{eq:PPR matrix}) results in a $n\times n$ dense matrix, leading to higher computational and memory costs. To address this issue, the authors of \cite{ppnp} further introduce an approximate version of PPR (APPR) to GCN, which employs $\bm{R}_m$ with finite choices of $m$. In such cases, the corresponding propagation matrix is derived by recursively expanding Eq. (\ref{eq:PPR recursive}).
\begin{equation}\label{eq:APPR matrix}
    \bm{R}_m= \alpha \sum_{i=0}^{m-1} (1 - \alpha)^i  \Tilde{\bm{A}}^i + (1-\alpha)^m \Tilde{\bm{A}}^m.
\end{equation}

\subsection{Differential Privacy}\label{sec:preliminary:dp}
Differential privacy (DP) provides rigorous protection of individuals' private information.
DP ensures that an adversary can only infer with limited confidence the presence or absence of any individual of the input by observing the output. The most commonly used notion of DP is ($\epsilon, \delta$)-DP, defined as follows.

\begin{definition}[($\epsilon, \delta$)-Differential Privacy (DP)~\cite{dwork2006differential}] \label{def:prelim-dp}
A randomized algorithm $\mathcal{A}:\mathcal{D}\rightarrow \mathcal{R}$ satisfies {\it ($\epsilon, \delta$)-DP} if given
any neighboring datasets $D, D'\subseteq \mathcal{D}$, we have
\begin{equation} \label{eq:prelim-DP}
\Pr[\mathcal{A}(D) \in \mathcal{O}] \le \exp({\epsilon})  \cdot \Pr[\mathcal{A}(D') \in \mathcal{O}] + \delta,
\end{equation}
for any set of output $\mathcal{O}\subseteq \mathcal{R}$.
\end{definition}

In Eq. (\ref{eq:prelim-DP}), the \textit{privacy budget}, specified by parameters $\epsilon$ and $\delta$, quantifies the indistinguishability between the two distributions of $\mathcal{A}(D) $ and $\mathcal{A}(D')$ for any neighboring datasets $D$ and $D'$ (graphs in our setting). 
Smaller $\epsilon$ and $\delta$ indicate that the two distributions are more similar, which implies lower confidence for an adversary to infer a private record (an edge in our case) of the input, and vice versa. In practice, $\delta$ is usually set to a small value no more than the inverse of the dataset size (a special case is when $\delta=0$, which is referred to as $\epsilon$-DP), and $\epsilon$ is often used to control the strength of the privacy requirement. The above definition involves the concept of neighboring datasets, whose generic definition is as follows.

\begin{definition}[Neighboring datasets] \label{def:prelim-neighboring}
Datasets $D$ and $D'$ are said to be {\it neighboring} if they differ by one record.
\end{definition}

The above generic definition involves two abstract concepts, \textit{dataset} and \textit{record}. In the context of graph analytics, a common notion is \textit{edge DP}~\cite{kolluri2022lpgnet,sajadmanesh2023gap,sajadmanesh2023progap}, in which the entire input graph $D=\left<V, E, \bm{X}, \bm{Y}\right>$ is considered a dataset, and an edge $e \in E$ is considered a record, i.e., two neighboring graphs $D$ and $D'$ differ by exactly one edge. In this paper, we aim to satisfy edge-DP in the training and release of a GCN model.

\section{Problem Setting}\label{sec:problem}

Our goal is to design an effective algorithm for training GCNs with differential privacy. In our presentation, we focus on node classification as described in Section~\ref{sec:GCN preliminary}; other applications of GCN are left as future work. 
In particular, the algorithm takes a graph dataset $D = \left<V, E,\bm{X},\bm{Y} \right>$ with node set $V$ and edge set $E$, as well as features $\bm{X}$ and labels $\bm{Y}$ of the nodes. It outputs GCN parameters $\Theta$.

In terms of privacy, we focus on edge DP as described in Section~\ref{sec:preliminary:dp}, which prevents an adversary from differentiating between any two neighboring graphs that differ by a single edge. We consider the node set $V$ as well as features/labels of nodes in the training set as public information. 

Formally, given privacy budget $(\epsilon, \delta)$, for any graph dataset $D = \left<V, E, \bm{X}, \bm{Y}\right>$ (which consists of node set $V$, edge set $E$, feature matrix $\bm{X}$, and label matrix $\bm{Y}$) and its edge-level neighboring dataset $D' = \left<V, E', \bm{X}, \bm{Y}\right>$ whose edge set $E'$ differs from $E$ by exactly one edge, the output $\bm{\Theta}$ of algorithm $\mathcal{A}$ satisfies
\begin{equation}\label{eq:graph edge DP}
    \Pr[\mathcal{A}(V,\bm{X},\bm{Y},E) \in \mathcal{O}] \leq e^{\epsilon} \Pr[\mathcal{A}(V,\bm{X},\bm{Y},E') \in \mathcal{O}] + \delta,
\end{equation}
for any $\mathcal{O} \subseteq \textit{Range}(\mathcal{A})$.

\section{GCON}\label{sec:Solution}

\subsection{High-Level Idea and Major Challenges}\label{sec:idea}

In a nutshell, the main idea of the proposed {\sf GCON} algorithm is to inject random noise into the objective function of GCN training, following the DPERM framework. The scale of noise is carefully calibrated to satisfy edge DP with given privacy parameters $\epsilon$ and $\delta$. Note that {\sf GCON} does not directly perturb the graph topology or introduce additional noise during the training process besides objective perturbation. This way, {\sf GCON} avoids the severe disruption of the GCN's message propagation/aggregation process commonly encountered in existing solutions, as explained in Section~\ref{sec:Introduction}. Realizing this idea involves overcoming several major challenges:

\vspace{3pt}
\noindent \textbf{Challenge 1: bounding the sensitivity of the aggregation results.} 
As mentioned in Sections~\ref{sec:Introduction} \& \ref{sec:GCN preliminary}, an edge influences the training of model parameters by influencing the aggregation results. We begin with bounding the sensitivity of the aggregation results. It is crucial since the added noise required to satisfy edge DP is calibrated based on it. This is challenging because an edge affects not only the aggregation of its endpoints but also nodes farther away due to multiple convolutions/layers in the GCN. Specifically, according to Eq. (\ref{eq:GCN layer l}), GCNs iteratively multiply the normalized adjacency matrix with the nodes' representations. Then adding or removing one edge affects the aggregations of at least $2k^{m-1}$ nodes after $m$ convolutions/layers, where $k$ is the maximum degree of nodes, making it difficult to bound the specific amount of the impact on these aggregations.

\vspace{3pt}
\noindent \textbf{Challenge 2: quantifying how the above sensitivity affects model parameters.} Even if we have the sensitivity of the aggregations, we need to quantify how this sensitivity and the noise injected into the objective function affect the (distribution of) optimized model parameters. The quantification involves analyzing the Jacobian matrices of the mapping from the parameters to the noise and calibrating the noise distribution. The original derivation of DPERM~\cite{chaudhuri2011DPERM} isolates a single divergent element within the Jacobian matrices, which is the difference between the neighboring datasets. However, this approach is inadequate for training GCNs, where the Jacobian matrices differ in numerous items. We need to numerically bound the difference between these Jacobian matrices.

\vspace{3pt}
\noindent \textbf{Challenge 3: ensuring strong convexity of the GCN training objective.} According to the literature~\cite{chaudhuri2011DPERM}, the objective perturbation scheme relies on two crucial properties of the objective function: strong convexity (Definition~\ref{def:strongly convex}) and bounded derivatives with respect to the model parameters. Both are difficult for traditional GCNs with multiple layers and nonlinear activation functions between the layers, whose loss function is inherently non-convex.

In addition to the above major challenges, we also face the issue that the feature dimensionality $d$ could significantly affect the privacy guarantee. In particular, a larger $d$ corresponds to more model parameters, which may lead to more leakage of sensitive information.

\subsection{Solution Overview}\label{sec:Solution Overview}
The difficulty of Challenges 1 and 2 lies in the theoretical derivation. We provide a comprehensive proof approach different from the original DPERM, along with several useful lemmas that might be of independent interest, in Section~\ref{sec:Analysis}.
For Challenge 1, in addition to providing a rigorous derivation and tight bound, we further reduce the upper bound of the sensitivity by utilizing the personalized PageRank propagation, PPR, and its approximation APPR, described in Section~\ref{sec:SGC}. 
Unlike conventional message aggregation methods that multiply the adjacency matrix with node features, PPR/APPR control the amount of information aggregated to each node by a tunable restart probability $\alpha$, effectively limiting information propagated through the edges and leading to reduced sensitivity. 
For instance, at $\alpha=1$, nodes propagate nothing to their neighbors; conversely, at $\alpha=0$, the scheme reverts to the standard propagation mechanism employed in conventional GCNs. Later, we present the detailed construction of PPR and APPR propagation matrices in Section~\ref{sec:GCON Propagation}, and derive tight, closed-form bounds on their sensitivities in Lemma~\ref{lemma:Psi_Z of PPR and APPR}.

For Challenge 3, we follow the simplifying GCN approach (called SGC and detailed in Section~\ref{sec:SGC}) that streamlines the GCN architecture by eliminating nonlinearities and consolidating parameter matrices between layers, transforming the GCN into a linear model. Then the training of the GCN can be expressed as a convex optimization problem, making objective perturbation feasible.
SGC has been shown to be effective in various non-DP settings~\cite{he2020lightgcn, mao2021ultragcn}. Moreover, we further improve the model structure (as shown in Eq.(\ref{eq:concatenate Z})) to enhance its performance, following recent variants of SGC \cite{chanpuriya2022asgc,chien2020gprgnn}, which are state-of-the-arts methods in non-private settings. Through our novel use of SGC, it enables the objective perturbation approach with reduced perturbations and supports an arbitrary number of propagation steps, while existing DP methods~\cite{wu2022DPGCN,kolluri2022lpgnet,sajadmanesh2023gap,sajadmanesh2023progap} are unable to realize GCNs' potential since they are only feasible with GCNs with no more than 2 message propagation steps (otherwise, the model utility would be rather poor due to excessive noise). Our solution more than compensates for the slight decline in model capability caused by the simplification, as evidenced by our extensive experiments with various tasks and settings in Section~\ref{sec:Experiments}, where GCON often outperforms the competitors by a large margin and generalizes better. 
Additionally, we select 2 higher-order differentiable loss functions and derive the bounds of their derivatives. The properties of these two functions facilitate the application of objective perturbation, and their performance is comparable to that of the most commonly used cross-entropy loss. Our detailed model architecture is presented later in Section~\ref{sec:Network Structure}.

Finally, to address the dimensionality issue, we present an MLP-based feature encoder that does not utilize any edge information, detailed in Section~\ref{sec:encoder module}.

\subsection{Detailed {\sf GCON} Model}\label{sec:Our Model}

\subsubsection{Feature Encoder}\label{sec:encoder module}
First, we propose a feature encoder based on multilayer perceptron (MLP) to reduce feature dimension. Let the feature matrix of the labeled and unlabeled nodes be $\bm{X}_l \in \mathbb{R}^{n_1 \times d_0}$ and $\bm{X}_{ul} \in \mathbb{R}^{ (n-n_1) \times d_0}$, respectively. The framework of our encoder is as follows.
\begin{equation*}
\begin{aligned}
\overline{\bm{X}_l} &= H_{mlp}\left(\operatorname{MLP}\left(\bm{X}_l ; \bm{W}_1 \right)\right),\\
\overline{\bm{Y}_l} &=H\left(\overline{\bm{X}_l} \bm{W}_2\right),\\
\end{aligned} 
\end{equation*}
where $\bm{W}_1$ and $\bm{W}_2$ are model parameters and $H_{mlp}$ and $H$ are activation functions. Note that in our encoder, we exclusively utilize node features and labels, which are deemed non-private in our problem setting defined in Section~\ref{sec:problem}. The encoder operates by initially transforming the original features $\bm{X_l}$ into a new feature space, resulting in $\overline{\bm{X}_l}\in \mathbb{R}^{n_1 \times d_1}$, where $d_1$ is the new dimension. After this transformation, the encoder predicts the labels of these features, represented as $\overline{\bm{Y}_l}$. Let the real label matrix of $\bm{X}_l$ and $\bm{X}_{ul}$ be $\bm{Y}_l$ and $\bm{Y}_{ul}$ ($\bm{Y}_{ul}$ is unknown in training), respectively. The training of the encoder is carried out through the minimization of a classification loss function $L_{mlp}$ as follows.
\begin{equation*}
    \bm{W}_1^{\star}, \bm{W}_2^{\star}=\argmin _{\bm{W}_1,\bm{W}_2} L_{mlp}(\overline{\bm{Y}_l}, \bm{Y}_l).
\end{equation*}
Upon completion of the training, we encode all node features $\bm{X}$ to new features $\overline{\bm{X}} \in \mathbb{R}^{n \times d_1}$ by $\bm{W}_1^{\star}$ (or $\bm{W}_2^{\star}$) as follows.
\begin{equation*}
\begin{aligned}
\overline{\bm{X}} &= H_{mlp}\left(\operatorname{MLP}(\bm{X} ; \bm{W}_1^{\star} )\right).
\end{aligned} 
\end{equation*} 

The complete algorithm of the encoder is outlined in Algorithm~\ref{alg:encoder} in Appendix~\ref{app:Algorithms}. Note that the encoder does not involve any private edge information, and, thus, preserves edge privacy automatically. 

For simplicity, in the following sections, we use $\bm{X} \in \mathbb{R}^{n \times d_1}$ and $\bm{Y} \in \mathbb{R}^{n \times c}$ to denote the feature matrix and the label matrix, respectively, instead of $\overline{\bm{X}}$ and $\overline{\bm{Y}}$.

\subsubsection{Propagation with PPR and APPR}\label{sec:GCON Propagation}
We employ two advanced propagation schemes in {\sf GCON}, PPR and APPR. 
Let $\hat{\bm{A}} = \bm{A} + \bm{I}$ be the adjacency matrix with self-loops. Let $\bm{D}$ be the diagonal degree matrix of $\hat{\bm{A}}$, i.e., $\bm{D}_{ii}=\sum_k^n \hat{\bm{A}}_{i k}$ and $\bm{D}_{ij}=0$ for $i\neq j$. We follow the normalization for $\Tilde{\bm{A}}$ in \cite{chen2020scalable} that sets $\Tilde{\bm{A}} = \mathbf{D}^{r-1} \hat{\bm{A}} \mathbf{D}^{-r}$ where $r\in [0,1]$. In this work we set $r=0$ and obtain $\Tilde{\bm{A}}=\mathbf{D}^{-1} \hat{\bm{A}}$. 
Given $\Tilde{\bm{A}}$, we construct our propagation matrix $\bm{R}_m$, $m\in [0, \infty]$, utilizing the PPR and APPR schemes (Eq. (\ref{eq:PPR matrix}) \& (\ref{eq:APPR matrix})), as follows.
\begin{equation}\label{eq:R_m}
    \bm{R}_m = \begin{cases}
        \bm{I} , & m=0,\\
        \alpha \sum_{i=0}^{m-1} (1 - \alpha)^i  \Tilde{\bm{A}}^i + (1-\alpha)^m \Tilde{\bm{A}}^m , & m\in(0, \infty) \\
        \alpha\left(\bm{I}-(1-\alpha) \Tilde{\bm{A}}\right)^{-1} , & m=\infty
    \end{cases}
\end{equation}
In the above equation, $\bm{R}_\infty$ and $\bm{R}_m$, with $m\in(0, \infty)$, represent the PPR and APPR schemes, respectively. For simplicity, we no longer emphasize which specific scheme {\sf GCON} uses unless necessary. Instead, we represent both schemes with $\bm{R}_m$, where $m$ ranges from 0 to $\infty$.
We prove that the matrix $\bm{I}-(1-\alpha) \Tilde{\bm{A}}$ always has an inverse in Lemma~\ref{lemma:invertible} in Appendix~\ref{app:lemma invertible}.

\subsubsection{SGC-Based Network Architecture}\label{sec:Network Structure}
Before propagation, we normalize the $\mathcal{L}_2$-norm of each row in $\bm{X}$ (i.e., each feature) to be 1. In the step of message propagation, we multiply the propagation matrix $\bm{R}_m$ with the feature matrix $\bm{X}$ and obtain the aggregate feature matrix
\begin{equation}\label{eq:Z_m}
    \bm{Z}_m =\bm{R}_m\bm{X}.
\end{equation}
Furthermore, to enhance the model performance, we augment the features by using several different propagation steps $m_1, m_2, \cdots, m_s \in [0, \infty]$ and concatenate them as 
\begin{equation}\label{eq:concatenate Z}
\bm{Z}=\frac{1}{s}(\bm{Z}_{m_1} \oplus \bm{Z}_{m_2} \oplus \cdots \oplus \bm{Z}_{m_s}),
\end{equation}
where $\oplus$ represents concatenation by row. We weight the concatenated features by $\frac{1}{s}$ to bound the $L_2$-norm of each row of $\bm{Z}$. We then input $\bm{Z}$ into our GCN. 

For the network architecture, we follow SGC idea (outlined in Section~\ref{sec:SGC}) and build a single-layer network with linear mapping. Let $\bm{\Theta} \in \mathbb{R}^{d \times c}$ be the 1-layer network parameters, where $d=sd_1$. Formally, the model makes prediction $\hat{\bm{Y}}$ by computing $\hat{\bm{Y}} = \bm{Z} \bm{\Theta}$.

\subsubsection{Strongly-Convex Loss Function}\label{sec:GCON loss}
The formulation of the original loss function $L_{\Lambda}(\bm{\Theta};\bm{Z},\bm{Y})$ is shown in Eq. (\ref{eq:L_Lambda}). Let $\bm{z}_i$ and $\hat{\bm{y}}_i$ denote the $i$-th columns of $\bm{Z}^T$ and $\hat{\bm{Y}}^T$, respectively. In Eq. (\ref{eq:L_Lambda}), for the convenience of subsequent analysis, we define $L(\bm{\Theta};\hat{\bm{y}}_i,\bm{y}_i)$ as either the MultiLabel Soft Margin Loss~\cite{paszke2019pytorch} or the pseudo-Huber Loss~\cite{2020SciPy-NMeth}, both of which are convex w.r.t. $\bm{\Theta}$. The specific mathematical expressions of the two losses are shown in Appendix~\ref{app:derivatives of l(x;y)}. As in our solution, $\hat{\bm{y}}_i$ is obtained by $\bm{Z}$ and $\bm{\Theta}$, here we denote it as $L(\bm{\Theta};\bm{z}_i,\bm{y}_i)$. Both loss functions evaluate and sum the discrepancies between each element of $\bm{y}_i$ (denoted as $\bm{y}_{ij}, j\in[1,c]$) and its corresponding prediction, $\bm{z}_i^T \bm{\theta}_j$, where $\bm{\theta}_j$ is the $j$-th column of $\bm{\Theta}$. Formally,
\begin{equation}\label{eq:loss function component}
\begin{aligned}
    L(\bm{\Theta};\bm{z}_i,\bm{y}_i) &= \sum_j^{c} \ell(\bm{z}_i^T \bm{\theta}_j; \bm{y}_{ij}) ,\\
\end{aligned}
\end{equation}
Under the above setting, $L_{\Lambda}(\bm{\Theta};\bm{Z},\bm{Y})$ is strongly convex w.r.t. $\bm{\Theta}$ (formally proved in Lemma~\ref{lemma:convexity} in Appendix~\ref{app:convexity}), which is essential for our subsequent privacy analysis. 

\subsubsection{Objective Perturbation}\label{sec:GCON pert}
Next, we perturb the above strongly convex loss function to satisfy edge DP by adding two perturbation terms. Formally, the noisy loss function $L_{priv}(\bm{\Theta};\bm{Z},\bm{Y})$ is
\begin{equation}\label{eq:loss_priv}
\begin{aligned}
    &L_{priv}(\bm{\Theta};\bm{Z},\bm{Y})
    = L_{\Lambda}(\bm{\Theta};\bm{Z},\bm{Y})
    +\frac{1}{n_1} \bm{B} \odot \bm{\Theta} + \frac{1}{2}\Lambda' \|\bm{\Theta}\|_F^2 \\
    =& \frac{1}{n_1}\sum^{n_1}_i L(\bm{\Theta};\bm{z}_i,\bm{y}_i) + \frac{1}{2}\Lambda \|\bm{\Theta}\|_F^2 +\frac{1}{n_1} \bm{B} \odot \bm{\Theta} + \frac{1}{2}\Lambda' \|\bm{\Theta}\|_F^2,
\end{aligned}
\end{equation}
where $\Lambda'$ is a parameter computed by Eq. (\ref{eq:Lambda_prime}) (detailed later in Section~\ref{sec:Analysis}), $\odot$ denotes an element-wise product followed by a summation of the resulting elements, and $\bm{B} = (\bm{b}_1,\bm{b}_2,\cdots, \bm{b}_c)$ is a noise matrix. 
The columns $\bm{b}_1, \bm{b}_2, \cdots, \bm{b}_c$ of $\bm{B}$ are independently and uniformly sampled on a $d$-dimensional sphere with a radius following the Erlang distribution whose probability density function (PDF) is
\begin{equation}\label{eq:norm b distribution}
    \gamma(x) = \frac{x^{d-1} e^{-\beta x} \beta^{d}}{(d-1)!}.
\end{equation}
The detailed sampling algorithm is shown in Algorithm~\ref{alg:sampling}.
In the above equation, $\beta$ is a parameter computed by Eq. (\ref{eq:beta}) (detailed later in Section~\ref{sec:Analysis}).

We minimize $L_{priv}(\bm{\Theta};\bm{Z},\bm{Y})$ to obtain the network $\bm{\Theta}_{priv}$, which is released to the analyst/adversary. Formally, 
\begin{equation}\label{eq:Theta_priv}
    \bm{\Theta}_{priv} = \argmin_{\bm{\Theta}} L_{priv}(\bm{\Theta};\bm{Z},\bm{Y}).
\end{equation}

\subsubsection{Inference}\label{sec:GCON inference}
We then utilize the obtained network $\bm{\Theta}{priv}$ to perform inference for the target unlabeled nodes. We consider the setup where the server publishes the trained model and then an untrusted user, represented as an unlabeled node within the target nodes, queries the model and receives the predicted label. The target nodes might belong to either (i) the same graph as the training data or (ii) a different graph with possibly DP requirement. 

We focus on scenario (i) in our evaluation as it is the standard evaluation setup in the literature~\cite{kolluri2022lpgnet,sajadmanesh2023gap,sajadmanesh2023progap}. As the untrusted user knows her/his edge information, we do not add noise when utilizing the information of her/his direct neighbors, while edges other than its own are private. 
Accordingly, we devise an inference approach that aggregates the features of directly connected nodes and generates predictions $\hat{\bm{Y}}$, as shown in Eq. (\ref{eq:private R}), where $\alpha_I$ is a hyperparameter ranging in $[0,1]$ (the restart probability at the inference stage). 
Note that this approach involves only the edges directly connected to the query node, which are assumed to be known to the untrusted user, and no other edges in the graph. Hence, this approach satisfies edge DP as it does not reveal any additional private edge information of non-neighboring nodes to the querying node.
\begin{equation}\label{eq:private R}
\begin{aligned}
\hat{\bm{R}}_{m_i} &=
    \begin{cases}
     \bm{I}, & m_i = 0\\
    (1-\alpha_I)\Tilde{\bm{A}} + \alpha_I \bm{I}, & m_i > 0
    \end{cases} i\in[1,s] \\
    \hat{\bm{Y}} &= (\hat{\bm{R}}_{m_1}\bm{X} \oplus \hat{\bm{R}}_{m_2}\bm{X} \oplus \cdots \oplus \hat{\bm{R}}_{m_s}\bm{X}) \bm{\Theta}_{priv} 
 \end{aligned}     
\end{equation}
In scenario (ii), if the edges in the different graphs are private, we employ the same inference approach as in Eq. (\ref{eq:private R}). If the edges are public, we can directly compute $\bm{Z}$ without a DP mechanism, and derive labels by $\hat{\bm{Y}} = \bm{Z} \bm{\Theta}_{priv}$. We present the corresponding procedure as in Algorithm~\ref{alg:Inference}.

\subsection{The Complete Algorithm}\label{sec:training algorithm}
The complete training algorithm of {\sf GCON} is shown in Algorithm~\ref{alg:Train}. In the input, $\epsilon$ and $\delta$ represent the privacy constraints; $d_1$, $\alpha$, $\{m_i\}_{i=1}^s$, and $\Lambda$ are tunable hyperparameters. In Line~\ref{line:encode xy}, we preprocess the feature matrix $\bm{X}$ and label matrix $\bm{Y}$ by the feature encoder presented in Section~\ref{sec:encoder module}. In Line~\ref{line:normalize X}, the $\mathcal{L}_2$-norm of each feature is normalized to 1.
In Lines~\ref{line:propagation matrices}-\ref{line:aggregate matrices}, we construct the propagation matrices $\{\bm{R}_{m_i}\}_{i=1}^s$ and $\{\bm{Z}_{m_i}\}_{i=1}^s$. In Line~\ref{line:concatenate Z}, we construct the augmented feature matrix $\bm{Z}$.
In Line~\ref{line:compute parameters}, we compute $\Lambda'$ and $\beta$, which sets up the stage for the objective perturbation (the detailed computations are explained in Section~\ref{sec:Analysis}).
In Line~\ref{line:sample B}, we sample the perturbation noise by running Algorithm~\ref{alg:sampling}.
In Line~\ref{line:loss_priv}, we construct the noisy objective function, based on the sampled noise. In Line~\ref{line:optimize loss_priv}, we minimize the objective function and obtain the optimal network parameters. 

\begin{algorithm}[ht]\small
\caption{The training algorithm for {\sf GCON}}
\label{alg:Train}
\KwIn{Training dataset $D=\left< V,E,\bm{X},\bm{Y} \right>$, new feature dimension $d_1$, loss function $\ell$, restart probability $\alpha$, propagation steps $m_1, m_2, \cdots, m_s$, regularization coefficient $\Lambda$, privacy parameters $\epsilon, \delta$, and budget allocator $\omega$. }
\KwOut{Approximate minimizer $\bm{\Theta}_{priv}$.}
$\bm{X} = \text{FeatureEncoder}(\bm{X}, \bm{Y}, d_1)$\;\label{line:encode xy}
Normalize the $\mathcal{L}_2$-norm of each row in $\bm{X}$ to 1\;\label{line:normalize X}
Initialize model parameters $\bm{\Theta}$\;
$\Tilde{\bm{A}}=\mathbf{D}^{-1} (\bm{A} + \bm{I})$ \;\label{line:normalized adj}
Compute $\bm{R}_{m_1}, \bm{R}_{m_2}, \cdots, \bm{R}_{m_s}$ by Eq. (\ref{eq:R_m})\; \label{line:propagation matrices}
Compute $\bm{Z}_{m_1}, \bm{Z}_{m_2}, \cdots, \bm{Z}_{m_s}$ by Eq. (\ref{eq:Z_m})\; \label{line:aggregate matrices}
$\bm{Z}=\frac{1}{s}(\bm{Z}_{m_1}, \bm{Z}_{m_2}, \cdots, \bm{Z}_{m_s})$\;\label{line:concatenate Z}
Compute $\Lambda'$, and $\beta$ by Eq. (\ref{eq:Lambda_prime}) and (\ref{eq:beta}) \;\label{line:compute parameters}
Construct $\bm{B}$, whose columns $\bm{b}_1,\bm{b}_2,\cdots, \bm{b}_c$ are independently drawing by Algorithm~\ref{alg:sampling} with inputs $d,\beta$ \;\label{line:sample B}
$L_{priv}(\bm{\Theta};\bm{Z},\bm{Y})= \frac{1}{n_1}\sum^{n_1}_i L(\bm{\Theta};\bm{z}_i,\bm{y}_i) + \frac{1}{2}\Lambda \|\bm{\Theta}\|_F^2 +\frac{1}{n_1} \bm{B} \odot \bm{\Theta} + \frac{1}{2}\Lambda' \|\bm{\Theta}\|_F^2$;
\label{line:loss_priv}

$\bm{\Theta}_{priv} = \argmin L_{priv}$\;\label{line:optimize loss_priv}
\Return $\bm{\Theta}_{priv}$\;
\end{algorithm}
\section{Analysis}\label{sec:Analysis} 
This section presents the privacy analysis of Algorithm~\ref{alg:Train}. We formalize the guarantees as the following theorem, whose detailed proof is in Appendix~\ref{app:proof of Theorem}.
\begin{theorem}\label{theorem:train DP}
Algorithm~\ref{alg:Train} satisfies ($\epsilon,\delta$)-DP, with $\Lambda'$ and $\beta$ set to be Eq.~(\ref{eq:Lambda_prime}) - (\ref{eq:beta}), 
\begin{align}
    &\Lambda' = \begin{cases}
        0, & \textrm{if } \epsilon_{\Lambda} \leq (1-\omega)\epsilon \\
        \frac{c(2c_2 + c_3 c_{\theta}) \Psi(\bm{Z})}{n_1(1-\omega)\epsilon} - \Lambda, & \textrm{otherwise}. 
    \end{cases}\label{eq:Lambda_prime}\\
    &\beta = \frac{ \max(\epsilon - \epsilon_{\Lambda}, \omega \epsilon) }{c (c_1 + c_2 c_{\theta}) \Psi(\bm{Z})} .\label{eq:beta}
\end{align}
where $c, d, \Lambda, \ell(x;y), n_1, \omega, \alpha, \{m_i\}_{i=1}^s$ are inputs and other parameters $c_1,c_2,c_3, \Psi(\bm{Z}), c_{sf}, c_{\theta}, \epsilon_{\Lambda}$ are sequentially computed by Eq.~(\ref{eq:supremum of l(x;y)}) - (\ref{eq:epsilon_Lambda}).
\begin{align}
    &c_1 = \sup(|\ell'|), c_2 = \sup(|\ell''|), c_3 = \sup(|\ell'''|). \label{eq:supremum of l(x;y)}\\
    &\Psi(\bm{Z}) = \frac{1}{s} \sum_i^s \frac{2(1-\alpha)}{\alpha}[1-(1-\alpha)^{m_i}].\label{eq:Psi_Z in Theorem}\\
    &c_{sf} = \min\{u \textrm{ s.t. } u>0 \textrm{ and}  \int_0^u \frac{x^{d-1} e^{-x}}{(d-1)!} dx \geq 1 - \frac{\delta}{c}\}. \label{eq:find minimal c_sf} \\
    &\Lambda = \max\left(\Lambda, \frac{c c_2 \Psi(\bm{Z}) c_{sf}}{n_1 \omega\epsilon}+\xi\right),  \xi \in \mathbb{R}^{+} .\label{eq:Lambda constraint}\\
    &c_{\theta} = \frac {n_1 \omega\epsilon c_1 + c c_1 \Psi(\bm{Z}) c_{sf}} {n_1 \omega\epsilon \Lambda - c  c_2 \Psi(\bm{Z}) c_{sf}} .\label{eq:c_theta}\\
    &\epsilon_{\Lambda} = c d \log\left(1 + \frac{(2c_2 + c_3 c_{\theta}) \Psi(\bm{Z})}{d n_1\Lambda}\right) .\label{eq:epsilon_Lambda}
\end{align}

\end{theorem}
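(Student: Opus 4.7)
The plan is to follow the DPERM objective-perturbation template: for any target minimizer $\bm{\Theta}^*$, the first-order optimality condition of the strongly convex $L_{priv}$ sets up a bijection between $\bm{\Theta}^*$ and the noise matrix $\bm{B}$, so a change-of-variables turns the density ratio $p_D(\bm{\Theta}^*)/p_{D'}(\bm{\Theta}^*)$ into a product of (i) an Erlang noise-density ratio and (ii) a Jacobian-determinant ratio. The total budget $\epsilon$ is split into $\omega\epsilon$ for part (i) and $(1-\omega)\epsilon$ for part (ii); if the unaided Jacobian contribution already lies below $(1-\omega)\epsilon$ (the $\Lambda'=0$ branch of Eq.~\eqref{eq:Lambda_prime}) no extra regularization is needed, otherwise $\Lambda'$ is increased until the Jacobian term becomes exactly $(1-\omega)\epsilon$. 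Because the per-node loss $L(\bm{\Theta};\bm{z}_i,\bm{y}_i)=\sum_j \ell(\bm{z}_i^T\bm{\theta}_j;\bm{y}_{ij})$ decouples across the $c$ columns of $\bm{\Theta}$, the analysis proceeds column-by-column and is multiplied by $c$ at the end, which is why every computed constant carries a leading $c$.

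\textbf{Step 1: sensitivity of $\bm{Z}$.} Since $\bm{Z}=\tfrac{1}{s}\bigoplus_i \bm{R}_{m_i}\bm{X}$ and each row of $\bm{X}$ is unit-normalized, the sensitivity of $\bm{Z}$ under a single edge flip reduces to analyzing $\bm{R}_{m_i}-\bm{R}'_{m_i}$. Invoking the PPR/APPR sensitivity lemma referenced in the paper delivers the closed form $\Psi(\bm{Z})=\tfrac{1}{s}\sum_i \tfrac{2(1-\alpha)}{\alpha}[1-(1-\alpha)^{m_i}]$ of Eq.~\eqref{eq:Psi_Z in Theorem}. This is where the restart probability $\alpha$ pays off: the geometric decay $(1-\alpha)^i$ caps the single-edge influence independently of the graph degree, replacing the naive $O(k^{m-1})$ bound with $O(1/\alpha)$.

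\textbf{Step 2: noise density ratio and the $c_\theta$ bound.} Fixing a column $j$, optimality gives $\bm{b}_j=-\sum_i \ell'(\bm{z}_i^T\bm{\theta}^*_j;\bm{y}_{ij})\bm{z}_i - n_1(\Lambda+\Lambda')\bm{\theta}^*_j$. A mean-value expansion of the difference $\bm{b}_j-\bm{b}'_j$ into a $\ell'$-difference piece (controlled by $|\ell''|\le c_2$ and $\|\bm{\theta}^*_j\|\le c_\theta$) and a $\bm{z}_i-\bm{z}_i'$ piece (controlled by $|\ell'|\le c_1$) then combined with Step~1 gives
\begin{equation*}
\|\bm{b}_j - \bm{b}'_j\|_2 \;\le\; (c_1 + c_2 c_\theta)\,\Psi(\bm{Z}).
\end{equation*}
The a priori bound $\|\bm{\theta}^*_j\|\le c_\theta$ is obtained by rearranging the same optimality equation and exploiting $\Lambda$-strong convexity of $L_\Lambda$ on the good event $\|\bm{b}_j\|\le c_{sf}$, giving Eq.~\eqref{eq:c_theta} under the regularization floor of Eq.~\eqref{eq:Lambda constraint}. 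Since $\bm{b}_j$ has an isotropic radial law with Erlang rate $\beta$, its log density ratio equals $\beta(\|\bm{b}'_j\|-\|\bm{b}_j\|)\le \beta\|\bm{b}_j-\bm{b}'_j\|$, and summing over $c$ columns equals $\omega\epsilon$ by the choice of $\beta$ in Eq.~\eqref{eq:beta}. The complementary event $\|\bm{b}_j\|>c_{sf}$ has Erlang mass at most $\delta/c$ per column by Eq.~\eqref{eq:find minimal c_sf}, collectively yielding the additive $\delta$ slack.

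\textbf{Step 3 (main obstacle): Jacobian determinant ratio.} The Jacobian of the column-wise map $\bm{\theta}^*_j\mapsto \bm{b}_j$ is $J(\bm{Z})=-n_1 H_j(\bm{Z})-n_1(\Lambda+\Lambda')\bm{I}$, where $H_j(\bm{Z})=\tfrac{1}{n_1}\sum_i \ell''(\bm{z}_i^T\bm{\theta}^*_j;\cdot)\bm{z}_i\bm{z}_i^T$, and the log Jacobian ratio equals $\log\det(\bm{I}+J(\bm{Z})^{-1}(J(\bm{Z}')-J(\bm{Z})))$. The hard part is that, unlike in DPERM where the Hessian difference is rank-one, here $H_j(\bm{Z})-H_j(\bm{Z}')$ can have essentially full rank because many rows of $\bm{Z}$ move simultaneously. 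I expand each summand $\ell''(\bm{z}_i^T\bm{\theta}^*)\bm{z}_i\bm{z}_i^T-\ell''(\bm{z}_i'^T\bm{\theta}^*)\bm{z}_i'\bm{z}_i'^T$ into a $\ell''$-difference piece (bounded through $|\ell'''|\le c_3$ and $\|\bm{\theta}^*_j\|\le c_\theta$) and a $\bm{z}_i\bm{z}_i^T$-difference piece (bounded through $|\ell''|\le c_2$), then apply Step~1 to obtain $\|H_j(\bm{Z})-H_j(\bm{Z}')\|_{op}\le (2c_2+c_3 c_\theta)\Psi(\bm{Z})/n_1$. Because $J(\bm{Z})^{-1}(J(\bm{Z}')-J(\bm{Z}))$ is generally non-symmetric, I replace the eigenvalue argument of DPERM's Theorem~9 with a singular-value argument: combining $\sigma_{\min}(-J(\bm{Z}))\ge n_1\Lambda$ with the Hessian-difference bound caps every singular value of $J^{-1}(J'-J)$, and AM-GM on $\log\det$ yields $d\log(1+(2c_2+c_3 c_\theta)\Psi(\bm{Z})/(dn_1\Lambda))$ per column. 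Summed over $c$ columns this is exactly $\epsilon_\Lambda$ of Eq.~\eqref{eq:epsilon_Lambda}; when $\epsilon_\Lambda>(1-\omega)\epsilon$, the boosted $\Lambda'$ in Eq.~\eqref{eq:Lambda_prime} inflates $\sigma_{\min}(-J)$ until the Jacobian bound meets $(1-\omega)\epsilon$ precisely. Adding the density-ratio contribution from Step~2 on the high-probability event (privacy loss $\le \epsilon$) and the failure mass on its complement (probability $\le \delta$) closes $(\epsilon,\delta)$-DP.
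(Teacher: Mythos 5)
Your overall architecture matches the paper's proof exactly: the optimality-condition bijection between $\bm{\Theta}_{priv}$ and $\bm{B}$, the change of variables splitting the density ratio into a Jacobian-determinant factor and an Erlang noise-density factor, the column-wise decoupling across $j\in[1,c]$, the budget split via $\omega$, and the good/bad event split at $\|\bm{\theta}_j\|_2\le c_\theta$ (with the bad event absorbed into $\delta$ through $c_{sf}$). Steps 1 and 2 are essentially the paper's Lemmas on $\Psi(\bm{Z})$ and on $\frac{\mu(\bm{B}\mid D)}{\mu(\bm{B}'\mid D')}$, modulo a small slip (the radius threshold for the bad event is $c_{sf}/\beta$, not $c_{sf}$, since Eq.~(\ref{eq:find minimal c_sf}) is stated for the unit-rate Erlang).

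There is, however, a genuine gap in Step 3. You bound the Hessian difference in \emph{operator} norm, $\|H_j(\bm{Z})-H_j(\bm{Z}')\|_{op}\le (2c_2+c_3c_\theta)\Psi(\bm{Z})/n_1$, and then assert that ``capping every singular value'' of $\bm{J}^{-1}(\bm{J}'-\bm{J})$ plus AM--GM yields $d\log\bigl(1+\tfrac{(2c_2+c_3c_\theta)\Psi(\bm{Z})}{d\,n_1\Lambda}\bigr)$. That does not follow: if each of the $d$ singular values is only known to be at most $X/(n_1\Lambda)$ with $X=(2c_2+c_3c_\theta)\Psi(\bm{Z})$, then $\prod_i(1+\sigma_i)\le(1+X/(n_1\Lambda))^d$, i.e.\ $d\log(1+X/(n_1\Lambda))$ \emph{without} the factor $d$ in the denominator --- a bound weaker by essentially a factor of $d$ inside the log, which does not match the theorem's $\epsilon_\Lambda$ and would force a much larger $\Lambda'$. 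The $1/d$ comes from AM--GM applied to a bound on the \emph{sum} of singular values (nuclear norm), $\sum_{i=1}^d\sigma_i(\bm{E}_j)\le X$, via
\begin{equation*}
\prod_{i=1}^d\bigl(1+\sigma_i(\bm{B}_j^{-1}\bm{E}_j)\bigr)\le\Bigl(1+\tfrac{1}{d}\textstyle\sum_i\sigma_i(\bm{B}_j^{-1}\bm{E}_j)\Bigr)^{d}\le\Bigl(1+\tfrac{\sigma_1(\bm{B}_j^{-1})}{d}\textstyle\sum_i\sigma_i(\bm{E}_j)\Bigr)^{d}.
\end{equation*}
The paper obtains the nuclear-norm bound by writing $\bm{E}_j$ as a sum of $O(n_1)$ rank-one matrices (one $\ell''$-difference piece and one $\bm{z}_i\bm{z}_i^T$-difference piece per node, each with single singular value at most $c_3c_\theta\|\bm{z}_i'-\bm{z}_i\|_2$ or $c_2\|\bm{z}_i'-\bm{z}_i\|_2$) and invoking the subadditivity of partial sums of singular values (Corollary~3.4.3 of \cite{horn1991topics}). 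Your decomposition already produces exactly these rank-one pieces, so the fix is to sum their individual singular values rather than pass to the operator norm --- but as written, the step claiming the $1/d$ inside the logarithm fails.
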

\noindent\textbf{Remark.} Theorem~\ref{theorem:train DP} ensures that the output model parameters $\bm{\Theta}_{priv}$ satisfy $(\epsilon, \delta)$-DP. Therefore, the privacy guarantee of GCON is independent of the optimization algorithm, such as SGD~\cite{xu2024stochastic} or Adam~\cite{kingma2014adam}, and the optimization steps. This offers GCON an advantage over traditional algorithms like DP-SGD because GCON does not need to compose the privacy cost of each optimization step, whereas traditional algorithms require an additional privacy budget for each optimization step.

We summarize the descriptions of notations used in Theorem~\ref{theorem:train DP} in Table~\ref{tab:Descriptions of notations}. 
$\Lambda'$ in Eq.~(\ref{eq:Lambda_prime}) is the coefficient of the quadratic perturbation term $\Lambda'\|\bm{\Theta}\|_F^2$. $\beta$ in Eq.~(\ref{eq:beta}) controls the distribution of $\bm{B}$ by Eq.~(\ref{eq:norm b distribution}), where $\bm{B}$ is the coefficient of the linear perturbation term $\bm{B}\odot\bm{\Theta}$.
In the input, $l(x;y)$ is the basic loss function introduced in Section~\ref{sec:GCON loss}, e.g., Eq.~(\ref{eq:mlsm loss}) \& (\ref{eq:ph loss}). 
$\omega$ is a hyperparameter dividing the privacy budget $\epsilon$ for computing the coefficients $\bm{B}$ and $\Lambda'$ of the two perturbation terms $\bm{B}\odot\bm{\Theta}$ and $\Lambda'\|\bm{\Theta}\|_F^2$ in Eq.~(\ref{eq:loss_priv}). The detailed usage of $\omega$ is presented in Appendix~\ref{app:proof of Theorem}. 

The parameters calculated in Eq.~(\ref{eq:supremum of l(x;y)}) - (\ref{eq:epsilon_Lambda}) are used to calculate the two core parameters $\beta$ and $\Lambda'$ in Eq.~(\ref{eq:Lambda_prime}) \& (\ref{eq:beta}). In Eq.~(\ref{eq:supremum of l(x;y)}), $c_1$, $c_2$, and $c_3$ are the supremums of $|l'|$, $|l''|$, and $|l'''|$ respectively, where $l'$, $l''$, and $l'''$ denote the first, second, and third order derivatives of $l(x;y)$ w.r.t. $x$. In Eq.~(\ref{eq:Psi_Z in Theorem}), $\Psi(\bm{Z})$ is the sensitivity of $\bm{Z}$ and we elaborate the analysis of it in Section~\ref{sec:Sensitivity Analysis}. In Eq.~(\ref{eq:Lambda constraint}), we set a lower bound for $\Lambda$ to facilitate subsequent analysis and calculations. In Eq.~(\ref{eq:find minimal c_sf}), (\ref{eq:c_theta}), \& (\ref{eq:epsilon_Lambda}), $c_{sf}$, $c_{\theta}$, and $\epsilon_{\Lambda}$ are intermediate calculation results for calculating $\beta$ and $\Lambda'$. 

\begin{table}[h]
\caption{Descriptions of notations}
\centering
\setlength\tabcolsep{1.5pt}
\begin{tabular}{c|c}
\toprule
\text{Notation} & \text{Description} \\
\midrule
$\Lambda'$ & \textbf{Coefficient of perturbation $\|\bm{\Theta}\|_F^2$ in Eq.~(\ref{eq:loss_priv})}  \\
$\beta$ & \textbf{Parameter for sampling $\bm{B}$ in Eq.~(\ref{eq:loss_priv}) \& (\ref{eq:norm b distribution})}  \\
$c$ & \text{Number of classes of nodes}  \\
$d$ & \text{Dimension of node features}  \\
$\Lambda$ & \text{Coefficient of regularization $\|\bm{\Theta}\|$ in Eq.~(\ref{eq:L_Lambda})}  \\
$l(x;y)$ & \text{Basic loss function, e.g., Eq.~(\ref{eq:mlsm loss}) \& (\ref{eq:ph loss})}  \\
$\omega$ & \text{Budget divider for the two perturbation terms}  \\
$\alpha$ & \text{Restart probability in Eq.~(\ref{eq:PPR matrix}) \& (\ref{eq:APPR matrix})}  \\
$\{m_i\}_{i=1}^s$ & \text{A series of propagation steps in Eq.~(\ref{eq:concatenate Z})}  \\
$c_1$, $c_2$, $c_3$ & \text{Supremums of $|l'|$,$|l''|$, $|l'''|$}  \\
$\Psi(\bm{Z})$ & \text{Sensitivity of $\bm{Z}$}  \\
$c_{sf}$, $c_{\theta}$, $\epsilon_{\Lambda}$ & \text{Intermediate calculation results for $\Lambda'$ and $\beta$}  \\
\bottomrule
\end{tabular}
\label{tab:Descriptions of notations}
\end{table}

\subsection{Analysis Overview}\label{sec:Analysis Overview}
Our proof has two primary stages: (i) quantifying the sensitivities of the PPR/APPR aggregate matrix $\bm{Z}_m=\bm{R}_m\bm{X}$ (Eq. (\ref{eq:Z_m})) and the concatenate feature matrix $\bm{Z}=\frac{1}{s}(\bm{Z}_{m_1} \oplus \cdots \oplus \bm{Z}_{m_s})$ (Eq. (\ref{eq:concatenate Z})), and (ii) based on the findings from (i), quantifying the correlation between the noise $\bm{B}$ and the optimally derived network parameters $\bm{\Theta}_{priv}$.

For stage (i), our analysis starts with the sensitivity of the normalized adjacency matrix $\Tilde{\bm{A}}$, and progressively extends to $\Tilde{\bm{A}}^m,\bm{R}_m$, $\bm{Z}_m$, and $\bm{Z}$. As $\bm{A}$ is usually enforced to be symmetric in GCNs, adding or removing one edge alters two elements in $\Tilde{\bm{A}}$ ($\bm{A}$ is usually enforced symmetric in practice). Leveraging several pivotal properties of $\Tilde{\bm{A}}^m$ presented in Lemma~\ref{lemma:AR properties} (whose proof is presented in Appendix~\ref{app:AR properties}), we employ mathematical induction to bound the sensitivities of $\Tilde{\bm{A}},\Tilde{\bm{A}}^2,\cdots,\Tilde{\bm{A}}^m$, and their linear combinations $\bm{R}_m$ (we analyze the special case $\bm{R}_{\infty} $ separately). Given that the feature matrix $\bm{X}$ remains constant across edge-level neighboring graphs, the sensitivity of $\bm{Z}_m$ is inferred through standard matrix multiplications. Lastly, by summing the sensitivities of $\{\bm{Z}_{m_i}\}_{i=1}^s$, we obtain the sensitivity of $\bm{Z}$, which is evaluated by $\max(\sum_i^n \|\bm{z}_i - \bm{z}_i'\|_2)$, as defined in Definition~\ref{def:psi_Z}.
\begin{lemma}\label{lemma:AR properties}
Let the elements in $\Tilde{\bm{A}}$ satisfy the following constraints
\begin{displaymath}
    \Tilde{\bm{A}}_{ij}=\left\{
    \begin{array}{lcl}
    0 & & i\neq j \,\&\, \bm{A}_{ij}=0 \\
    \min(\frac{1}{k_i + 1}, p) & & i\neq j\,\&\,\bm{A}_{ij}=1\\
    1 - \sum_{u\neq i} \Tilde{\bm{A}}_{iu}& & i=j\\
    \end{array} \right. 
\end{displaymath}
where $k_i$ is the degree of node $i$ and $p\leq \frac{1}{2}$ is a bound on the off-diagonal elements in $\Tilde{\bm{A}}$. Especially, when $p=\frac{1}{2}$, it is equivalent to the original normalization (in Section~\ref{sec:GCON Propagation}) for $\Tilde{\bm{A}}$ without any additional bound $p$.
Then for any integer $m\geq 1$, we have
\begin{itemize}
    \item Every entry of $\Tilde{\bm{A}}^m$, $\bm{R}_m$, or $\bm{R}_{\infty}$ is non-negative.
    \item The sum of each row of $\Tilde{\bm{A}}^m$, $\bm{R}_m$, or $\bm{R}_{\infty}$ is 1.
    \item The sum of column $i$ of $\Tilde{\bm{A}}^m$, $\bm{R}_m$, or $\bm{R}_{\infty}$ is $\leq \max((k_i+1)p, 1)$.
\end{itemize}
\end{lemma}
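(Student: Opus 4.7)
The plan is to establish each of the three properties first for $\Tilde{\bm{A}}^m$ by induction on $m$, then lift them to $\bm{R}_m$ and $\bm{R}_\infty$ by exploiting the fact that these matrices are convex combinations of the powers $\{\Tilde{\bm{A}}^i\}$. Non-negativity and the row-sum property will be routine; the column-sum bound will be the main technical step and will require isolating a clean arithmetic invariant before induction can close.

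For the base case $m=1$, I would check the three claims directly from the piecewise definition of $\Tilde{\bm{A}}$. Non-negativity of the off-diagonal entries is immediate, and for the diagonal I would observe that $\sum_{u\neq i}\Tilde{\bm{A}}_{iu}=k_i\min(\tfrac{1}{k_i+1},p)\le k_i\cdot\tfrac{1}{k_i+1}<1$, so $\Tilde{\bm{A}}_{ii}>0$. The row-sum property holds by construction. For the column-sum bound, I would split on whether $p\ge\tfrac{1}{k_j+1}$: in the first subcase the column sum equals $\tfrac{1}{k_j+1}+\sum_{k\sim j}\min(\tfrac{1}{k_k+1},p)\le\tfrac{1}{k_j+1}+k_jp\le(k_j+1)p$, and in the second subcase it equals $(1-k_jp)+\sum_{k\sim j}\min(\tfrac{1}{k_k+1},p)\le 1$.

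The inductive step from $\Tilde{\bm{A}}^m$ to $\Tilde{\bm{A}}^{m+1}$ carries non-negativity and row-stochasticity trivially (non-negative product; $\bm{M}\bm{e}=\bm{e}$ is preserved under multiplication). For columns I would write $\bm{c}^{(m+1)}=\Tilde{\bm{A}}^{\,T}\bm{c}^{(m)}$, so that $c_j^{(m+1)}=c_j^{(m)}\Tilde{\bm{A}}_{jj}+\sum_{k\sim j}c_k^{(m)}\Tilde{\bm{A}}_{kj}$. Setting $u_k=\max((k_k+1)p,1)$ and using the inductive hypothesis $c_k^{(m)}\le u_k$, the key observation I intend to establish is the identity
\[
u_k\cdot\min\!\Bigl(\tfrac{1}{k_k+1},p\Bigr)=p\qquad\text{for every }k,
\]
which holds by checking the two subcases of $\min$ separately. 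With this identity the bound $c_j^{(m+1)}\le u_j\Tilde{\bm{A}}_{jj}+\sum_{k\sim j}p$ collapses to exactly $u_j$ in both subcases for $j$: when $p\ge\tfrac{1}{k_j+1}$ one gets $(k_j+1)p\cdot\tfrac{1}{k_j+1}+k_jp=(k_j+1)p$, and otherwise one gets $(1-k_jp)+k_jp=1$. This clean telescoping is where I expect the real difficulty to lie, so pinning down the invariant $u_k\cdot\min(\tfrac{1}{k_k+1},p)=p$ before starting the induction is essential.

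Finally, I would lift all three properties from $\Tilde{\bm{A}}^i$ to $\bm{R}_m=\alpha\sum_{i=0}^{m-1}(1-\alpha)^i\Tilde{\bm{A}}^i+(1-\alpha)^m\Tilde{\bm{A}}^m$. Non-negativity is immediate since the coefficients are non-negative. For row sums, the coefficients $\alpha,\alpha(1-\alpha),\dots,\alpha(1-\alpha)^{m-1},(1-\alpha)^m$ sum to $1$ (a telescoping geometric-series computation), so row-stochasticity is preserved. The same coefficient identity, combined with $c_j^{(0)}=1\le u_j$ and the inductive bound $c_j^{(i)}\le u_j$, yields the column-sum bound for $\bm{R}_m$. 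For $\bm{R}_\infty$ I would replace the finite sum by the Neumann series $\alpha\sum_{i=0}^{\infty}(1-\alpha)^i\Tilde{\bm{A}}^i$, whose convergence follows from $\|\Tilde{\bm{A}}\|_\infty=1$ and $(1-\alpha)<1$, and then reuse the same bound with $\alpha\sum_{i\ge 0}(1-\alpha)^i=1$.
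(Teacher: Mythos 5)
Your proposal is correct and follows essentially the same route as the paper's proof: non-negativity and row-stochasticity are handled directly, the column bound is proved by induction on $m$ using exactly the invariant $\max((k_k+1)p,1)\cdot\min\bigl(\tfrac{1}{k_k+1},p\bigr)=p$ (which the paper writes as $\min\bigl(\tfrac{1}{(k_j+1)p},1\bigr)\max((k_j+1)p,1)=1$), with the same two-subcase closing argument, and the results are lifted to $\bm{R}_m$ and $\bm{R}_\infty$ via the convex combination of powers. No gaps; your explicit naming of the invariant and the Neumann-series justification for $\bm{R}_\infty$ are minor stylistic refinements of the same argument.
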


\begin{definition}[Sensitivity metric]\label{def:psi_Z}
    Given a graph dataset $D$, let $D'$ be any of its edge-level neighboring dataset. Let $\bm{Z}$ and $\bm{Z}'$ be two matrices computed in the same way (e.g., by Eq. (\ref{eq:Z_m}) or (\ref{eq:concatenate Z})) on $D$ and $D'$, respectively. Let $\bm{z}_i$ and $\bm{z}'_i$ denote the $i$-th rows of $\bm{Z}$ and $\bm{Z}'$, respectively. We define a metric $\psi(\bm{Z})$ to evaluate the difference between $\bm{Z}$ and $\bm{Z}'$ as follows.
    \begin{equation*}
    \psi(\bm{Z}) \triangleq \sum_{i=1}^{n}\|\bm{z}_i' - \bm{z}_i\|_2.
    \end{equation*}
    Then the sensitivity of $\bm{Z}$ is
    \begin{equation*}
    \Psi(\bm{Z}) = \max_{D,D'} \psi(\bm{Z}).
    \end{equation*}
\end{definition}

For stage (ii), it has been demonstrated in DPERM~\cite{chaudhuri2011DPERM} that, given graph datasets $D$ and $D'$, the ratio of the densities of two optimal network parameters $\bm{\Theta}_{priv}$ and $\bm{\Theta}_{priv}'$ is proportional to the ratio of densities $\frac{\mu(\bm{B} | D)}{\mu(\bm{B}' | D')}$ of two corresponding noises $\bm{B}$ and $\bm{B}'$, and inversely proportional to the ratio of the determinants of the respective Jacobian matrices $\bm{J}(\bm{\Theta}_{priv}\rightarrow \bm{B}|D)$ and $ \bm{J}(\bm{\Theta}_{priv}'\rightarrow \bm{B}'|D')$. Our sampling method, detailed in Algorithm~\ref{alg:sampling}, ensures that each column $\bm{b}_j$ of $\bm{B}$ is independent of the other columns. Similarly, the derivative of each column $\bm{\theta}_j$ of $\bm{\Theta}_{priv}$ is independent of those of the other columns in the context of the perturbed loss function $L_{priv}$ as specified in Section~\ref{sec:GCON pert}. Hence, we focus on examining the ratios of densities of noises $\frac{\mu(\bm{b}_j | D)}{\mu(\bm{b}_j' | D')}$ and the ratios of determinants of the Jacobian matrices $|\frac{\det(\bm{J}(\bm{\theta}_j \mapsto \bm{b}_j' | D')}{\det(\bm{J}(\bm{\theta}_j \mapsto \bm{b}_j | D)}|$ (denoted as $|\frac{\det(\bm{J}')}{\det(\bm{J})}|$), for all $j\in [1,c]$.

The ratio $\frac{\mu(\bm{b}_j | D)}{\mu(\bm{b}_j' | D')}$ can be bounded by a meticulously chosen parameter $\beta$. For the latter, the ratio of determinants of two matrices satisfies $|\frac{\det(\bm{J}')}{\det(\bm{J})}|=|\det(\bm{J}' \bm{J}^{-1})|$. Given that the determinant of a matrix is equivalent to the product of its singular values, our analysis shifts focus to examining the singular values of $\bm{J}' \bm{J}^{-1}$. By applying the Courant–Fischer min-max theorem~\cite{courant2008methods}, we bound each singular value of $\bm{J}' \bm{J}^{-1}$ through the product of the respective singular values of $\bm{J}'$ and $\bm{J}^{-1}$. The singular values of $\bm{J}^{-1}$ are the reciprocals of the singular values of $\bm{J}$. The mapping $\bm{\Theta}_{priv}\rightarrow \bm{B}|D$ and $ \bm{\Theta}_{priv}'\rightarrow \bm{B}'|D'$ are constituted by the loss function, which aggregates the loss for features $\bm{z}_i$ and $\bm{z}_i'$ respectively. 
Corollary 3.4.3 in \cite{horn1991topics} indicates that the sum of singular values of a matrix resulting from addition is less than the sum of the singular values of each individual matrix. Consequently, we decompose the Jacobian matrix into a summation of $n_1$ matrices, each associated with $\bm{z}_i$ or $\bm{z}_i'$, and analyze their respective singular values utilizing Corollary 3.4.3 in \cite{horn1991topics}. The discrepancy in singular values of matrices associated with $\bm{z}_i$ and $\bm{z}_i'$ is quantifiable through $\|\bm{z}_i - \bm{z}_i'\|_2$. Ultimately, we link the sum of these singular values to the previously established sensitivity of $\bm{Z}$, enabling us to derive $|\frac{\det(\bm{J}')}{\det(\bm{J})}|$. 

In the following, we provide a detailed analysis of the sensitivities of $\bm{Z}_m$ and $\bm{Z}$ in Section~\ref{sec:Sensitivity Analysis}, which forms a foundational component of the theorem's proof. 

\subsection{Sensitivities of \texorpdfstring{\(\bm{Z}_m\)}{} and \texorpdfstring{\(\bm{Z}\)}{}}\label{sec:Sensitivity Analysis}
We define the sensitivities of $\bm{Z}_m$ and $\bm{Z}$ by Definition~\ref{def:psi_Z}, denoted as $\Psi(\bm{Z}_m)$ and $\Psi(\bm{Z})$, respectively. Since $\bm{Z}$ is simply a concatenation of $\bm{Z}_m$, the difficulty and focus of sensitivity analysis lie on $\bm{Z}_m$. $\Psi(\bm{Z}_m)$ and $\Psi(\bm{Z})$ are established in Lemma~\ref{lemma:Psi_Z of PPR and APPR}. Lemma~\ref{lemma:Psi_Z of PPR and APPR} elucidates the correlations of the sensitivity of $\bm{Z}_m$ with two key factors: the propagation step $m$ and the restart probability $\alpha$. The rationale behind these correlations is as follows. Firstly, a larger propagation step $m$, results in each node disseminating its features to a more expansive neighborhood, thereby amplifying the sensitivity.
Secondly, a smaller value of $\alpha$ leads to each step of the propagation matrix incorporating more information from the edges (indicated by $(1-\alpha)\cdot \Tilde{\bm{A}}$). This elevates the influence of individual edges within the graph, thereby increasing the sensitivity.

These correlations provide valuable insights for managing sensitivity within the PPR/APPR schemes, which are advantageous in practice. These insights reveal a trade-off: balancing the amount of sensitivity, which directly influences the amount of additional noise needed for privacy protection, against the efficacy of the propagation matrix. For instance, in practical scenarios constrained by limited privacy budgets, opting for a smaller propagation step $m$ or a larger restart probability $\alpha$ can reduce the sensitivity. This reduction, in turn, diminishes the additional noise, thereby offering a pathway to enhance the model's performance.

\begin{lemma}\label{lemma:Psi_Z of PPR and APPR}
For $\bm{Z}_m$ computed by Eq. (\ref{eq:Z_m}), we have
\begin{equation}\label{eq:Psi Z_m}
    \Psi(\bm{Z}_m) = \frac{2(1-\alpha)}{\alpha}[1-(1-\alpha)^m],
\end{equation}
Specifically, 
\begin{equation*}
    \Psi(\bm{Z}_\infty) = \lim_{m\to\infty} \frac{2(1-\alpha)}{\alpha}[1-(1-\alpha)^m] = \frac{2(1-\alpha)}{\alpha} .
\end{equation*}
For $\bm{Z}$ computed by Eq. (\ref{eq:concatenate Z}), we have
\begin{equation}\label{eq:Psi Z}
    \Psi(\bm{Z}) = \frac{1}{s}\sum_{i}^{s} \Psi(\bm{Z}_{m_i}).
\end{equation}
\end{lemma}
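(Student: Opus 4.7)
The plan is to upper-bound $\psi(\bm{Z}_m)$ by the entrywise $\ell_1$ norm $\|\bm{R}_m-\bm{R}_m'\|_{1,1}$, and then control the latter via an induction showing $\|\tilde{\bm{A}}^t-\tilde{\bm{A}}'^t\|_{1,1}\le 2t$ for every $t\ge 1$. The reduction is routine: since $\bm{Z}_m=\bm{R}_m\bm{X}$ and every row of $\bm{X}$ has unit $\mathcal{L}_2$-norm, the triangle inequality gives $\|(\bm{Z}_m-\bm{Z}_m')_{i,:}\|_2\le\|(\bm{R}_m-\bm{R}_m')_{i,:}\|_1$, so summing over $i$ yields $\psi(\bm{Z}_m)\le\|\bm{R}_m-\bm{R}_m'\|_{1,1}$. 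Once the inductive bound is in hand, substituting $\|\tilde{\bm{A}}^i-\tilde{\bm{A}}'^i\|_{1,1}\le 2i$ into Eq.~(\ref{eq:R_m}) and triangle-inequality reduces the problem to evaluating $2\alpha\sum_{i=1}^{m-1}i(1-\alpha)^i+2m(1-\alpha)^m$, which by the standard closed form for $\sum iq^i$ collapses to exactly $\tfrac{2(1-\alpha)}{\alpha}(1-(1-\alpha)^m)$. The $m=\infty$ case then follows either by taking this limit or by running the same argument on the convergent series $\bm{R}_\infty=\alpha\sum_{i=0}^{\infty}(1-\alpha)^i\tilde{\bm{A}}^i$.

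The induction itself is where the real work lies. The base case is an explicit computation: adding an edge $(u,v)$ alters only rows $u$ and $v$ of $\tilde{\bm{A}}=\bm{D}^{-1}\hat{\bm{A}}$, contributing $\|(\tilde{\bm{A}}-\tilde{\bm{A}}')_{u,:}\|_1 = 2/(k_u+2)$ and $2/(k_v+2)$ respectively, worst-case ($k_u=k_v=0$) summing to $2$. For the inductive step I would use the telescoping identity $\tilde{\bm{A}}^{t+1}-\tilde{\bm{A}}'^{t+1} = \tilde{\bm{A}}^t(\tilde{\bm{A}}-\tilde{\bm{A}}') + (\tilde{\bm{A}}^t-\tilde{\bm{A}}'^t)\tilde{\bm{A}}'$. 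The second term is bounded by $2t$ by the inductive hypothesis, since right-multiplying by the non-negative row-stochastic matrix $\tilde{\bm{A}}'$ never increases $\|\cdot\|_{1,1}$ (a short entrywise calculation using $\sum_j\tilde{\bm{A}}'_{kj}=1$). The first term is the main obstacle: left-multiplication by $\tilde{\bm{A}}^t$ generally inflates $\|\cdot\|_{1,1}$ by the column sums of $\tilde{\bm{A}}^t$, which can grow with vertex degree per Lemma~\ref{lemma:AR properties}. The observation that rescues the bound is that $\tilde{\bm{A}}-\tilde{\bm{A}}'$ has only two nonzero rows ($u$ and $v$), so expanding absolute values bounds $\|\tilde{\bm{A}}^t(\tilde{\bm{A}}-\tilde{\bm{A}}')\|_{1,1}$ by $\sum_{k\in\{u,v\}}\bigl(\sum_i\tilde{\bm{A}}^t_{ik}\bigr)\|(\tilde{\bm{A}}-\tilde{\bm{A}}')_{k,:}\|_1$. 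I would then invoke the column-sum bound $\sum_i\tilde{\bm{A}}^t_{iu}\le\max((k_u+1)/2,\,1)$ from Lemma~\ref{lemma:AR properties} with $p=\tfrac{1}{2}$, and verify by a case split on $k_u=0$ versus $k_u\ge 1$ that each product $\tfrac{2}{k_u+2}\cdot\max(\tfrac{k_u+1}{2},1)\le 1$. This delicate cancellation between the largeness of column sums and the smallness of the two row-differences at the endpoints is the crux of the argument; it yields $\le 2$ and closes the induction at $2(t+1)$.

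Finally, for the concatenated matrix $\bm{Z}=\tfrac{1}{s}(\bm{Z}_{m_1}\oplus\cdots\oplus\bm{Z}_{m_s})$, each row difference satisfies $\|(\bm{Z}-\bm{Z}')_{i,:}\|_2 = \tfrac{1}{s}\sqrt{\sum_k\|(\bm{Z}_{m_k}-\bm{Z}_{m_k}')_{i,:}\|_2^2}$; bounding $\sqrt{\sum a_k^2}\le\sum a_k$ for non-negative $a_k$ and swapping the order of summation gives $\psi(\bm{Z})\le\tfrac{1}{s}\sum_k\psi(\bm{Z}_{m_k})$, and maximizing over neighboring inputs yields $\Psi(\bm{Z})\le\tfrac{1}{s}\sum_k\Psi(\bm{Z}_{m_k})$, recovering Eq.~(\ref{eq:Psi Z}).
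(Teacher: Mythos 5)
Your proof is correct and rests on the same two pillars as the paper's: the reduction from $\psi(\bm{Z}_m)$ to the entrywise $\mathcal{L}_1$ difference of the propagation matrices (via the unit-norm rows of $\bm{X}$), and, crucially, the cancellation $\frac{2}{k_u+2}\cdot\max\bigl(\frac{k_u+1}{2},1\bigr)\le 1$ between the column-sum bound of Lemma~\ref{lemma:AR properties} and the $\mathcal{L}_1$-norm of the two perturbed rows of $\Tilde{\bm{A}}-\Tilde{\bm{A}}'$, which is exactly what yields the per-step increment of $2$ and hence $\psi(\Tilde{\bm{A}}^t\bm{X})\le 2t$. You diverge in the bookkeeping: you telescope one-sidedly, $\Tilde{\bm{A}}^{t+1}-\Tilde{\bm{A}}'^{t+1}=\Tilde{\bm{A}}^t(\Tilde{\bm{A}}-\Tilde{\bm{A}}')+(\Tilde{\bm{A}}^t-\Tilde{\bm{A}}'^t)\Tilde{\bm{A}}'$, and track the single scalar $\|\cdot\|_{1,1}$, whereas the paper symmetrizes via $\triangle\Tilde{\bm{A}}^m=\frac{1}{2}[\triangle\Tilde{\bm{A}}^{m-1}(\Tilde{\bm{A}}'+\Tilde{\bm{A}})+(\Tilde{\bm{A}}'^{m-1}+\Tilde{\bm{A}}^{m-1})\triangle\Tilde{\bm{A}}]$ and keeps everything as sums of rank-one terms $\bm{M}_1\bm{v}\bm{a}^T\bm{M}_2$; the two are interchangeable here, and yours is the more standard identity. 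The one substantive difference is the $m=\infty$ case: the paper treats PPR separately through the resolvent identity $\bm{R}_\infty'-\bm{R}_\infty=\frac{1-\alpha}{\alpha}\bm{R}_\infty'(\Tilde{\bm{A}}'-\Tilde{\bm{A}})\bm{R}_\infty$, while your Neumann-series route $\bm{R}_\infty=\alpha\sum_{i\ge 0}(1-\alpha)^i\Tilde{\bm{A}}^i$ (convergent since every eigenvalue of $\Tilde{\bm{A}}$ has modulus at most $1$, cf.\ Lemma~\ref{lemma:invertible}) handles finite and infinite $m$ uniformly, which is a genuine, if modest, simplification; if you prefer to ``take the limit'' of the finite-$m$ bounds instead, add a sentence justifying the interchange via $\bm{R}_m\to\bm{R}_\infty$. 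Finally, note that both you and the paper establish only upper bounds on $\psi$, so the equalities in the lemma statement are really definitions of $\Psi$ as those bounds---this is a feature of the original proof, not a gap in yours.
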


The proof of Lemma~\ref{lemma:Psi_Z of PPR and APPR}, detailed in Appendix~\ref{app:Proof of Psi_Z}, progressively establishes bounds for $\Tilde{\bm{A}}^m$, $\bm{R}_m$, $\bm{Z}_m$, and $\bm{Z}$. As defined in Definition~\ref{def:psi_Z}, bounding the sensitivity involves constraining the row-wise differences $\|\bm{z}_i' - \bm{z}_i\|_2$ (corresponding to bounding the sum of each row) and subsequently aggregating these differences across all rows $i=1,2,\ldots,n$ (corresponding to bounding the sum of each column). In terms of rows, we observe that the sum of each row in $\Tilde{\bm{A}}^m$ and $\bm{R}_m$ consistently equals 1, a consequence of the normalization process ($\Tilde{\bm{A}} = \bm{D}^{-1}(\bm{A} + \bm{I})$) and linear combination operations described in Section~\ref{sec:GCON Propagation}. However, bounding the columns of $\Tilde{\bm{A}}^m$ and $\bm{R}_m$ is rather complex due to the intricacies introduced during propagation.

To address this difficulty, we leverage Lemma~\ref{lemma:AR properties}, which serves as a cornerstone in the proof of Lemma~\ref{lemma:Psi_Z of PPR and APPR}. 
Lemma~\ref{lemma:AR properties} establishes that the sum of the $j$-th column of $\Tilde{\bm{A}}^m$ or $\bm{R}_m$ is intrinsically linked to the degree of node $j$, irrespective of the number of propagation steps $m$. Remarkably, this relationship holds even when $\Tilde{\bm{A}}^m$ undergoes an element-wise clipping, a technique frequently employed in DP algorithms. The implications of Lemma~\ref{lemma:AR properties} extend beyond our analysis, offering valuable insights that can benefit future research in the realm of differentially private algorithms, especially in contexts where artificial clippings are commonplace.

\section{Experiments}\label{sec:Experiments}
\subsection{Settings}
\noindent\textbf{Datasets.}
Following previous work~\cite{ppnp, kolluri2022lpgnet, huang2024optimizing}, we evaluate the methods on 4 real-world datasets with varied sizes and homophily ratios, as defined in Definition~\ref{def:homo ratio}. The homophily ratio indicates the tendency for nodes with identical labels to be connected. Specifically, we consider three homophily graphs (Cora-ML, CiteSeer, and PubMed) and one heterophily graph (Actor).
Key statistics of these datasets and the train-test sets split are summarized in Appendix~\ref{app:Datasets Statistics}. The training and testing are conducted on the same graph and both are considered private. We evaluate the methods' classification accuracy on the datasets by the micro-averaged F1 score (Micro F1), a metric that combines the precision and recall scores of the model.

\vspace{2pt}
\noindent\textbf{Competitors.} We evaluate 7 baselines: {\sf LPGNet}~\cite{kolluri2022lpgnet}, {\sf DPGCN}~\cite{wu2022DPGCN}, {\sf GAP}~\cite{sajadmanesh2023gap}, {\sf ProGAP}~\cite{sajadmanesh2023progap}, {\sf DP-SGD}~\cite{abadi2016dpsgd}, {\sf GCN (Non-DP)}, and {\sf MLP}. For the non-private baseline {\sf GCN (non-DP)}, we follow \cite{kipf2016semi, chien2020adaptive} as referenced in \cite{huang2024optimizing, ma2021homophily}. {\sf MLP} denotes a simple multi-layer perceptron model that does not utilize graph edges and, therefore, inherently satisfies edge DP for any privacy budget. {\sf GAP} and {\sf ProGAP} refer to the GAP-EDP~\cite{sajadmanesh2023gap} and ProGAP-EDP~\cite{sajadmanesh2023progap} algorithms, both of which ensure edge-level DP. All baselines except {\sf GCN (Non-DP)} are evaluated under the same edge-level DP settings in our experiments.

\begin{figure*}[t]
\centering
\begin{tikzpicture}
    \begin{customlegend}[legend columns=-1,legend style={align=left,font=\small,draw=none,/tikz/every even column/.append style={column sep=4mm}},legend entries={{\sf GCON}, {\sf DP-SGD}, {\sf DPGCN}, {\sf LPGNet}, {\sf GAP}, {\sf ProGAP}, {\sf MLP}, {\sf GCN (non-DP)}}]
    \addlegendimage{line width=1.4pt, blue, mark=square*, mark size=2pt, mark options={solid,fill opacity=0}},
    \addlegendimage{semithick, springgreen, mark=triangle*, mark size=2pt, mark options={solid,rotate=180}},
    \addlegendimage{semithick, goldenrod1, mark=|, mark size=2pt, mark options={solid}},
    \addlegendimage{semithick, darkorchid, mark=+, mark size=2pt, mark options={solid}},
    \addlegendimage{semithick, darkorange1, mark=x, mark size=2pt, mark options={solid}},
    \addlegendimage{semithick, darkgreen, mark=asterisk, mark size=2pt, mark options={solid,fill opacity=0}},
    \addlegendimage{semithick, brick, mark=triangle*, mark size=2pt, mark options={solid,fill opacity=0}},
    \addlegendimage{semithick, hotpink255129192, mark=triangle, mark size=2pt, mark options={solid,rotate=180,fill opacity=0}},
    \end{customlegend}
\end{tikzpicture}
\\
\begin{subfigure}[h]{0.24\textwidth}
\includegraphics[width=\textwidth]{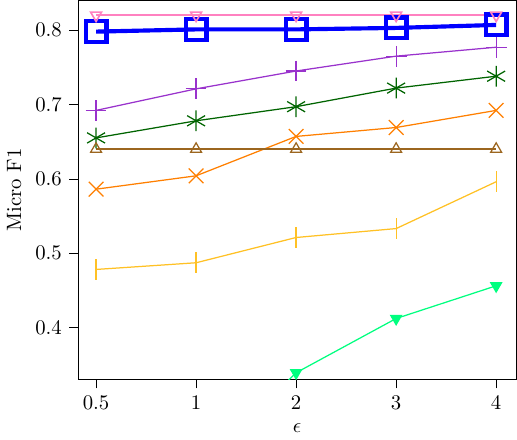}
\caption{Cora-ML}
\label{fig:Cora-ML acc}
\end{subfigure}
\begin{subfigure}[h]{0.24\textwidth}
\includegraphics[width=\textwidth]{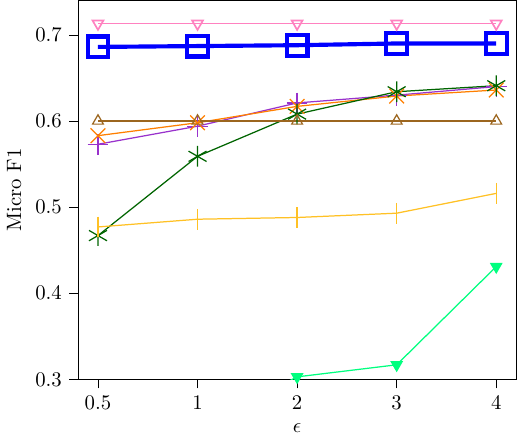}
\caption{CiteSeer}
\label{fig:CiteSeer acc}
\end{subfigure}
\begin{subfigure}[h]{0.24\textwidth}
\includegraphics[width=\textwidth]{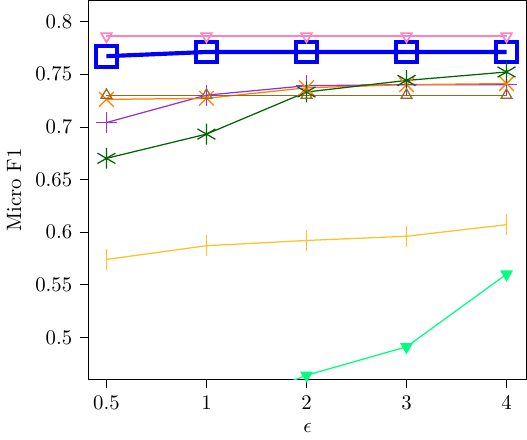}
\caption{PubMed}
\label{fig:PubMed acc}
\end{subfigure}
\begin{subfigure}[h]{0.24\textwidth}
\includegraphics[width=\textwidth]{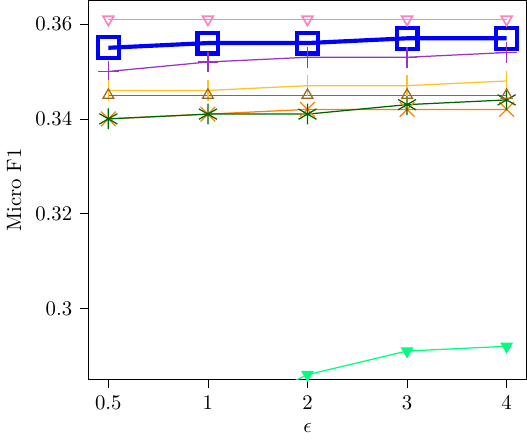}
\caption{Actor}
\label{fig:Actor acc}
\end{subfigure}
\caption{Model performance  (micro F1 score) versus privacy constraints on different datasets. GCN (non-DP) represents the target performance (i.e., an upper bound) for all DP algorithms.}
\label{fig:algs acc}
\end{figure*}

\vspace{2pt}
\noindent\textbf{Hyperparameters.}
For the privacy constraints, we vary $\epsilon$ across $\{0.5, 1, 2, 3, 4\}$ and set $\delta=\frac{1}{|E|}$ ($|E|$ is the number of edges). 
The detailed hyperparameter settings for our method are summarized in Appendix~\ref{app:Hyperparameters}.

\begin{figure*}[t]
\centering
\begin{tikzpicture}
    \begin{customlegend}[legend columns=-1,legend style={align=left,font=\small,draw=none,/tikz/every even column/.append style={column sep=5mm}},legend entries={$\alpha=0.8$, $\alpha=0.6$, $\alpha=0.4$, $\alpha=0.2$ }]
    \addlegendimage{semithick, blue1, mark=square*, mark size=2pt, mark options={solid,fill opacity=0}},
    \addlegendimage{semithick, red1, mark=*, mark size=2pt, mark options={solid,fill opacity=0}},
    \addlegendimage{semithick, darkorange1, mark=x, mark size=2pt, mark options={solid}},
    \addlegendimage{semithick, darkgreen, mark=asterisk, mark size=2pt, mark options={solid,fill opacity=0}},
    \end{customlegend}
\end{tikzpicture}
\\
\vspace{1mm}
\begin{subfigure}[h]{0.32\textwidth}
\includegraphics[width=\textwidth]{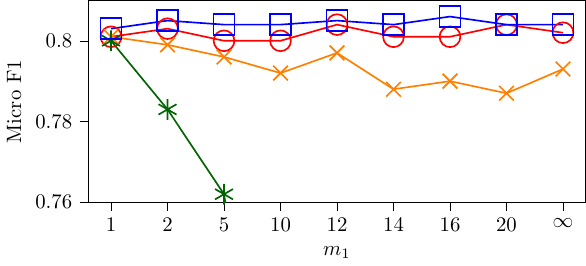}
\caption{Cora-ML}
\label{fig:Cora-ML m}
\end{subfigure}
\begin{subfigure}[h]{0.32\textwidth}
\includegraphics[width=\textwidth]{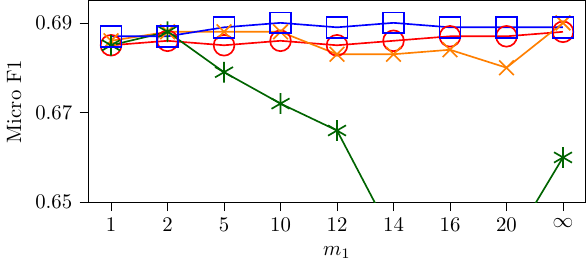}
\caption{CiteSeer}
\label{fig:CiteSeer m}
\end{subfigure}
\begin{subfigure}[h]{0.32\textwidth}
\includegraphics[width=\textwidth]{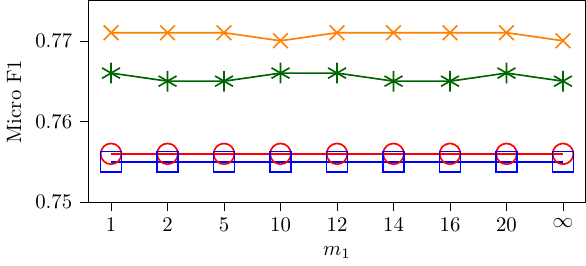}
\caption{PubMed}
\label{fig:PubMed m}
\end{subfigure}
\caption{Effect of the propagation step in {\sf GCON} on private test graph under $\epsilon=4$.}
\label{fig:Ablation m}
\end{figure*}
\begin{figure*}[t]
\centering
\begin{tikzpicture}
    \begin{customlegend}[legend columns=-1,legend style={align=left,font=\small,draw=none,/tikz/every even column/.append style={column sep=5mm}},legend entries={$\alpha=0.8$, $\alpha=0.6$, $\alpha=0.4$, $\alpha=0.2$ }]
    \addlegendimage{semithick, blue1, mark=square*, mark size=2pt, mark options={solid,fill opacity=0}},
    \addlegendimage{semithick, red1, mark=*, mark size=2pt, mark options={solid,fill opacity=0}},
    \addlegendimage{semithick, darkorange1, mark=x, mark size=2pt, mark options={solid}},
    \addlegendimage{semithick, darkgreen, mark=asterisk, mark size=2pt, mark options={solid,fill opacity=0}},
    \end{customlegend}
\end{tikzpicture}
\\
\begin{subfigure}[h]{0.32\textwidth}
\includegraphics[width=\textwidth]{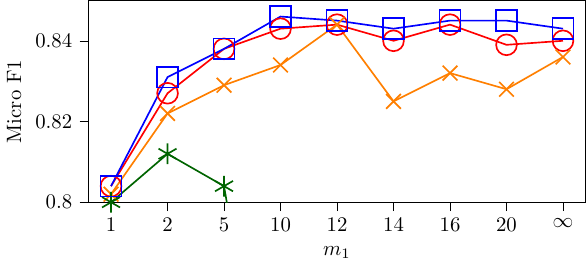}
\caption{Cora-ML}
\label{fig:Cora-ML m nondp}
\end{subfigure}
\begin{subfigure}[h]{0.32\textwidth}
\includegraphics[width=\textwidth]{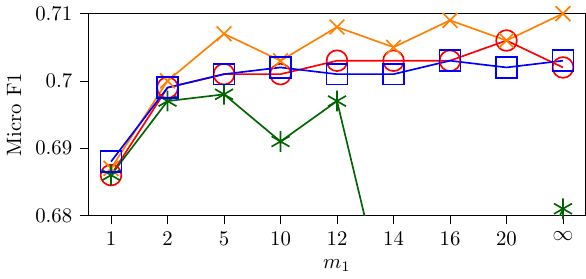}
\caption{CiteSeer}
\label{fig:CiteSeer m nondp}
\end{subfigure}
\begin{subfigure}[h]{0.32\textwidth}
\includegraphics[width=\textwidth]{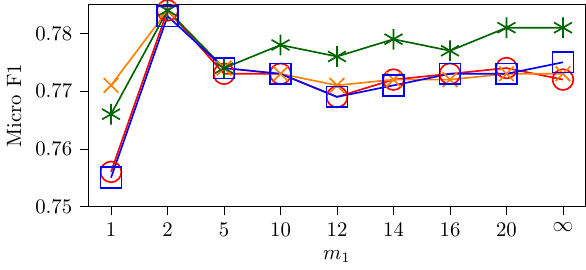}
\caption{PubMed}
\label{fig:PubMed m nondp}
\end{subfigure}
\caption{Effect of the propagation step in {\sf GCON} on public test graph under $\epsilon=4$.}
\label{fig:Ablation m nondp}
\end{figure*}
\begin{figure}[htbp]
\centering
\begin{tikzpicture}
    \begin{customlegend}[legend columns=-1,legend style={align=left,font=\small,draw=none,/tikz/every even column/.append style={column sep=5mm}},legend entries={$\alpha=0.8$, $\alpha=0.6$, $\alpha=0.4$, $\alpha=0.2$ }]
    \addlegendimage{semithick, blue1, mark=square*, mark size=2pt, mark options={solid,fill opacity=0}},
    \addlegendimage{semithick, red1, mark=*, mark size=2pt, mark options={solid,fill opacity=0}},
    \addlegendimage{semithick, darkorange1, mark=x, mark size=2pt, mark options={solid}},
    \addlegendimage{semithick, darkgreen, mark=asterisk, mark size=2pt, mark options={solid,fill opacity=0}},
    \end{customlegend}
\end{tikzpicture}
\\
\begin{subfigure}[h]{0.15\textwidth}
\includegraphics[width=\textwidth]{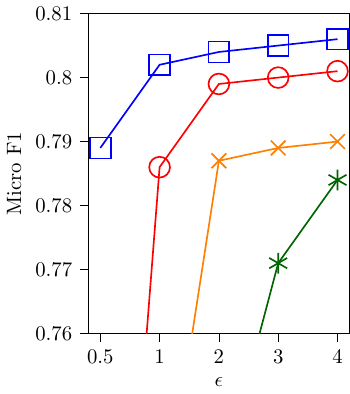}
\caption{Cora-ML}
\label{fig:Cora-ML a}
\end{subfigure}
\begin{subfigure}[h]{0.15\textwidth}
\includegraphics[width=\textwidth]{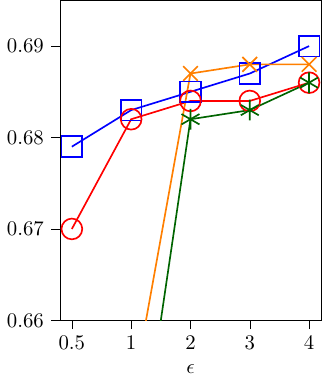}
\caption{CiteSeer}
\label{fig:CiteSeer a}
\end{subfigure}
\begin{subfigure}[h]{0.15\textwidth}
\includegraphics[width=\textwidth]{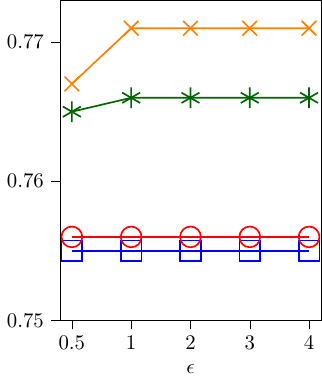}
\caption{PubMed}
\label{fig:PubMed a}
\end{subfigure}
\caption{Effect of the restart probability $\alpha$ in {\sf GCON}.}
\label{fig:Ablation alpha}
\end{figure}

\subsection{Results}
We repeat each experimental setup for 10 independent runs and report the average Micro F1 scores of the methods across varying $\epsilon \in \{0.5, 1, 2, 3, 4\}$. As depicted in Figure~\ref{fig:algs acc}, {\sf GCON} consistently outperforms the competitors in all configurations, indicating that the {\sf GCON} model is not only more accurate but also demonstrates greater versatility across various privacy budgets and datasets.

Particularly, {\sf GCON} achieves a higher performance improvement on homophily datasets Cora-ML, CiteSeer, and PubMed. In these datasets, nodes with identical labels are more likely to be connected. Recall from Sections~\ref{sec:GCN preliminary} and \ref{sec:SGC} that both the standard GCN and the PPNP/APPNP propagation schemes aggregate neighbor nodes' information through their edges, which are represented by the adjacency matrix ($\Tilde{\bm{A}}$) or the propagation matrix. This aggregation (e.g., $\Tilde{\bm{A}}\bm{X}$) on homophily graphs facilitates the learning of node characteristics under each category, thereby improving classification accuracy. However, competing methods add noise directly to the adjacency matrix, disrupting their aggregation process and learning of classification due to interference from other categories. Thus, their performance deteriorates when the privacy budget $\epsilon$ is small due to substantial injected noise and improves as $\epsilon$ increases. In contrast, {\sf GCON} maintains an unaltered aggregation process and only introduces noise to $\bm{\Theta} \in \mathbb{R}^{d\times c}$ instead of overwhelming noise to $\Tilde{\bm{A}} \in \mathbb{R}^{N\times N}$ or $\Tilde{\bm{A}}\bm{X} \in \mathbb{R}^{N\times d}$ ($N\gg c,d$). Therefore, {\sf GCON} can withstand variations in privacy budget and maintain a superior performance close to the {\sf GCN (Non-DP)} baseline. On the Actor dataset, the performance gap is smaller for different methods, since on this dataset, nodes with different labels are interconnected, and there is no clear pattern of nodes and their neighborhoods. This reduces the benefits of an unaltered neighbor aggregation; nonetheless, {\sf GCON} outperforms all other DP competitors.

\vspace{2pt}
\noindent\textbf{Effect of the propagation step $m_1$.} In what follows, we study the influence of the propagation step $m_1$ in {\sf GCON} under $\epsilon=4$. We plot the performance under different $\alpha \in \{0.2,0.4,0.6,0.8\}$ in Figure~\ref{fig:Ablation m}. As $m_1$ increases, the $\alpha=0.2$ curve rapidly declines and the $\alpha=0.4$ curve slightly declines, while the other two curves remain stable or slightly increase. The decline of the $\alpha=0.2$ and $\alpha=0.4$ curves is due to their higher sensitivity (as indicated in Lemma~\ref{lemma:Psi_Z of PPR and APPR}), which results in more injection noise. In contrast, the sensitivity and injected noise for the other two curves are less affected by $m_1$. 

Meanwhile, with $\alpha=0.6$ and $\alpha=0.8$, the graph convolution process can absorb richer information from a larger neighborhood, allowing the model to better learn and classify node features, thereby more than compensating for the increased noise. Additionally, to further evaluate the effectiveness of the trained model, we present the results of {\sf GCON} with non-private inference in Figure~\ref{fig:Ablation m nondp}, following \cite{daigavane2021node, chien2024dpdgc, zhang2024dpar}, where the test graph is considered public and the edges in the private training set are all excluded (i.e., removing the constraints of private inference). Figure~\ref{fig:Ablation m nondp} demonstrates that the model's performance continues to improve when $m_1 \leq 10$ and stabilizes after $m_1 > 10$. This is because, when $m_1$ increases beyond 10, the expanded range of graph convolution leads to diminishing returns in classification ability and the model performance is further impacted by increased sensitivity and injection noise. Thus, summarizing Figures~\ref{fig:Ablation m} \& \ref{fig:Ablation m nondp}, a propagation step of $10$ or fewer is sufficient to train satisfactory models, whether tested on private or public graphs.

\vspace{2pt}
\noindent\textbf{Effect of the Restart Probability $\alpha$.} In Figure~\ref{fig:Ablation alpha}, we report the accuracy of {\sf GCON} on Cora-ML, CiteSeer, and PubMed with varying $\alpha \in \{0.2, 0.4, 0.6, 0.8\}$ and $m_1=2$ under $\epsilon \in \{0.5, 1, 2, 3, 4\}$. On Cora-ML and CiteSeer, $\alpha=0.8$ is the best choice to achieve better model performance, while $\alpha=0.4$ is the best on PubMed. However, at $\alpha=0.2$, the performance is relatively poor, especially under a restrictive privacy budget ($\epsilon \leq 1$). The reason is that lower $\alpha$ values lead to higher sensitivity of $\bm{Z}_{m_1}$, as indicated in Lemma~\ref{lemma:Psi_Z of PPR and APPR}, resulting in a larger amount of injected noise that hinders model performance. Moreover, even with an ample privacy budget ($\epsilon=3,4$), $\alpha=0.2$ offers little improvement in model performance. Hence, a restart probability $\alpha$ of $\geq 0.4$ proves to be more broadly useful in practice.

Summarizing the experimental results, {\sf GCON} significantly outperforms its competitors in all settings. Thus, it is the method of choice in practical applications. Among the configurations of {\sf GCON}, employing the APPR scheme with $m\leq 10$ and a larger $\alpha \geq 0.4$ is generally a good choice.
\section{Related Work}\label{sec:Related Work}
Graph Neural Networks (GNNs) have emerged as a powerful tool for analyzing graph-structured data, which typically consists of nodes interconnected by complex edges representing relationships and dependencies. GNNs encompass various architectures, including Graph Convolutional Networks (GCNs)~\cite{kipf2016semi}, Graph Recurrent Neural Networks (GRNNs)~\cite{hu2020stochastic}, and Graph Autoencoders (GAEs), with GCNs being particularly prevalent in node classification tasks due to their effectiveness and wide adoption in both research~\cite{chen2024user} and practical applications~\cite{lu2024pr}. For a comprehensive overview of GNN variants and their applications, readers are referred to Wu et al.~\cite{wu2020comprehensive}.

Privacy concerns also arise when using GNNs to analyze sensitive graph data, such as social networks~\cite{xia2023disentangled}, medical knowledge bases~\cite{vretinaris2021medical}, and patient relationships~\cite{schrodt2020patient}. There have been several successful privacy attacks. To briefly mention a few (non-exhaustive), Zhang et al.~\cite{zhang2022inference} investigate privacy leakages in graph properties and propose several inference attacks; Meng et al. present infiltration schemes for inferring neighboring links~\cite{meng2023devil}; and He et al. develop a suite of black-box attacks targeting edge existence~\cite{he2021stealing}.

Differential Privacy (DP), whose application spans a variety of algorithms\cite{dwork2014algorithmic, bernau2021quantifying, cao2017quantifying}, offers a rigorous framework for mitigating privacy risks in graph data analytics~\cite{karwa2011private, jorgensen2016publishing, zhang2015private, ye2020lf}. DP is often achieved by injecting noise into sensitive information. Here noise can be injected into various components of an algorithm, including its input~\cite{yang2023fortifying}, intermediate results~\cite{mohapatra2023differentially}, and the output~\cite{liu2024cargo}. Besides, objective function~\cite{chaudhuri2011DPERM}, which perturbs the objective function (of a strongly convex optimization problem) is also widely used in machine learning applications such as logistic regression. Training deep neural networks, on the other hand, often requires additional types of machinery such as privacy composition and data subsampling~\cite{abadi2016dpsgd,mironov2019renyi}. The aforementioned well-established techniques, however, do not automatically apply to training differentially private GCNs. The reason is that private data records (i.e., edges in the context of edge DP) in the sensitive input dataset of GCN training algorithms are usually correlated, as opposed to being independent as in other applications such as statistical machine learning, computer vision, and natural language processing~\cite{chaudhuri2011DPERM,abadi2016dpsgd,de2022unlocking,rust2023differential}.

Recent research on the combination of DP and GNNs primarily considers two DP definitions: edge DP and node DP. Edge DP\cite{liu2022collecting, ran2024differentially, liu2024edge} prevents inferences about individual edges, while node DP protects all edges and the node itself~\cite{day2016publishing}. In this work, we focus on Edge-DP. Existing Edge-DP approaches in GNNs typically perturb the adjacency or aggregate feature matrix~\cite{wu2022DPGCN, kolluri2022lpgnet, sajadmanesh2023gap,sajadmanesh2023progap}. For instance, {\sf DP-GCN}~\cite{wu2022DPGCN} adds noise directly to the adjacency matrix, leading to significant distortions. {\sf GAP}~\cite{sajadmanesh2023gap} uses the original adjacency matrix to aggregate node features first and then adds noise to the aggregate features. Building upon {\sf GAP}, {\sf ProGAP}~\cite{sajadmanesh2023progap} also employs an iterative process for feature aggregation using the adjacency matrix. The critical distinction lies in the handling of the perturbed aggregate features: in each iteration, these are processed through an MLP, and the resultant outputs across iterations are concatenated. {\sf LPGNet}~\cite{kolluri2022lpgnet} breaks down graphs into independent nodes, and aggregates and then perturbs node features. Our work diverges from these methods by perturbing only the objective function, retaining the integrity of the graph convolutions, and enhancing classification accuracy.
We also note that there is growing interest in  GNNs with node DP~\cite{daigavane2021node} and decentralized graph data~\cite{sajadmanesh2021locally, he2024butterfly}, which is orthogonal to the focus of this work. 
\section{Conclusion}\label{sec:Conclusion}
In this paper, we propose {\sf GCON}, a novel algorithm for publishing GCNs with edge differential privacy. {\sf GCON} is distinctive in its use of objective perturbation, which is challenging in the context of GCNs due to high sensitivity and non-convexity of traditional GCN architectures. 
The efficacy of {\sf GCON} is primarily attributed to its unaltered graph convolution operations in the GCN. The privacy guarantee of {\sf GCON} is rigorously established through formal proofs. We conducted a comprehensive series of experiments on popular graph datasets, exploring {\sf GCON}'s performance under diverse privacy budget scenarios. The results from these experiments consistently demonstrate that it not only achieves significant improvements in accuracy over existing methods, but also exhibits robust generalization capabilities across various graphs.

\bibliographystyle{IEEEtran}
\bibliography{main}
\newpage

\appendix
\subsection{Algorithms}\label{app:Algorithms}
Algorithm~\ref{alg:sampling} shows how we sample the specific noise. Algorithm~\ref{alg:sampling} first samples the radius $a$ from the Erlang distribution (Eq. (\ref{eq:norm b distribution})) that has been implemented in many popular libraries~\cite{2020SciPy-NMeth}. To sample each direction in the hypersphere with equal probability, we then sample a vector $\bm{u}$ from a $d$-dimensional normal distribution and scale its length to $a$. The correctness of Algorithm~\ref{alg:sampling} is explained in Appendix~\ref{app:Gaussian rotational symmetry}.
\begin{algorithm}[h]\small
\caption{Noise sampling}
\label{alg:sampling}
\KwIn{Dimension $d$, distribution parameter $\beta$}
\KwOut{Vector $\bm{b_1}$ with dimension $d$}
Sample $a \in (0, +\infty)$ with a PDF shown in Eq. (\ref{eq:norm b distribution})\;\label{line:sample radius}
Sample $u_1, \dots, u_d$ independently according to the normal distribution $\mathcal{N}(0,1)$\;\label{line:sample direction begin}
\For{$i \in [1, d]$}
{
    $b_{1,i} \gets  a u_i / \sqrt{u_1^2 + \dots + u_d^2}$
}
$\bm{b_1} = (b_{1,1}, \dots, b_{1,d})$\;\label{line:sample direction end}
\Return $\bm{b_1}$\;
\end{algorithm}

Algorithm~\ref{alg:encoder} shows the workflow of our feature encoder. 
\begin{algorithm}[h]\small
\caption{Encoder($\bm{X}, \bm{Y}, d$)}
\label{alg:encoder}
\KwIn{labeled and unlabeled features $\bm{X}=\{\bm{X}_l,\bm{X}_{ul}\}$, $\bm{Y}=\{\bm{Y}_l,\bm{Y}_{ul}\}$, expected dimension $d$, a loss function $L_{mlp}$}
\KwOut{Encoded features and labels $\bm{X}=\{\bm{X}_l,\Tilde{\bm{X}}_{ul}\}$}
Initialize a MLP model $\mathcal{M}$ with parameters $\bm{W}_1$, a classification layer $\bm{W}_2$, and activation functions $H_{mlp},H$\;
$\Tilde{\bm{X}_l} = H_{mlp}\left(\mathcal{M}(\bm{X}_l)\right)$ \;
$\Tilde{\bm{Y}_l} =H\left(\Tilde{\bm{X}_l} \bm{W}_2\right)$ \;
$\bm{W}_1^{\star}, \bm{W}_2^{\star}=\argmin _{\bm{W}_1,\bm{W}_2} L_{mlp}(\Tilde{\bm{Y}_l}, \bm{Y}_l)$ \;
$\Tilde{\bm{X}} = H_{mlp}\left(\operatorname{MLP}(\bm{X} ; \bm{W}_1^{\star} )\right)$ \;
\Return $\bm{X}=\{\bm{X}_l,\Tilde{\bm{X}}_{ul}\}$\;
\end{algorithm}

Algorithm~\ref{alg:Inference} shows the inference procedure.
\begin{algorithm}[ht]\small
\caption{Inference}
\label{alg:Inference}
\KwIn{Optimized model $\bm{\Theta}_{priv}$, target dataset $D=\{V,E,\bm{A},\bm{X},\bm{Y}\}$, propagation steps $m_1, m_2, \cdots, m_s$, and $\alpha_I$. }
\KwOut{Label predictions $\hat{\bm{Y}}$.}
\If{private inference}
{\label{line:private inference}
    \For{$i \in [1,s]$}
    {$\hat{\bm{R}}_{m_i}=
    \begin{cases}
     \bm{I} , & m_i = 0\\
    (1-\alpha_I)\Tilde{\bm{A}} + \alpha_I \bm{I} , & m_i > 0
    \end{cases}$
    }
$\hat{\bm{Y}} = (\hat{\bm{R}}_{m_1}\bm{X} \oplus \hat{\bm{R}}_{m_2}\bm{X} \oplus \cdots \oplus \hat{\bm{R}}_{m_s}\bm{X}) \bm{\Theta}_{priv}$\;
}
\Else{
Compute $\bm{Z}$ by Eq. (\ref{eq:concatenate Z}) on $D$\;
$\hat{\bm{Y}} = \bm{Z} \bm{\Theta}_{priv}$\;}\label{line:non-private inference}
\Return $\hat{\bm{Y}}$\;
\end{algorithm}

\subsection{Convexity}\label{app:convexity}
Our proposed solution is based on the general framework of objective perturbation, first introduced in DPERM~\cite{chaudhuri2011DPERM}. The main idea is that to release a trained machine learning model, we randomly perturb its training objective function to satisfy DP; note that the objective function covers the underlying training dataset. Then, model fitting can be performed with an arbitrary optimization algorithm, based on the noisy objective function. This framework requires that the objective function is strongly convex and has bounded derivatives with respect to the model parameters. The concept of strong convexity is defined as follows.

\begin{definition}[Convex and strictly convex]\label{def:convex}
A function $F:\mathbb{R}^{d_1 \times d_2} \rightarrow \mathbb{R}$ is convex if $\forall \bm{\Theta}_1, \bm{\Theta}_2 \in \mathbb{R}^{d_1 \times d_2}$ and $\forall \alpha \in (0,1)$, we have
\begin{equation*}
    F(\alpha \bm{\Theta}_1 + (1-\alpha) \bm{\Theta}_2) \leq \alpha F(\bm{\Theta}_1) + (1-\alpha) F(\bm{\Theta}_2)
\end{equation*}
In particular, $F$ is strictly convex if $\forall \bm{\Theta}_1, \bm{\Theta}_2 \in \mathbb{R}^{d_1 \times d_2}, \bm{\Theta}_1 \neq \bm{\Theta}_2$ and $\forall \alpha \in (0,1)$, we have
\begin{equation*}
    F(\alpha \bm{\Theta}_1 + (1-\alpha) \bm{\Theta}_2) < \alpha F(\bm{\Theta}_1) + (1-\alpha) F(\bm{\Theta}_2)
\end{equation*}
\end{definition}
\begin{definition}[Strongly convex]\label{def:strongly convex}
A function $F(\bm{\Theta})$ over $\bm{\Theta} \in \mathbb{R}^{d_1 \times d_2}$ is strongly convex if $\exists \lambda>0$, $F(\bm{\Theta}) - \lambda \|\bm{\Theta}\|_F^2$ is convex.
\end{definition}
\begin{fact}\label{fact:addition of strongly convex}
     Consider any $\Lambda>0$. $\Lambda\|\bm{\Theta}\|_F^2$ is strongly convex. In addition, if $F(\bm{\Theta})$ is convex, then $F(\bm{\Theta}) + \Lambda\|\bm{\Theta}\|_F^2$ is strongly convex.
\end{fact}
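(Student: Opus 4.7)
The plan is to verify both claims directly from Definition~\ref{def:strongly convex}. For the first claim, the natural choice of certificate is $\lambda = \Lambda/2 > 0$, which reduces the task to showing that $\Lambda\|\bm{\Theta}\|_F^2 - (\Lambda/2)\|\bm{\Theta}\|_F^2 = (\Lambda/2)\|\bm{\Theta}\|_F^2$ is convex. Since scaling by a positive constant preserves convexity, this in turn reduces to the convexity of the map $\bm{\Theta} \mapsto \|\bm{\Theta}\|_F^2$ itself.

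To establish convexity of $\|\bm{\Theta}\|_F^2$, I would expand $\|\bm{\Theta}\|_F^2 = \sum_{i,j} \Theta_{ij}^2$ and note that each summand is the composition of the linear coordinate map $\bm{\Theta} \mapsto \Theta_{ij}$ with the convex scalar function $t \mapsto t^2$, so each summand is convex and the finite sum of convex functions is convex. Equivalently, one can verify the defining inequality of Definition~\ref{def:convex} directly by expanding $\|\alpha\bm{\Theta}_1 + (1-\alpha)\bm{\Theta}_2\|_F^2$ through the Frobenius inner product and applying $2\langle \bm{\Theta}_1, \bm{\Theta}_2\rangle_F \le \|\bm{\Theta}_1\|_F^2 + \|\bm{\Theta}_2\|_F^2$ (an instance of $(a-b)^2 \ge 0$ in the Frobenius geometry).

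For the second claim, the strategy is to split the quadratic term into two pieces: one that absorbs into the convex function $F$, and a strictly positive remainder that serves as the strong-convexity certificate. Concretely, again take $\lambda = \Lambda/2 > 0$; then
\begin{equation*}
F(\bm{\Theta}) + \Lambda\|\bm{\Theta}\|_F^2 - \lambda\|\bm{\Theta}\|_F^2 \;=\; F(\bm{\Theta}) + (\Lambda/2)\|\bm{\Theta}\|_F^2,
\end{equation*}
which is the sum of the convex function $F$ and the convex function $(\Lambda/2)\|\bm{\Theta}\|_F^2$ (by the first claim, or just by convexity of $\|\cdot\|_F^2$). The sum of two convex functions is convex, so $F(\bm{\Theta}) + \Lambda\|\bm{\Theta}\|_F^2$ is strongly convex.

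There is no real obstacle here; the argument is a routine unpacking of the definitions. The single point that warrants a moment of care is that Definition~\ref{def:strongly convex} requires $\lambda > 0$ \emph{strictly}, so one must keep a small amount of slack in the quadratic term (hence the choice $\lambda = \Lambda/2$ rather than $\lambda = \Lambda$); otherwise the remainder $F(\bm{\Theta}) + 0 \cdot \|\bm{\Theta}\|_F^2 = F(\bm{\Theta})$ would only be convex, not certify strong convexity.
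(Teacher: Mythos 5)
Your argument is correct. The paper states this as a \emph{Fact} without supplying a proof, so there is nothing to compare against; your write-up simply fills in the routine details, and both halves of your argument (convexity of $\|\bm{\Theta}\|_F^2$ via coordinatewise squares or the Frobenius inner-product expansion, then the split $\Lambda = \Lambda/2 + \Lambda/2$) are sound. One small remark: the caution in your last paragraph is unnecessary under Definition~\ref{def:strongly convex} as the paper states it. That definition only requires some $\lambda>0$ for which $F(\bm{\Theta})-\lambda\|\bm{\Theta}\|_F^2$ is \emph{convex}, so the choice $\lambda=\Lambda$ already works for both claims: the remainders are the zero function and $F(\bm{\Theta})$ respectively, both of which are convex. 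Your choice $\lambda=\Lambda/2$ is of course also valid, but the ``slack'' is not needed --- the remainder does not have to be anything stronger than convex.
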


\subsection{Lemma~\ref{lemma:invertible} and its Proof}\label{app:lemma invertible}
In this section we prove matrix $\bm{I}-(1-\alpha) \Tilde{\bm{A}}$ is invertible.
\begin{lemma}\label{lemma:invertible}
    Matrix $\bm{I}-(1-\alpha) \Tilde{\bm{A}}$ is invertible.
\end{lemma}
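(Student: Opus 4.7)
The plan is to exploit the fact that $\tilde{\bm{A}} = \bm{D}^{-1}(\bm{A}+\bm{I})$ is row-stochastic, and then control the spectral radius of $(1-\alpha)\tilde{\bm{A}}$.

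First I would verify that $\tilde{\bm{A}}$ is a (row-)stochastic matrix: by construction $\tilde{\bm{A}}_{ij} = (\bm{A}+\bm{I})_{ij}/\bm{D}_{ii} \geq 0$, and since $\bm{D}_{ii} = \sum_{k} (\bm{A}+\bm{I})_{ik}$, each row of $\tilde{\bm{A}}$ sums to $1$. In particular, $\|\tilde{\bm{A}}\|_\infty = 1$, hence the spectral radius satisfies $\rho(\tilde{\bm{A}}) \leq \|\tilde{\bm{A}}\|_\infty = 1$.

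Next I would use the restart probability. Because $\alpha \in (0,1]$,
\begin{equation*}
\rho\bigl((1-\alpha)\tilde{\bm{A}}\bigr) = (1-\alpha)\,\rho(\tilde{\bm{A}}) \leq 1 - \alpha < 1.
\end{equation*}
Thus $1$ is not an eigenvalue of $(1-\alpha)\tilde{\bm{A}}$, which is exactly the statement that $\bm{I} - (1-\alpha)\tilde{\bm{A}}$ has no zero eigenvalue, i.e., it is invertible. Equivalently, one can exhibit the inverse constructively via the Neumann series
\begin{equation*}
\bigl(\bm{I} - (1-\alpha)\tilde{\bm{A}}\bigr)^{-1} = \sum_{k=0}^{\infty} (1-\alpha)^k \tilde{\bm{A}}^k,
\end{equation*}
whose convergence is immediate from $\|(1-\alpha)\tilde{\bm{A}}\|_\infty = 1-\alpha < 1$. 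This constructive route has the added benefit of matching the closed-form expression for $\bm{R}_\infty$ in Eq.~(\ref{eq:PPR matrix}) once the whole series is multiplied by $\alpha$, which is conceptually satisfying.

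There is no substantial obstacle here; the only subtle point is handling the boundary case $\alpha = 1$ (where the claim is trivial since $\bm{I} - (1-\alpha)\tilde{\bm{A}} = \bm{I}$) and confirming that $\alpha$ is strictly positive so that the strict inequality $(1-\alpha) < 1$ holds. Both are guaranteed by the assumption $\alpha \in (0,1]$ stated right after Eq.~(\ref{eq:PPR recursive}).
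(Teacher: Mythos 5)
Your proof is correct and takes essentially the same approach as the paper: both arguments reduce invertibility to showing that the spectral radius of $\tilde{\bm{A}}$ is at most $1$ using its row-stochasticity, so that $1$ cannot be an eigenvalue of $(1-\alpha)\tilde{\bm{A}}$. The only cosmetic difference is that the paper proves the bound $\rho(\tilde{\bm{A}})\le 1$ inline by examining the largest-magnitude coordinate of an eigenvector, whereas you invoke the standard inequality $\rho(\tilde{\bm{A}})\le\|\tilde{\bm{A}}\|_\infty$ directly and add the Neumann-series remark as a bonus.
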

\begin{proof}
The lemma holds iff the determinant $\det(\bm{I}-(1-\alpha) \Tilde{\bm{A}}) \neq 0$, i.e., $\det(\Tilde{\bm{A}} - \frac{1}{1-\alpha}\bm{I}) \neq 0$, which is the case iff $\frac{1}{1-\alpha}$ is not a eigenvalue of $\Tilde{\bm{A}}$ (by the fundamental theorem of algebra). Here we know that $\frac{1}{1-\alpha}$ is always larger than $1$ since $\alpha \in (0,1)$. Next we prove that any eigenvalue $\lambda$ of $\Tilde{\bm{A}}$ satisfies $|\lambda|\leq 1$.

For any eigenvalue $\lambda$ of $\Tilde{\bm{A}}$, let its corresponding eigenvector be $\bm{x}=(x_1,\cdots,x_N)$, i.e.,
\begin{equation*}
    \Tilde{\bm{A}} \bm{x} = \lambda \bm{x}.
\end{equation*}
Let $i$ be the subscript such that $|x_i|\geq |x_j|, \forall j \in [1,n]$ and let $\Tilde{a}_{ij}$ be the entry on the $j$-th column of the $i$-th row of matrix $\Tilde{\bm{A}}$. Focusing on the $i$-th dimension on both sides of the above equation, we have that 
\begin{equation*}
    \sum_j^n \Tilde{a}_{ij} x_j = \lambda x_i.
\end{equation*}
Taking the absolute values of both sides, we get
\begin{align*}
    |\lambda x_i| &= |\sum_j^n \Tilde{a}_{ij}x_j| \leq \sum_j^n \Tilde{a}_{ij}|x_j| \leq \sum_j^n \Tilde{a}_{ij} |x_i| = |x_i|.
\end{align*}
The first inequality follows from the fact that every element of $\Tilde{\bm{A}}$ is non-negative (either $0$ or $1$) and the last equality follows since $\sum_j^n \Tilde{a}_{ij}$=1  ($\Tilde{\bm{A}}=\bm{D}^{-1}\Tilde{A}$ by definition).  Hence, $|\lambda|\leq 1 < \frac{1}{1-\alpha}$ and the proof follows.
\end{proof}

\subsection{Lemmas Related to \texorpdfstring{\(\mathcal{L}_{\Lambda}\)}{}}\label{app:lemmas_l_lambda}
\begin{lemma}\label{lemma:convexity}
$L(\bm{\Theta};\bm{z}_i,\bm{y}_i)$ is convex w.r.t. $\bm{\Theta}$. $L_{\Lambda}(\bm{\Theta};\bm{Z},\bm{Y})$ is strongly convex w.r.t. $\bm{\Theta}$.
\end{lemma}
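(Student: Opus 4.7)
The plan is to reduce everything to univariate convexity of the base loss $\ell(\cdot;y)$ plus two routine preservation operations: composition with an affine map, and addition of a strongly convex regularizer.

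First I would handle $L(\bm{\Theta};\bm{z}_i,\bm{y}_i)=\sum_{j=1}^{c}\ell(\bm{z}_i^{T}\bm{\theta}_j;\bm{y}_{ij})$. The key observation is that the $j$-th summand depends on $\bm{\Theta}$ only through the $j$-th column $\bm{\theta}_j$, and through it only via the scalar affine map $\bm{\theta}_j\mapsto \bm{z}_i^{T}\bm{\theta}_j$. Therefore, if $\ell(\cdot;y)$ is convex as a univariate function, then $\bm{\theta}_j\mapsto \ell(\bm{z}_i^{T}\bm{\theta}_j;\bm{y}_{ij})$ is convex on $\mathbb{R}^{d}$ by the standard fact that the composition of a convex function with an affine map is convex. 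Summing over $j$ yields convexity in $\bm{\Theta}$ viewed as the collection of independent columns $(\bm{\theta}_1,\ldots,\bm{\theta}_c)$, since the columns appear in disjoint summands.

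The main (and only nontrivial) obstacle is verifying that the two admissible loss choices of Section~\ref{sec:GCON loss} are indeed convex in their first argument. For the pseudo-Huber loss this is easy: a direct computation of $\ell''(x;y)$ shows it is strictly positive everywhere, so $\ell(\cdot;y)$ is strictly convex. For the MultiLabel Soft Margin loss, which has the form of a sum of binary cross-entropies through a sigmoid link, I would either compute $\ell''(x;y)$ explicitly and check nonnegativity, or invoke the well-known fact that $x\mapsto \log(1+e^{x})$ (and hence any affine-plus-logistic loss) is convex. These computations are standard and can be referenced to Appendix~\ref{app:derivatives of l(x;y)} where the derivatives are already given; the bounds $c_1,c_2,c_3$ on $|\ell'|,|\ell''|,|\ell'''|$ are then used elsewhere in the analysis.

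For the second assertion, I would combine the convexity of $L(\bm{\Theta};\bm{z}_i,\bm{y}_i)$ with Fact~\ref{fact:addition of strongly convex}. By the formula in Eq.~(\ref{eq:L_Lambda}), restricted to our single-matrix parameterization,
\begin{equation*}
L_{\Lambda}(\bm{\Theta};\bm{Z},\bm{Y})
=\underbrace{\tfrac{1}{n_1}\sum_{i=1}^{n_1}L(\bm{\Theta};\bm{z}_i,\bm{y}_i)}_{\text{convex}}
+\tfrac{\Lambda}{2}\|\bm{\Theta}\|_F^{2}.
\end{equation*}
The first bracket is a nonnegative combination of convex functions of $\bm{\Theta}$, hence convex. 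Since $\Lambda>0$, the quadratic $\tfrac{\Lambda}{2}\|\bm{\Theta}\|_F^{2}$ is strongly convex, and by Fact~\ref{fact:addition of strongly convex} the sum is strongly convex in $\bm{\Theta}$. This completes the proof.
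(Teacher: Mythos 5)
Your proposal is correct and follows essentially the same route as the paper: both arguments reduce to the scalar convexity of $\ell(\cdot;y)$ (via $\ell''\geq 0$), lift it to $\bm{\Theta}$ using the column-wise separable structure of $L$ (the paper via the block-diagonal Hessian with PSD blocks $\bm{z}_i\bm{z}_i^{T}\ell''$, you via the affine-composition rule, which are interchangeable), and conclude strong convexity by adding $\tfrac{\Lambda}{2}\|\bm{\Theta}\|_F^{2}$ through Fact~\ref{fact:addition of strongly convex}. No gaps.
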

\begin{proof}
When $y\in \{0,1\}$, $\ell''(x;y)>0$, so $\ell(x;y)$ is convex. The Hessian matrix of $\ell(\bm{z}_i^T \bm{\theta}_j; \bm{y}_{ij})$ w.r.t. $\bm{\theta}_j$ is $z_iz_i^T \ell''(\bm{z}_i^T \bm{\theta}_j; \bm{y}_{ij})$, whose eigenvalues is non-negative. Hence, $\ell(\bm{z}_i^T \bm{\theta}_j; \bm{y}_{ij})$ is convex w.r.t. $\bm{\theta}_j$.

The Hessian matrix of $L(\bm{\Theta};\bm{z}_i,\bm{y}_i)$ w.r.t. $\bm{\Theta}$ is a block matrix with $z_iz_i^T \ell''(\bm{z}_i^T \bm{\theta}_j; \bm{y}_{ij}), j\in [1,c]$ locating on its diagonal and 0 on elsewhere. Hence, its eigenvalues is non-negative, i.e., it is convex w.r.t. $\bm{\Theta}$. Then $\sum_i^n L(\bm{\Theta};\bm{z}_i,\bm{y}_i)$ is convex w.r.t. $\bm{\Theta}$. Furthermore, since $\frac{1}{2}\Lambda \|\bm{\Theta}\|_F^2$ is strongly convex w.r.t. $\bm{\Theta}$, $L_{\Lambda}(\bm{\Theta};\bm{Z},\bm{Y})$ is strongly convex w.r.t. $\bm{\Theta}$. 
\end{proof}

\begin{lemma}\label{lemma:bijection}
    Given fixed graph dataset $D$, i.e., fixed $\bm{Z}$ and $\bm{Y}$, for any noise $\bm{B}$, the optimal solution $\bm{\Theta}_{priv}$ is unique. Moreover, the mapping from $\bm{B}$ (or $\bm{B}'$ on dataset $D'$) to $\bm{\Theta}_{priv}$ constructed by Eq.~(\ref{eq:B and Theta}) is bijective and continuous differentialable w.r.t. $\bm{\Theta}_{priv}$.
\end{lemma}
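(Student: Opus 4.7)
\textbf{Proof proposal for Lemma~\ref{lemma:bijection}.} The plan is to exploit strong convexity of $L_{priv}$ in $\bm{\Theta}$ to obtain uniqueness, and then to realize the claimed correspondence between $\bm{B}$ and $\bm{\Theta}_{priv}$ through the first-order optimality condition, closing the argument with the inverse function theorem.

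First, I would establish uniqueness of the minimizer. Lemma~\ref{lemma:convexity} shows that $L_{\Lambda}(\bm{\Theta};\bm{Z},\bm{Y})$ is strongly convex in $\bm{\Theta}$. The additional linear term $\frac{1}{n_1}\bm{B}\odot\bm{\Theta}$ in Eq.~(\ref{eq:loss_priv}) preserves convexity, and the quadratic term $\frac{1}{2}\Lambda'\|\bm{\Theta}\|_F^2$ with $\Lambda'\geq 0$ (guaranteed by Eq.~(\ref{eq:Lambda_prime})) only strengthens strong convexity via Fact~\ref{fact:addition of strongly convex}. A strongly convex $C^1$ function on $\mathbb{R}^{d\times c}$ is coercive and admits a unique global minimizer, so $\bm{\Theta}_{priv}$ is well-defined for every choice of $\bm{B}$.

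Next, I would write down the first-order condition $\nabla_{\bm{\Theta}}L_{priv}(\bm{\Theta}_{priv};\bm{Z},\bm{Y})=\bm{0}$, which (after rearranging) yields an explicit expression
\[
\bm{B} \;=\; -\sum_{i=1}^{n_1}\nabla_{\bm{\Theta}} L(\bm{\Theta}_{priv};\bm{z}_i,\bm{y}_i)\;-\;n_1(\Lambda+\Lambda')\,\bm{\Theta}_{priv},
\]
which is exactly the mapping referenced by Eq.~(\ref{eq:B and Theta}). Since the loss $\ell$ is chosen to be higher-order differentiable with bounded derivatives (Section~\ref{sec:GCON loss}), the right-hand side is a $C^1$ function of $\bm{\Theta}_{priv}$, giving continuous differentiability of the map $\bm{\Theta}_{priv}\mapsto\bm{B}$. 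Surjectivity is immediate from this formula (every $\bm{\Theta}\in\mathbb{R}^{d\times c}$ determines a valid noise matrix $\bm{B}$), and injectivity follows from the uniqueness of the minimizer established in the first step: if two different $\bm{B}_1,\bm{B}_2$ produced the same $\bm{\Theta}_{priv}$, the first-order formula would force $\bm{B}_1=\bm{B}_2$.

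Finally, to obtain continuous differentiability of the inverse, I would apply the inverse function theorem. The Jacobian of the map $\bm{\Theta}_{priv}\mapsto\bm{B}$ equals $-n_1\,\nabla^2_{\bm{\Theta}}L_{priv}(\bm{\Theta}_{priv};\bm{Z},\bm{Y})$, which is negative definite (hence nonsingular) at every point by strong convexity. Thus the inverse map $\bm{B}\mapsto\bm{\Theta}_{priv}$ is also $C^1$, completing the bijectivity and regularity claims. The same argument applies verbatim on the neighboring dataset $D'$ with $\bm{Z}'$ in place of $\bm{Z}$, since strong convexity and the regularity of $\ell$ do not depend on the particular $\bm{Z}$. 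The main subtlety I expect is simply ensuring that $\Lambda+\Lambda'>0$ consistently under both branches of Eq.~(\ref{eq:Lambda_prime}); this follows from the lower bound on $\Lambda$ imposed in Eq.~(\ref{eq:Lambda constraint}), so no additional case analysis is needed.
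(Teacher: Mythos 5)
Your proposal is correct and follows essentially the same route as the paper: strong convexity of $L_{priv}$ (via Lemma~\ref{lemma:convexity} and Fact~\ref{fact:addition of strongly convex}) gives uniqueness of the minimizer, the first-order optimality condition Eq.~(\ref{eq:B and Theta}) gives the explicit correspondence yielding surjectivity and injectivity, and smoothness of $\ell$ gives continuous differentiability. Your added invocations of coercivity and the inverse function theorem are harmless extra rigor beyond what the paper records, not a different argument.
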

\begin{proof}
As stated in Lemma~\ref{lemma:convexity}, $L_{\Lambda}(\bm{\Theta}_{priv};\bm{Z},\bm{Y})$ is strongly convex w.r.t. $\bm{\Theta}$. By Fact~\ref{fact:addition of strongly convex}, $L_{\Lambda}(\bm{\Theta}_{priv};\bm{Z},\bm{Y}) + \frac{1}{2}\Lambda'\|\bm{\Theta}\|_F^2$ is strongly convex. Hence, for a fixed $\bm{B}$, the optimal solution $\bm{\Theta}_{priv}$ is unique and the mapping from $\bm{B}$ to $\bm{\Theta}_{priv}$ is injective.

Eq.~(\ref{eq:B and Theta}) shows that for any $\bm{\Theta}_{priv}$, there exists a $\bm{B}$ for which $\bm{\Theta}_{priv}$ is the minimizer so that the mapping from $\bm{B}$ to $\bm{\Theta}_{priv}$ is surjective.

In conclusion, for any graph dataset $D$, i.e., fixed $\bm{Z}$ and $\bm{Y}$, the mapping between $\bm{\Theta}_{priv}$ and $\bm{B}$ is bijective. As defined in Section~\ref{sec:GCON pert}, $L_{\Lambda}$ has higher order and continuous derivatives. Hence, the mapping is continuous differentiable w.r.t. $\bm{\Theta}_{priv}$. Similarly, the argument holds for $\bm{B}'$ on $D'$.
\end{proof}

\subsection{Correctness of Algorithm~\ref{alg:sampling}}\label{app:Gaussian rotational symmetry}
Here we prove the correctness of our sampling method (Algorithm~\ref{alg:sampling}).
\begin{lemma}\label{lemma:Gaussian rotational symmetry}
Algorithm~\ref{alg:sampling} uniformly samples a vector $\bm{b}$ on the $d$-dimensional hypersphere with radius $a$.
\end{lemma}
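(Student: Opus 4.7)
The plan is to exploit the rotational symmetry of the standard multivariate Gaussian distribution, which is the classical route to uniform sampling on a sphere. The joint density of $(u_1,\ldots,u_d)$ generated in Line~\ref{line:sample direction begin} is
\[
f(\bm{u}) = (2\pi)^{-d/2} \exp\!\left(-\tfrac{1}{2}\sum_{i=1}^d u_i^2\right) = (2\pi)^{-d/2} \exp\!\left(-\tfrac{1}{2}\|\bm{u}\|_2^2\right),
\]
which depends on $\bm{u}$ only through its Euclidean norm. Hence $f$ is invariant under the action of any orthogonal transformation $Q \in O(d)$, i.e., $\bm{u}$ and $Q\bm{u}$ are identically distributed.

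First I would argue that the direction $\bm{v} := \bm{u}/\|\bm{u}\|_2$ is well-defined almost surely (since $\Pr[\bm{u}=\bm{0}]=0$) and is uniformly distributed on the unit hypersphere $S^{d-1}$. Concretely, for any measurable set $A \subseteq S^{d-1}$ and any $Q \in O(d)$, the rotational invariance of $f$ together with $Q\bm{u}/\|Q\bm{u}\|_2 = Q\bm{v}$ yields $\Pr[\bm{v}\in A] = \Pr[Q\bm{v}\in A] = \Pr[\bm{v}\in Q^{-1}A]$. Since the uniform distribution on $S^{d-1}$ is the unique probability measure invariant under $O(d)$, $\bm{v}$ must be uniform on $S^{d-1}$.

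Next I would observe that the radius $a$ drawn in Line~\ref{line:sample radius} is independent of $\bm{u}$ (Algorithm~\ref{alg:sampling} samples the two separately), so $\bm{b}_1 = a\bm{v}$ is a product of an independent positive scalar with a uniform direction. Therefore, conditional on $a$, the vector $\bm{b}_1$ is uniformly distributed on the hypersphere of radius $a$ in $\mathbb{R}^d$, which is exactly the claim.

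There is essentially no obstacle beyond citing the rotational invariance of the multivariate standard normal and the uniqueness of the $O(d)$-invariant probability measure on $S^{d-1}$; these are standard. The only care point is to explicitly note the zero-probability event $\bm{u}=\bm{0}$ so that the normalization is well-defined, and to use independence of the radius $a$ from the direction $\bm{v}$ so that the conditional statement about uniformity on the sphere of radius $a$ is valid.
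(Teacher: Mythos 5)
Your proof is correct and follows essentially the same route as the paper's: both rest on the rotational invariance of the standard multivariate Gaussian, from which the normalized direction is uniform on the sphere. You are slightly more careful than the paper in that you explicitly invoke the uniqueness of the $O(d)$-invariant probability measure on $S^{d-1}$, note that $\bm{u}=\bm{0}$ occurs with probability zero, and use the independence of the radius $a$ from the direction; these are welcome additions but do not change the underlying argument.
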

\begin{proof}
We have vector $\bm{u}=(u_1,u_2,\cdots,u_d) \sim \mathcal{N}\left(0, \bm{I}_d\right)$. According to the rotational symmetry of the normal distribution, for any orthogonal matrix $\bm{S}$, we can rotate $\bm{u}$ to $\bm{S}\bm{u}$, which is oriented in another direction and satisfies $\bm{S}\bm{u} \sim \mathcal{N}\left(0, \bm{I}_d\right)$, i.e., $\bm{u}$ and $\bm{S}\bm{u}$ follow the same distribution. Let $\overline{\bm{u}}= a\frac{\bm{u}}{\|\bm{u}\|}$ and $\overline{\bm{u_S}} = a\frac{\bm{S}\bm{u}}{\|\bm{S}\bm{u}\|} = a\frac{\bm{S}\bm{u}}{\|\bm{u}\|}$. $\overline{\bm{u}}$ and $\overline{\bm{u_S}}$ have the same length $a$ and follow the same distribution for any rotation matrix $\bm{S}$, i.e., each direction on the hypersphere with radius $a$ is sampled with equal probability. Hence, the lemma is validated.
\end{proof}

\subsection{Loss Functions}\label{app:derivatives of l(x;y)}
The following shows the specific expressions of $\ell(x; y)$ in the MultiLabel Soft Margin loss and the Pseudo-Huber loss. Additionally, we list their first to third derivatives, denoted as $\ell'(x;y)$, $\ell''(x;y)$, and $\ell'''(x;y)$, respectively.

\noindent\textbf{MultiLabel Soft Margin loss}
\begin{equation}\label{eq:mlsm loss}
\begin{aligned}
    \ell(x; y) = -\frac{1}{c} \left(y \log(\frac{1}{1+e^{-x}}) + (1-y)\log(\frac{e^{-x}}{1+e^{-x}}) \right) ,\\
\end{aligned}
\end{equation}

\begin{equation*}
\begin{aligned}
    \ell'(x;y) &= -\frac{1}{c} \left(y\frac{1}{1+e^x} + (1-y) \frac{-e^x}{1+e^x} \right)\\
    \ell''(x;y) &= -\frac{1}{c} \left(y\frac{-e^x}{(1+e^x)^2} + (1-y) \frac{-e^x}{(1+e^x)^2} \right)\\
    \ell'''(x;y) &= -\frac{1}{c} \left(y\frac{e^x(e^x-1)}{(1+e^x)^3} + (1-y) \frac{e^x(e^x-1)}{(1+e^x)^3} \right)\\
\end{aligned}
\end{equation*}
The supremums of the absolute values of them are
\begin{equation*}
    \sup(|\ell'|) = \frac{1}{c},\quad \sup(|\ell'|)=\frac{1}{4c},\quad \sup(|\ell'''|) = \frac{1}{6\sqrt{3}c}
\end{equation*}

\noindent\textbf{Pseudo-Huber loss}
\begin{equation}\label{eq:ph loss}
\begin{aligned}
    &\ell(x; y) = \frac{\delta_l^2}{c} \left( \sqrt{1+\frac{(x-y)^2}{\delta_l^2}}-1 \right) ,\\
\end{aligned}
\end{equation}
where the weight $\delta_l$ is a hyperparameter.
\begin{equation*}
\begin{aligned}
    \ell'(x;y) &= \frac{x-y}{c \sqrt{\frac{(x-y)^2}{\delta_l^2}+1}} \\
    \ell''(x;y) &= \frac{1}{c \cdot\left(\frac{(x-y)^2}{\delta_l^2}+1\right)^{\frac{3}{2}}} \\
    \ell'''(x;y) &= -\frac{3(x-y)}{c \delta_l^2 \cdot\left(\frac{(x-y)^2}{\delta_l^2}+1\right)^{\frac{5}{2}}}\\
\end{aligned}
\end{equation*}
The supremums of the absolute values of them are
\begin{equation*}
    \sup(|\ell'|) = \frac{\delta_l}{c},\quad \sup(|\ell'|)=\frac{1}{c},\quad \sup(|\ell'''|) = \frac{48\sqrt{5}}{125 c \delta_l}
\end{equation*}

\subsection{Proof of Lemma~\ref{lemma:Psi_Z of PPR and APPR}}\label{app:Proof of Psi_Z}
\begin{proof}
We prove Eqs. (\ref{eq:Psi Z_m}) \& (\ref{eq:Psi Z}) sequentially.

\noindent\textbf{Proof for Eq. (\ref{eq:Psi Z_m}).}
We prove Eq. (\ref{eq:Psi Z_m}) by examining the distinctions of $\Tilde{\bm{A}}$, $\bm{R}_m$, and $\bm{Z}_m$ on neighboring graphs one by one. Note that the computation of $\bm{R}_m$ and $\bm{Z}_m$ differs based on two propagation schemes: PPR ($m=\infty$) and APPR ($m\geq 0$). We begin with the case of PPR ($m=\infty$).

Without loss of generality, we assume that $D'$ is obtained from removing one edge between node $1$ and node $2$ in $D$. Let $k_i$ be the degree of node $i$ in $D$. We compute $\Tilde{\bm{A}}$, $\bm{R}_{\infty}$, and $\bm{Z}_{\infty}$ on $D$ and $\Tilde{\bm{A}}'$, $\bm{R}_{\infty}'$, and $\bm{Z}'_{\infty}$ on  $D'$.
according to Section~\ref{sec:GCON Propagation}.
When removing the edge, by definition, only the first and second rows in $\Tilde{\bm{A}}$ change. In particular, both $\Tilde{\bm{A}}_{12}$ and $\Tilde{\bm{A}}_{21}$ change from $\frac{1}{k_1+1}$ and $\frac{1}{k_2+1}$ to 0. $\Tilde{\bm{A}}_{11}$ and the entries of $k_1-1$ neighbors of node $1$ change from $\frac{1}{k_1+1}$ to $\frac{1}{k_1}$ respectively. It is similar for $\Tilde{\bm{A}}_{22}$ and the $k_2-1$ neighbors of node $2$.
Hence, we have
\begin{equation}\label{def:delta A}
    \Delta\Tilde{\bm{A}} \triangleq \Tilde{\bm{A}}' - \Tilde{\bm{A}} = \bm{v}_1 \bm{a}_1^T + \bm{v}_2 \bm{a}_2^T.
\end{equation}
Here $\bm{v}_1 = (1, 0, \dots, 0)^{T}$ and $\bm{v}_2 = (0, 1, 0, \dots, 0)^{T}$. $\bm{a}_1 \in \mathbb{R}^{n}$ is a vector whose second entry is $-\frac{1}{k_1+1}$, $k_1$ entries are $\frac{1}{k_1(k_1+1)}$, and others are 0. $\bm{a}_2$ is similar. Then,
\begin{equation}\label{eq:delta R definition}
\begin{aligned}
    \Delta\bm{R}_{\infty} \triangleq \bm{R}_{\infty}' - \bm{R}_{\infty} & = \bm{R}_{\infty}' \bm{R}_{\infty}^{-1} \bm{R}_{\infty} - \bm{R}_{\infty}' \bm{R}_{\infty}'^{-1} \bm{R}_{\infty} \\
    & = \frac{1- \alpha}{\alpha} \bm{R}_{\infty}' \Delta\Tilde{\bm{A}} \bm{R}_{\infty} ,
\end{aligned}
\end{equation}

\begin{equation}\label{eq:delta_Z definition}
\begin{aligned}
  \Delta\bm{Z}_{\infty} \triangleq \bm{Z}_{\infty}' - \bm{Z}_{\infty} & = \Delta\bm{R}_{\infty} \bm{X} \\
  & = \frac{1- \alpha}{\alpha} \bm{R}_{\infty}' \Delta\Tilde{\bm{A}} \bm{R}_{\infty} \bm{X}\\
  & = \frac{1- \alpha}{\alpha} \bm{R}_{\infty}'(\bm{v}_1 \bm{a}_1^T \bm{Z}_{\infty} + \bm{v}_2 \bm{a}_2^T \bm{Z}_{\infty}) .
\end{aligned}
\end{equation}
By Lemma~\ref{lemma:AR properties} with $p=\frac{1}{2}$, the sum of column $i$ of $\bm{R}_{\infty}'$ is at most $\max(\frac{k_i'+1}{2}, 1)$, where $k_i'$ is the degree of node $i$ in $D'$. Since $D'$ is obtained from $D$ by removing one edge between nodes 1 and 2, we have $k_1\geq 1$ and $k_2\geq 1$. Hence, we have that
\begin{equation}~\label{eq:bound R'v}
\begin{aligned}
    \|\bm{R}_{\infty}' \bm{v}_1\|_1 \leq \frac{k_1+1}{2},\quad \|\bm{R}_{\infty}' \bm{v}_2\|_1 \leq \frac{k_2+1}{2}.
\end{aligned} 
\end{equation}
Let $\bm{x}_i$, $\bm{r}_{\infty,i}$ and $\bm{z}_{\infty,i}$ denote the $i$-th rows of $\bm{X}$, $\bm{R}_{\infty}$ and $\bm{Z}_{\infty}$, respectively. Each feature $\|\bm{x}_i\|_2$ is normalized to 1. By Lemma~\ref{lemma:AR properties}, the sum of each row of $\bm{R}_{\infty}$ is 1, for all $i \in [1,n]$, we have
\begin{equation*}
    \|\bm{z}_{\infty,i}\|_2 = \|\bm{r}_{\infty,i}^T\bm{X}\|_2 = \|\bm{r}_{\infty,i}^T\|_1 \max_i\|\bm{x}_i\|_2 \leq 1 .
\end{equation*}
Then we have that
\begin{equation}\label{eq:bound aZ}
    \|\bm{a}_1^T \bm{Z}_{\infty}\|_2 
    \leq \frac{2}{k_1+1},\quad \|\bm{a}_2^T \bm{Z}_{\infty}\|_2\leq \frac{2}{k_2+1}
\end{equation}
Using Eqs.~(\ref{eq:bound R'v}) \& (\ref{eq:bound aZ}),
\begin{equation}\label{eq:sum delta z of PPR}
\begin{aligned}
    \psi(\bm{Z}_{\infty}) &\leq \|\bm{R}_{\infty}' \bm{v}_1\|_1 \|\bm{a}_1^T \bm{Z}_{\infty}\|_2 +\|\bm{R}_{\infty}' \bm{v}_2\|_1 \|\bm{a}_2^T \bm{Z}_{\infty}\|_2 \\ 
    &\leq \frac{2 (1 - \alpha)}{\alpha} \triangleq \Psi(\bm{Z}_{\infty}).
\end{aligned}
\end{equation}
Eq. (\ref{eq:sum delta z of PPR}) quantifies the changes in the $\mathcal{L}_2$ norms of rows in $\bm{Z}_{\infty}$ when modifying one edge in the PPR scheme.

Next, we consider the other case of APPR ($m\geq 0$). For $m=0$, we have 
\begin{equation*}
    \psi(\bm{Z}_0) = \psi(\bm{I} \bm{X}) = 0 = \frac{2(1-\alpha)}{\alpha}[1-(1-\alpha)^0].
\end{equation*}
For any $m\geq 1$, we define
\begin{equation*}
    \triangle \Tilde{\bm{A}}^m \triangleq \Tilde{\bm{A}}'^m - \Tilde{\bm{A}}^m.
\end{equation*}
We have
\begin{equation}\label{eq:R_m to A_m}
    \bm{R}_m' - \bm{R}_m = \alpha\sum_{i=1}^{m-1} (1-\alpha)^i\triangle \Tilde{\bm{A}}^i + (1-\alpha)^m \triangle \Tilde{\bm{A}}^m .
\end{equation}
By triangle inequality,
\begin{equation}\label{eq:psi_Z of R_m to psi_Z A^m}
\begin{aligned}
    \psi(\bm{Z}_m) \leq& \alpha\sum_{i=1}^{m-1} (1-\alpha)^i\psi(\Tilde{\bm{A}}^i \bm{X}) + (1-\alpha)^m \psi(\Tilde{\bm{A}}^m \bm{X}) .
\end{aligned}
\end{equation}
Hence, the key to calculate the bounds for $\psi(\bm{Z}_m)$ is calculating $\psi(\Tilde{\bm{A}}^m \bm{X})$. We have known that $\triangle \Tilde{\bm{A}} = \bm{v}_1\bm{a}_1^T + \bm{v}_2\bm{a}_2^T$. For any two matrices $\bm{M}_1$ and $\bm{M}_2$, the matrix $\bm{M}_1 \bm{v}_1\bm{a}_1^T \bm{M}_2$ is still in the form of a column by a row, of which the sum of norm of rows can be bounded similarly as Eqs.~(\ref{eq:bound R'v}) \& (\ref{eq:bound aZ}). Therefore, the key to calculating $\psi(\Tilde{\bm{A}}^m \bm{X})$ is splitting $\triangle \Tilde{\bm{A}}^m$ into addition of $\bm{M}_1\triangle\Tilde{\bm{A}}\bm{M}_2$, which is an addition of a column by a row.
When $m=1$,
\begin{equation*}
    \psi(\Tilde{\bm{A}} \bm{X}) \leq 2\frac{2}{k_1+1} \leq 2 .
\end{equation*}
When $m\geq 2$,
\begin{equation}\label{eq:recursively decompose A^m}
\begin{aligned}
    \triangle\Tilde{\bm{A}}^m = \frac{1}{2}[&\triangle\Tilde{\bm{A}}^{m-1}(\Tilde{\bm{A}}' + \Tilde{\bm{A}}) + (\Tilde{\bm{A}}'^{m-1} + \Tilde{\bm{A}}^{m-1})\triangle\Tilde{\bm{A}}] .\\
\end{aligned}
\end{equation}
As shown above, we can recursively decompose $\triangle\Tilde{\bm{A}}^m$ to $\triangle\Tilde{\bm{A}}^{m-1}$, $\triangle\Tilde{\bm{A}}^{m-2}$, $\cdots$ and finally an addition of $\bm{M}_1\triangle\Tilde{\bm{A}}\bm{M}_2$. Then following a similar derivation as Eqs.~(\ref{eq:bound R'v}) \& (\ref{eq:bound aZ}), we have
\begin{equation}\label{eq:psi A^m}
    \psi(\Tilde{\bm{A}}^m \bm{X}) \leq 2m. 
\end{equation}
Combining Eqs.~(\ref{eq:psi A^m}) \& (\ref{eq:psi_Z of R_m to psi_Z A^m}),
\begin{equation}\label{eq:psi_Z of R_m}
\begin{aligned}
    \psi(\bm{Z}_m) &\leq \frac{2(1-\alpha)}{\alpha}[1-(1-\alpha)^m] = \Psi(\bm{Z}_m) .
\end{aligned}
\end{equation}
Hence, Eq.~(\ref{eq:Psi Z_m}) holds for any $m\in [0,\infty]$.

\noindent\textbf{Proof for Eq.~(\ref{eq:Psi Z}).}
Let $\bm{z}_i$ and $\bm{z}_{m,i}$ denote the $i$-th rows of $\bm{Z}$ and $\bm{Z}_{m}$, respectively. We have
\begin{align*}
    \psi(\bm{Z}) = \sum_{i=1}^{n}\|\bm{z}_i' - \bm{z}_i\|_2 
    &\leq \sum_{i=1}^{n} \frac{1}{s} \sum_{j=1}^s \|\bm{z}_{m_j, i}' - \bm{z}_{m_j, i}\|_2 \\
    &= \frac{1}{s} \sum_{j=1}^s \psi(\bm{Z}_{m_j}) \leq \frac{1}{s} \sum_{j=1}^s \Psi(\bm{Z}_{m_j}),
\end{align*}
where the first inequality follows from the triangle inequality. Hence, Eq.~(\ref{eq:Psi Z}) holds.

In summary, Lemma~\ref{lemma:Psi_Z of PPR and APPR} holds.
\end{proof}

\subsection{Proof of Lemma~\ref{lemma:AR properties}}\label{app:AR properties}
\begin{proof}

For the first conclusion, following our normalization, every entry of $\Tilde{\bm{A}}$ is non-negative. Since $\Tilde{\bm{A}}^m$, $\bm{R}_m$ and $\bm{R}_{\infty}$ are polynomials w.r.t. $\Tilde{\bm{A}}$ with positive coefficients, every entry of them is non-negative. 

For the second conclusion, by definition, the sum of each row of $\Tilde{\bm{A}}$ is 1. Then the sum of each row of $\Tilde{\bm{A}}^m$ is 1. Using the fact that the sum of each row of $\bm{I}$ is 1, for each row of $\bm{R}_m$, its sum is
\begin{align*}
    1 \cdot \alpha \sum_{j=0}^{m-1} (1 - \alpha)^j + 1 \cdot (1 - \alpha)^m = 1
\end{align*}
The above equation holds for any $m>0$ including $m\rightarrow \infty$, i.e., $\bm{R}_m \rightarrow \bm{R}_{\infty}$.


Next, we prove the third conclusion.
Since every entry of $\Tilde{\bm{A}}$ is non-negative, we get that every entry of $\Tilde{\bm{A}}^m$ is non-negative. Let $\bm{v}_{m,i}$ be the $i$th column of $\Tilde{\bm{A}}^m$. Let $N_i$ be the set of neighbors of $i$ and let $k_i = |N_i|$. We show by induction on $m$ for any $m \geq 1$ that: for any $i$, $|\bm{v}_{m,i}|_1 \leq \max((k_i+1)p, 1)$. In our proof, we leverage that $\bm{A}$ is symmetric.

When $m = 1$, $\Tilde{\bm{A}}^{m} = \Tilde{\bm{A}}$.
\begin{align*}
    |\bm{v}_{m,i}|_1 &= \sum_{j \in N_i} \min(\frac{1}{k_j+1},p) + (1- k_i\min(\frac{1}{k_i+1},p))\\
    &\leq k_i p + \max(\frac{1}{k_i+1},1-k_i p)\\
    &=\max(\frac{1}{k_i+1} + k_i p,1)\\
\end{align*}
\begin{enumerate}
    \item If $\frac{1}{k_i+1} < p$, $\max(\frac{1}{k_i+1} + k_i p,1)\leq (k_i+1)p = \max((k_i+1)p, 1)$.
    \item If $\frac{1}{k_i+1} \geq p$, $\max(\frac{1}{k_i+1} + k_i p,1)\leq 1 = \max((k_i+1)p, 1)$.
\end{enumerate}
Hence, the lemma holds when $m=1$.

When $m > 1$ and $|\bm{v}_{m,i}|_1 \leq \max((k_i+1)p, 1)$.
As $\Tilde{\bm{A}}^{m+1} = \Tilde{\bm{A}}^m \cdot \Tilde{\bm{A}}$, we have
\begin{align*}
    |\bm{v}_{m+1,i}|_1 =& \sum_{j \in N_i} \min(\frac{1}{k_j+1},p) |\bm{v}_{m,j}|_1 + \\
    &(1- k_i\min(\frac{1}{k_i+1},p)) |\bm{v}_{m,i}|_1 \\
    \leq& p\sum_{j \in N_i} \min(\frac{1}{(k_j+1)p},1) \max((k_j+1)p, 1) + \\
    &(1- k_i\min(\frac{1}{k_i+1},p)) \max((k_i+1)p, 1)\\
    =& k_ip + \max(\frac{1}{k_i+1},1-k_i p)\max((k_i+1)p, 1).
\end{align*}
\begin{enumerate}
    \item If $\frac{1}{k_i+1} < p$,
    \begin{align*}
        & k_ip + \max(\frac{1}{k_i+1},1-k_i p)\max((k_i+1)p, 1) \\
        \leq& k_ip + \frac{1}{k_i+1} \cdot (k_i+1)p \\
        =& (k_i+1)p = \max((k_i+1)p, 1).
    \end{align*}
    \item If $\frac{1}{k_i+1} \geq p$,
    \begin{align*}
        & k_ip + \max(\frac{1}{k_i+1},1-k_i p)\max((k_i+1)p, 1) \\
        \leq& k_ip + (1-k_i p) \cdot 1 \\
        =& 1 = \max((k_i+1)p, 1).
    \end{align*}
\end{enumerate}
Hence, $|\bm{v}_{m+1,i}|_1 \leq \max((k_i+1)p, 1)$, i.e., the lemma holds for any $m \geq 1$.

For $\bm{R}_m$, by Eq.~(\ref{eq:APPR matrix}),
\begin{align*}
    \bm{R}_m 
    =& \alpha \sum_{i=0}^{m-1} (1 - \alpha)^i  \Tilde{\bm{A}}^i + (1-\alpha)^m \Tilde{\bm{A}}^m.
\end{align*}

Using the above conclusion of $\Tilde{\bm{A}}^m$, the sum of column $i$ of $\bm{R}_m$ satisfies
\begin{align*}
    \leq& \left(\alpha \sum_{j=0}^{m-1} (1 - \alpha)^j + (1 - \alpha)^m \right) \cdot \max((k_i+1)p, 1) =\max((k_i+1)p, 1).
\end{align*}
The above inequality holds for $\bm{R}_m$ with any $m$, so it holds for $m\rightarrow \infty$, i.e., $\bm{R}_m \rightarrow \bm{R}_{\infty}$. 

In summary, we have the sum of column $i$ of $\Tilde{\bm{A}}^m, \bm{R}_m$, or $\bm{R}_{\infty}$ is $\leq \max((k_i+1)p, 1)$.
\end{proof}

\subsection{Proof of Theorem~\ref{theorem:train DP}}\label{app:proof of Theorem}
\begin{proof}
First, we bound the ratio $\frac{g(\mathbf{\Theta}_{priv}|D)}{g(\mathbf{\Theta}_{priv}|D')}$ of the densities of $\bm{\Theta}_{priv}$ (from Eq. (\ref{eq:Theta_priv})) for neighboring graphs $D$ and $D'$. 

Let $\bm{B}=(\bm{b}_1, \bm{b}_2,\cdots,\bm{b}_c)$ and $\bm{B}'=(\bm{b}_1', \bm{b}_2', \cdots,\bm{b}_c')$ denote the noise matrices sampled by the training algorithm on $D$ and $D'$ respectively. Let $\bm{\Theta}_{priv}$ be the optimal solution on both $L_{priv}(\bm{\Theta};\bm{Z},\bm{Y})$ and $L_{priv}(\bm{\Theta};\bm{Z}',\bm{Y})$, where $\bm{Z}$ and $\bm{Z}'$ are computed on $D$ and $D'$, respectively. As $\bm{\Theta}_{priv}$ is optimal, the derivatives of the two loss functions are both 0 at $\bm{\Theta} = \bm{\Theta}_{priv}$; accordingly, we have
\begin{equation}\label{eq:B and Theta}
\begin{aligned}
    \bm{B} = -n_1\frac{\partial \left(L_{\Lambda}(\bm{\Theta}_{priv};\bm{Z},\bm{Y}) + \frac{1}{2}\Lambda' \|\bm{\Theta}_{priv}\|_F^2 \right)}{\partial \bm{\Theta}_{priv}} ,\\
    \bm{B}' = -n_1\frac{\partial \left(L_{\Lambda}(\bm{\Theta}_{priv};\bm{Z}',\bm{Y}) + \frac{1}{2}\Lambda' \|\bm{\Theta}_{priv}\|_F^2 \right)}{\partial \bm{\Theta}_{priv}}.
\end{aligned}
\end{equation}

By Lemma~\ref{lemma:bijection} that the mappings from $\bm{\Theta}_{priv}$ to $\bm{B}$ and $\bm{B}'$ in Eq.~(\ref{eq:B and Theta}) are continuously differentiable and bijective, we can use Eq.~(20.20) in \cite{billingsley2017probability} that the PDF of $\bm{\Theta}_{priv}$ on graphs $D$ and $D'$ can be expressed as
\begin{equation}\label{eq:Theta density split}
    \frac{g(\bm{\Theta}_{priv} | D)}{g(\bm{\Theta}_{priv} | D')}
  = \frac {|\det(\bm{J}(\bm{\Theta}_{priv} \mapsto \bm{B} | D) )|^{-1}} {|\det(\bm{J}(\bm{\Theta}_{priv} \mapsto \bm{B}' | D') )|^{-1}} \cdot \frac{\mu(\bm{B} | D) }{\mu(\bm{B}' | D')},
\end{equation}
where $\bm{J}(\bm{\Theta}_{priv} \mapsto \bm{B} | D)$ and $\bm{J}(\bm{\Theta}_{priv} \mapsto \bm{B}' | D')$ are the Jacobian matrices of the mappings from $\bm{\Theta}_{priv}$ to $\bm{B}$ and $\bm{B}'$ respectively, and $\mu(\cdot)$ are the probability density functions. Let $\bm{\Theta}_{priv} = (\bm{\theta}_1,\bm{\theta}_2,\cdots, \bm{\theta}_c)$ where $\bm{\theta}_j\in \mathbf{R}^{d}, j \in [1,c]$. For $\bm{\theta}_j, j\in[1, c]$, there are two cases to consider: (i) for all $j$, $\|\bm{\theta}_j\|_2 \leq c_{\theta}$ and (ii) there exists some $j$ that $\|\bm{\theta}_j\|_2 > c_{\theta}$. We consider the case (i) first.

\noindent\textbf{Case (i).} By Lemma~\ref{lemma:bound for Jacobian ratio}, 
\begin{equation}\label{eq:Jacobian ratio of Theta to B}
\begin{aligned}
\frac {|\det(\bm{J}(\bm{\Theta}_{priv} \mapsto \bm{B} | D) )|^{-1} } {|\det(\bm{J}(\bm{\Theta}_{priv} \mapsto \bm{B}' | D') )|^{-1}} 
\leq & \left(1 + \frac{(2c_2 + c_3 c_{\theta}) \Psi(\bm{Z})}{d n_1(\Lambda + \Lambda')} \right)^{cd} ,
\end{aligned}
\end{equation}
where $c_2$, $c_3$, $\Psi(\bm{Z})$, $c_{\theta}$, and $\Lambda'$ follow Eq. (\ref{eq:supremum of l(x;y)}), (\ref{eq:Psi_Z in Theorem}), (\ref{eq:c_theta}), \& (\ref{eq:Lambda_prime}), respectively.
As $\Lambda' \geq 0$, the RHS of Eq.~(\ref{eq:Jacobian ratio of Theta to B}) is bounded by $\exp{(\epsilon_{\Lambda})}$ (as in Eq.~(\ref{eq:epsilon_Lambda})). Then the remaining budget for bounding $\frac{\mu(\bm{B} | D)}{\mu(\bm{B}' | D')}$ is $\exp{(\epsilon - \epsilon_{\Lambda})}$. We compress the above ratio to reserve more budget for the latter. Thus, we set a hyperparameter $\omega$ to artificially bound $\exp{(\epsilon_{\Lambda})}$ by $\exp{(\epsilon - \omega \epsilon)}$. Then there are two cases to consider: (I) $\epsilon_{\Lambda} \leq (1-\omega)\epsilon$, which means that even if $\Lambda'=0$, the remaining budget ($\geq \omega \epsilon$) is sufficient, and (ii) $\epsilon_{\Lambda} > (1-\omega)\epsilon$, for which we set the additional parameter $\Lambda'$ by Eq. (\ref{eq:Lambda_prime}). This ensures that the ratio can be bounded by $\exp{(\epsilon - \omega \epsilon)}$, with an exact budget of $\omega \epsilon$ remaining. Therefore,
\begin{equation}\label{eq:bound for Jacobian ratio}
    \frac {|\det(\bm{J}(\bm{\Theta}_{priv} \mapsto \bm{B} | D) )|^{-1} } {|\det(\bm{J}(\bm{\Theta}_{priv} \mapsto \bm{B}' | D') )|^{-1}} 
    \leq \exp{(\min(\epsilon_{\Lambda}, \epsilon-\omega\epsilon))} .
\end{equation}

By Lemma~\ref{lemma:bound for mu ratio} and Eq.~(\ref{eq:beta}), we have
\begin{equation}\label{eq:bound for mu B}
\begin{aligned}
    \frac{\mu(\bm{B} | D)}{\mu(\bm{B}' | D')}
    &\leq \exp{(\max(\epsilon-\epsilon_{\Lambda}, \omega\epsilon))} .
\end{aligned}
\end{equation}

Combining Eq.~(\ref{eq:Theta density split}), (\ref{eq:bound for Jacobian ratio}), \& (\ref{eq:bound for mu B}), when $\|\bm{\theta}_j\|_2 \leq c_{\theta}$ for all $j\in [1,c]$, we have:
\begin{equation}\label{ineq:good case bound}
\begin{split}
    \frac{g(\bm{\Theta}_{priv} | D)}{g(\bm{\Theta}_{priv} | D')}
  &\leq \exp{(\min(\epsilon_{\Lambda}, \epsilon-\omega\epsilon))} \exp{(\max(\epsilon-\epsilon_{\Lambda}, \omega\epsilon))} \\
  &= \exp{\left(\epsilon \right)}. 
\end{split}
\end{equation}
Similarly, we have $\frac{g(\bm{\Theta}_{priv} | D')}{g(\bm{\Theta}_{priv} | D)}\leq \exp{(\epsilon)}$ when $\|\bm{\theta}_j\|_2 \leq c_{\theta}$ for all $j\in [1,c]$.

\noindent\textbf{Case (ii).} For the case (ii) that there exists some $j$ that $\|\bm{\theta}_{j}\|_2 > c_{\theta}$, we bound the probability that it happens by $\delta$ by Lemma~\ref{lemma:bad case}.

By the analysis results of the two cases, Algorithm~\ref{alg:Train} satisfies ($\epsilon,\delta$)-pDP (defined in Definition~\ref{def:pDP} in Appendix~\ref{app:pDP}). By Lemma~\ref{lemma:pDP to DP}, Algorithm~\ref{alg:Train} satisfies ($\epsilon,\delta$)-DP.
\end{proof}

\subsection{Lemma~\ref{lemma:bound for Jacobian ratio}}\label{app:proof of Jacobian ratio}
\begin{lemma}\label{lemma:bound for Jacobian ratio}
Following the setting in Theorem~\ref{theorem:train DP}, when $\|\bm{\theta}_j\| \leq c_{\theta}$ for all $j\in [1,c]$,
\begin{equation*}
\begin{aligned}
\frac {|\det(\bm{J}(\bm{\Theta}_{priv} \mapsto \bm{B} | D) )|^{-1} } {|\det(\bm{J}(\bm{\Theta}_{priv} \mapsto \bm{B}' | D') )|^{-1}} 
\leq & \left(1 + \frac{(2c_2 + c_3 c_{\theta}) \Psi(\bm{Z})}{d n_1(\Lambda + \Lambda')} \right)^{cd} ,
\end{aligned}
\end{equation*}
where $c_2$, $c_3$, $\Psi(\bm{Z})$, $c_{\theta}$, and $\Lambda'$ follow Eq. (\ref{eq:supremum of l(x;y)}), (\ref{eq:Psi_Z in Theorem}), (\ref{eq:c_theta}), \& (\ref{eq:Lambda_prime}), respectively.
\end{lemma}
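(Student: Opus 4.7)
The first move is to exploit the block‑separable structure of the mapping $\bm{\Theta}_{priv}\mapsto \bm{B}$. Since $L(\bm{\Theta};\bm{z}_i,\bm{y}_i)=\sum_{j=1}^c \ell(\bm{z}_i^T\bm{\theta}_j;\bm{y}_{ij})$ and both the regularizer $\|\bm{\Theta}\|_F^2$ and the extra term $\|\bm{\Theta}\|_F^2$ split as $\sum_j \|\bm{\theta}_j\|_2^2$, each column $\bm{b}_j$ depends only on $\bm{\theta}_j$. Hence $\bm{J}(\bm{\Theta}_{priv}\mapsto \bm{B}\mid D)$ is block diagonal with $c$ blocks of size $d\times d$, each equal to $-n_1\bm{H}_j$, where $\bm{H}_j=\frac{1}{n_1}\sum_i \bm{z}_i\bm{z}_i^T\,\ell''(\bm{z}_i^T\bm{\theta}_j;\bm{y}_{ij})+(\Lambda+\Lambda')\bm{I}$. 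Writing the analogue on $D'$ with matrix $\bm{H}_j'$, the ratio factorizes as
\begin{equation*}
\frac{|\det \bm{J}|^{-1}}{|\det \bm{J}'|^{-1}}=\prod_{j=1}^c \frac{|\det \bm{H}_j'|}{|\det \bm{H}_j|}=\prod_{j=1}^c \bigl|\det\bigl(\bm{I}+\bm{H}_j^{-1}\bm{E}_j\bigr)\bigr|,\quad \bm{E}_j:=\bm{H}_j'-\bm{H}_j.
\end{equation*}
Strong convexity of $L_\Lambda+\tfrac{1}{2}\Lambda'\|\bm{\Theta}\|_F^2$ (Lemma~\ref{lemma:convexity} plus Fact~\ref{fact:addition of strongly convex}) gives $\bm{H}_j\succeq(\Lambda+\Lambda')\bm{I}$, so the largest singular value of $\bm{H}_j^{-1}$ is at most $1/(\Lambda+\Lambda')$.

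The technical core is bounding $\sum_k \sigma_k(\bm{E}_j)$ in terms of the sensitivity $\Psi(\bm{Z})$. Set $\ell''_{ij}:=\ell''(\bm{z}_i^T\bm{\theta}_j;\bm{y}_{ij})$ and $(\ell''_{ij})':=\ell''((\bm{z}_i')^T\bm{\theta}_j;\bm{y}_{ij})$. The plan is to write each summand of $n_1\bm{E}_j$ as a sum of three rank-one pieces,
\begin{equation*}
\bm{z}_i'(\bm{z}_i')^T(\ell''_{ij})'-\bm{z}_i\bm{z}_i^T\ell''_{ij}=\bigl[(\bm{z}_i'-\bm{z}_i)(\bm{z}_i')^T+\bm{z}_i(\bm{z}_i'-\bm{z}_i)^T\bigr]\ell''_{ij}+\bm{z}_i'(\bm{z}_i')^T\bigl((\ell''_{ij})'-\ell''_{ij}\bigr),
\end{equation*}
and control each piece using $|\ell''|\le c_2$, the mean value theorem $|(\ell''_{ij})'-\ell''_{ij}|\le c_3\,|(\bm{z}_i'-\bm{z}_i)^T\bm{\theta}_j|\le c_3 c_\theta\|\bm{z}_i'-\bm{z}_i\|_2$, and the row normalization $\|\bm{z}_i\|_2,\|\bm{z}_i'\|_2\le 1$ (which follows from $\|\bm{x}_i\|_2=1$ and Lemma~\ref{lemma:AR properties}). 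A rank-one matrix $\bm{u}\bm{v}^T$ has a single nonzero singular value $\|\bm{u}\|_2\|\bm{v}\|_2$, so invoking sub-additivity of singular value sums (Horn--Johnson Cor.~3.4.3) termwise yields
\begin{equation*}
\sum_{k=1}^d \sigma_k(\bm{E}_j)\le \frac{1}{n_1}\sum_{i=1}^{n_1}\bigl(2c_2+c_3 c_\theta\bigr)\|\bm{z}_i'-\bm{z}_i\|_2\le \frac{(2c_2+c_3 c_\theta)\Psi(\bm{Z})}{n_1}.
\end{equation*}

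Finally, I combine the pieces: the singular-value product inequality gives $\sigma_k(\bm{H}_j^{-1}\bm{E}_j)\le \sigma_k(\bm{E}_j)/(\Lambda+\Lambda')$, Hadamard/Weyl yields $|\det(\bm{I}+\bm{H}_j^{-1}\bm{E}_j)|\le \prod_k(1+\sigma_k(\bm{H}_j^{-1}\bm{E}_j))$, and AM--GM turns the product into $(1+\tfrac{1}{d}\sum_k \sigma_k(\bm{H}_j^{-1}\bm{E}_j))^d$. Plugging in the bound above gives a per-block factor of $\bigl(1+\tfrac{(2c_2+c_3c_\theta)\Psi(\bm{Z})}{dn_1(\Lambda+\Lambda')}\bigr)^d$, and taking the product over $j=1,\dots,c$ produces the exponent $cd$ in the lemma's statement. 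The main obstacle I expect is the singular-value bookkeeping in the rank-one decomposition: the proof must be careful that (i) the sub-additivity of $\sum_k \sigma_k$ is applied correctly across the three rank-one terms per index $i$ (not the less favorable $\sigma_1$), (ii) the mean-value bound for $(\ell''_{ij})'-\ell''_{ij}$ uses the case hypothesis $\|\bm{\theta}_j\|_2\le c_\theta$ crucially, and (iii) the telescoping into $\|\bm{z}_i'-\bm{z}_i\|_2$ is exact so that summing over $i$ delivers $\Psi(\bm{Z})$ as in Definition~\ref{def:psi_Z} rather than a looser quantity.
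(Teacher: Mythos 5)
Your proposal is correct and follows essentially the same route as the paper's proof: block-diagonal factorization over the $c$ columns, the determinant ratio written as $|\det(\bm{I}+\bm{H}_j^{-1}\bm{E}_j)|$, the identical rank-one splitting of the perturbation into a $2c_2$ piece and a $c_3 c_\theta$ piece controlled by the Lipschitz bound on $\ell''$ under $\|\bm{\theta}_j\|_2\le c_\theta$, sub-additivity of singular-value sums (Horn--Johnson Cor.~3.4.3) to reach $(2c_2+c_3c_\theta)\Psi(\bm{Z})$, and the AM--GM step to produce the exponent $d$ per block. The only difference is a harmless rescaling of the block by $1/n_1$; the three cautionary points you flag at the end are exactly the places where the paper's argument does the corresponding bookkeeping.
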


\begin{proof}
Observe from Eq.~(\ref{eq:loss function component}) that the derivatives of $\ell(\bm{\Theta};\bm{z}_i,\bm{y}_i)$ with respect to any of $\bm{\theta}_1, \bm{\theta}_2,\cdots, \bm{\theta}_c$ are independent of each other. Therefore, $\bm{J}(\bm{\Theta}_{priv} \mapsto \bm{B} | D)$ and $\bm{J}(\bm{\Theta}_{priv} \mapsto \bm{B}' | D')$ are size $dc\times dc$ block diagonal matrices with size $d\times d$ Jacobian matrices $\bm{J}(\bm{\theta}_j \mapsto \bm{b}_j | D)$ and $\bm{J}(\bm{\theta}_j \mapsto \bm{b}_j' | D')$, $j \in [1,c]$, lying on their diagonals, respectively. Hence,
\begin{equation}\label{eq:Theta to product of theta}
\begin{aligned}
    \frac {|\det(\bm{J}(\bm{\Theta}_{priv} \mapsto \bm{B} | D) )|^{-1} } {|\det(\bm{J}(\bm{\Theta}_{priv} \mapsto \bm{B}' | D') )|^{-1}} = & \prod_{j=1}^{c} \frac{|\det(\bm{J}(\bm{\theta}_j \mapsto \bm{b}_j | D)|^{-1} }{|\det(\bm{J}(\bm{\theta}_j \mapsto \bm{b}_j' | D')|^{-1} }
\end{aligned}
\end{equation}

Let $\ell'(x;y)$, $\ell''(x;y)$, and $\ell'''(x;y)$ be the first, second, and third-order derivatives of $\ell(x;y)$. The specific forms of these derivatives are shown in Appendix~\ref{app:derivatives of l(x;y)}. According to Eq.~(\ref{eq:B and Theta}), $\bm{\theta}_j \mapsto \bm{b}_j | D$ and $\bm{\theta}_j \mapsto \bm{b}_j' | D'$ are constructed by the following equations.
\begin{equation}\label{eq:bj}
\begin{aligned}
    \bm{b}_j & = - \sum_{i=1}^{n_1} \bm{z}_i \ell'(\bm{z}_i^T \bm{\theta}_j;\bm{y}_{ij}) - n_1 (\Lambda+\Lambda') \bm{\theta}_{j} ,\\
  \bm{b}'_j & = - \sum_{i=1}^{n_1} \bm{z}_i' \ell'(\bm{z}_i'^T \bm{\theta}_j;\bm{y}_{ij}) - n_1 (\Lambda+\Lambda') \bm{\theta}_{j} ,
\end{aligned}
\end{equation}

Without loss of generality, we first consider $\bm{\theta}_1$. Define
\begin{align}
  \bm{B}_1 & \triangleq \sum_{i=1}^{n_1} \bm{z}_i \bm{z}_i^T \ell''(\bm{z}_i^T\bm{\theta}_1;\bm{y}_{i1}) + n_1 (\Lambda+\Lambda') \bm{I}_d ,\label{eq:B_1}\\
  \bm{E}_1 & \triangleq - \sum_{i=1}^{n_1}\bm{z}_i \bm{z}_i^T \ell''(\bm{z}_i^T\bm{\theta}_1;\bm{y}_{i1}) + \sum_{i=1}^{n_1} \bm{z}_i' \bm{z}_i'^T \ell''(\bm{z}_i'^T\bm{\theta}_1;\bm{y}_{i1}) ,\label{eq:E_1}
\end{align}
where $\bm{I}_d \in \mathbb{R}^{d\times d}$ is an identity matrix.
Then $\bm{J}(\bm{\theta}_1 \mapsto \bm{b}_1 | D) = - \bm{B}_1 $ and $\bm{J}(\bm{\theta}_1 \mapsto \bm{b}_1' | D') = - (\bm{B}_1 + \bm{E}_1)$. Let $\sigma_i(\cdot)$ denote the $i$-th singular value of matrix $\cdot$, ordered non-increasingly, i.e., $\sigma_1\geq \sigma_2 \geq \cdots$. We have
\begin{align}
\frac{|\det(\bm{J}(\bm{\theta}_1 \mapsto \bm{b}_1 | D)|^{-1} }{|\det(\bm{J}(\bm{\theta}_1 \mapsto \bm{b}_1' | D')|^{-1} }
&= \left|\frac{\det(\bm{B}_1 + \bm{E}_1)}{\det(\bm{B}_1)} \right| \nonumber \\
&= |\det(\bm{B}_1^{-1})\det(\bm{B}_1+\bm{E}_1)| \nonumber \\
&= |\det(I+\bm{B}_1^{-1} \bm{E}_1)| \nonumber \\
&= \prod_i^d \sigma_i(I+\bm{B}_1^{-1} \bm{E}_1) \nonumber \\
&\leq \prod_i^d \left(1+ \sigma_i(\bm{B}_1^{-1} \bm{E}_1) \right) \nonumber \\
\leq& \left(1+\frac{1}{d}\sum_i^{d} \sigma_i(\mathbf{B}_1^{-1} \mathbf{E}_1) \right)^d \label{ineq:GM-AM}\\
\leq& \left(1+\frac{1}{d}\sum_i^{d} \sigma_1(\mathbf{B}_1^{-1}) \sigma_i(\mathbf{E}_1) \right)^d \label{ineq:singular values of product}
\end{align}
where the inequality~\ref{ineq:GM-AM} uses the GM-AM inequality (geometric mean $\leq$ arithmetic mean), the inequality~\ref{ineq:singular values of product} uses Lemma~\ref{lemma:singular values of product}. Since $\bm{B}_1, \bm{E}_1$, and $\bm{B}_1^{-1}$ are Hermite matrices, their ordered singular values equal their absolute eigenvalues, i.e., $\sigma_i=|\lambda_i|, \forall i \in [1,d]$. Observe the definition equation of $\bm{B}_1$ that $\lambda_i(\bm{B}_1) \geq n_1(\Lambda + \Lambda'), \forall i \in [1,d]$. Equivalently, $\forall i \in [1,d]$, 
\begin{equation}\label{eq:singular values of B}
    \sigma_i(\bm{B}^{-1}_1)=\lambda_i(\bm{B}^{-1}_1)\leq \frac{1}{n_1(\Lambda + \Lambda')}.
\end{equation}
Next we analyze $\sigma_i(\mathbf{E}_1)$. We split $\bm{E}_1 = \bm{E}_1^{(1)} + \bm{E}_1^{(2)}$ into two terms where
\begin{equation}\label{eq:E_1(1)E_1(2)}
\begin{aligned}
    \bm{E}_1^{(1)} & \triangleq \sum_{i=1}^{n_1} (\bm{z}_i' \bm{z}_i'^T - \bm{z}_i \bm{z}_i^T) \ell''(\bm{z}_i^T \bm{\theta}_1;\bm{y}_{i1}) ,\\
    \bm{E}_1^{(2)} & \triangleq \sum_{i=1}^{n_1} \bm{z}_i' \bm{z}_i'^T[ \ell''(\bm{z}_i'^T \bm{\theta}_1;\bm{y}_{i1}) - \ell''(\bm{z}_i^T \bm{\theta}_1;\bm{y}_{i1})] .
\end{aligned}
\end{equation}
For $\bm{E}_1^{(1)}$, we have
\begin{equation*}
\begin{aligned}
    \bm{E}_1^{(1)} &= \sum_{i=1}^{n_1} [(\bm{z}_i' - \bm{z}_i) \bm{z}_i'^T \ell''(\bm{z}_i^T \bm{\theta}_1;\bm{y}_{i1}) + \bm{z}_i (\bm{z}_i'^T - \bm{z}_i^T) \ell''(\bm{z}_i^T \bm{\theta}_1;\bm{y}_{i1})]
\end{aligned}
\end{equation*}
Observe that the rank of $(\bm{z}_i' - \bm{z}) \bm{z}_i'^T \ell''(\bm{z}_i^T \bm{\theta}_1;\bm{y}_{i1})$ is 1 and its single non-zero singular value is 
\begin{align*}
    \sigma &= \max_{\|\bm{x}\|_2=1} \| (\bm{z}_i' - \bm{z}) \bm{z}_i'^T \ell''(\bm{z}_i^T \bm{\theta}_1;\bm{y}_{i1}) \bm{x}\|_2 \\
    &=|\ell''(\bm{z}_i^T \bm{\theta}_1;\bm{y}_{i1})| \max_{\|\bm{x}\|_2=1} \sqrt{\bm{x}^T \bm{z}_i' (\bm{z}_i' - \bm{z}_i)^T \cdot (\bm{z}_i' - \bm{z}_i) \bm{z}_i'^T \bm{x}} \\
    &=|\ell''(\bm{z}_i^T \bm{\theta}_1;\bm{y}_{i1})| \|\bm{z}_i' - \bm{z}_i\|_2 \|\bm{z}_i'\|_2 \\
    &\leq c_2\|\bm{z}_i' - \bm{z}\|_2
\end{align*}
Similarly, $\bm{z}_i (\bm{z}_i'^T - \bm{z}_i^T) \ell''(\bm{z}_i^T \bm{\theta}_1;\bm{y}_{i1})$ has only one non-zero singular value that can be bounded by $c_2\|\bm{z}_i' - \bm{z}\|_2$. 
According to Theorem~\ref{theorem:sum singular values}, the sum of singular values of $\bm{E}_1^{(1)}$ can be bounded by
\begin{equation}\label{eq:singular values of E_1^(1)}
\begin{aligned}
    \sum_j^d \sigma_j(\bm{E}_1^{(1)}) \leq \sum_i^{n_1} 2c_2\|\bm{z}_i' - \bm{z}_i\|_2 \leq 2c_2 \Psi(\bm{Z})
\end{aligned}
\end{equation}
For $\bm{E}_1^{(2)}$, we rewrite it as $\bm{E}_1^{(2)} = \sum_{i=1}^{n_1} \bm{E}_{1, i}^{(2)}$ where
\begin{equation*}
\bm{E}_{1,i}^{(2)} \triangleq \bm{z}_i' \bm{z}_i'^T[\ell''(\bm{z}_i'^T \bm{\theta}_1;\bm{y}_{i1}) - \ell''(\bm{z}_i^T\bm{\theta}_1;\bm{y}_{i1})].
\end{equation*}
observe that $\bm{z}_i' \bm{z}_i'^T$ is of rank 1 and $\ell''$ is a scalar, so $E_{1,i}^{(2)}$ has rank $\leq 1$. $\ell''$ is Lipschitz continuous and let $c_3$ be its Lipschitz constant. When $\|\bm{\theta}_1\|_2 \leq c_{\theta}$ where $c_{\theta}$ is a constant introduced later in Eq.~(\ref{eq:c_theta}), we can bound the single non-zero singular value of $\bm{E}_{1,i}^{(2)}$ by 
\begin{align*}
    \sigma(\bm{E}_{1,i}^{(2)}) &= |\ell''(\bm{z}_i'^T \bm{\theta}_1;\bm{y}_{i1}) - \ell''(\bm{z}_i^T \bm{\theta}_1;\bm{y}_{i1})| \|\bm{z}_i'^T\|_2 \|\bm{z}_i'\|_2\\
    &\leq |\ell''(\bm{z}_i'^T \bm{\theta}_1;\bm{y}_{i1}) - \ell''(\bm{z}_i^T \bm{\theta}_1;\bm{y}_{i1})|  \\
    & \leq c_3 \|\bm{\theta}_1\|_2 \|\bm{z}_i' - \bm{z}_i\|_2 \\
    &\leq c_3 c_{\theta} \| \bm{z}_i' - \bm{z}_i \|_2. 
\end{align*}
Hence,
\begin{equation}\label{eq:singular values of E_1^(2)}
\begin{aligned}
    \sum_l^d \sigma_l(\bm{E}_1^{(2)}) 
    \leq \sum_i^{n_1} c_3 c_{\theta} \|\bm{z}_i' - \bm{z}_i \|_2 
    \leq c_3 c_{\theta} \Psi(\bm{Z}) ,
\end{aligned}    
\end{equation}
where the first inequality applies Theorem~\ref{theorem:sum singular values}. Combining Eq.~(\ref{eq:singular values of E_1^(1)}) \& (\ref{eq:singular values of E_1^(2)}), we get
\begin{equation}\label{eq:singular values of E}
\begin{aligned}
    \sum_i^d \sigma_i(\bm{E}_1) &\leq \sum_j^d \sigma_j(\bm{E}_1^{(1)}) + \sum_l^d \sigma_l(\bm{E}_1^{(2)}) \\
    &\leq (2c_2 + c_3 c_{\theta}) \Psi(\bm{Z}) ,\\
\end{aligned}
\end{equation}
where the first inequality applies Corollary 3.4.3 in \cite{horn1991topics}.

Combining Eq.~(\ref{ineq:singular values of product}), (\ref{eq:singular values of B}), \& (\ref{eq:singular values of E})
\begin{equation}\label{eq:Jacobian ratio of theta to b}
    \frac{|\det(\bm{J}(\bm{\theta}_1 \mapsto \bm{b}_1 | D)|^{-1} }{|\det(\bm{J}(\bm{\theta}_1 \mapsto \bm{b}_1' | D')|^{-1} } 
    \leq \left(1 + \frac{(2c_2 + c_3 c_{\theta}) \Psi(\bm{Z})}{d n_1(\Lambda + \Lambda')} \right)^{d}
\end{equation}
The upper bound shown above also applies to any other 
\\ \noindent$\frac{|\det(\bm{J}(\bm{\theta}_j \mapsto \bm{b}_j | D)|^{-1} }{|\det(\bm{J}(\bm{\theta}_j \mapsto \bm{b}_j' | D')|^{-1} }$, $j \in [1,c]$. By Eq.~(\ref{eq:Theta to product of theta}) \& (\ref{eq:Jacobian ratio of theta to b}), we conclude that Lemma~\ref{lemma:bound for Jacobian ratio} holds.

\end{proof}

\subsection{Lemma~\ref{lemma:bound for mu ratio}}\label{app:proof of mu ratio}
\begin{lemma}\label{lemma:bound for mu ratio}
Following the setting in Theorem~\ref{theorem:train DP}, when $\|\bm{\theta}_j\| \leq c_{\theta}$ for all $j\in [1,c]$,
\begin{equation*}
\begin{aligned}
    \frac{\mu(\bm{B} | D)}{\mu(\bm{B}' | D')}
    &\leq \exp{\left( c (c_1+c_2 c_{\theta})\Psi(\bm{Z}) \beta \right)} ,
\end{aligned}
\end{equation*}
where $c_1$ and $\beta$ follow Eq.~(\ref{eq:supremum of l(x;y)}) \& (\ref{eq:beta}), respectively.
\end{lemma}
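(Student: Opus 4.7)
\textbf{Proof sketch for Lemma~\ref{lemma:bound for mu ratio}.}

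My plan is to first reduce the ratio $\mu(\bm{B}\mid D)/\mu(\bm{B}'\mid D')$ to the $c$ per-column ratios $\mu(\bm{b}_j\mid D)/\mu(\bm{b}_j'\mid D')$ using the independence of columns guaranteed by Algorithm~\ref{alg:sampling}, then derive an explicit form for each column PDF on $\mathbb{R}^d$ from the radial PDF in Eq.~(\ref{eq:norm b distribution}) together with the uniform direction. Because the radius is Erlang with parameter $\beta$ and the direction is uniform on the unit sphere, a change of variables to spherical coordinates converts the surface-area Jacobian $r^{d-1}$ into the $r^{d-1}$ appearing in the numerator of $\gamma$, so the resulting density on $\mathbb{R}^d$ has the rotationally symmetric form proportional to $\exp(-\beta\|\bm{b}_j\|_2)$. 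Consequently every per-column ratio collapses to $\exp\!\bigl(\beta(\|\bm{b}_j'\|_2-\|\bm{b}_j\|_2)\bigr)$, and by the reverse triangle inequality it is at most $\exp(\beta\|\bm{b}_j'-\bm{b}_j\|_2)$.

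The next step is to bound $\|\bm{b}_j'-\bm{b}_j\|_2$. Using Eq.~(\ref{eq:bj}), the regularization term $n_1(\Lambda+\Lambda')\bm{\theta}_j$ cancels since we evaluate both optimality conditions at the same $\bm{\theta}_j$, leaving
\[
\bm{b}_j-\bm{b}_j' \;=\; \sum_{i=1}^{n_1}\bigl[\bm{z}_i'\,\ell'(\bm{z}_i'^{T}\bm{\theta}_j;\bm{y}_{ij}) - \bm{z}_i\,\ell'(\bm{z}_i^{T}\bm{\theta}_j;\bm{y}_{ij})\bigr].
\]
I would split each summand as $(\bm{z}_i'-\bm{z}_i)\,\ell'(\bm{z}_i^{T}\bm{\theta}_j;\bm{y}_{ij}) + \bm{z}_i'\,\bigl[\ell'(\bm{z}_i'^{T}\bm{\theta}_j;\bm{y}_{ij})-\ell'(\bm{z}_i^{T}\bm{\theta}_j;\bm{y}_{ij})\bigr]$. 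The first piece has norm at most $c_1\|\bm{z}_i'-\bm{z}_i\|_2$ by Eq.~(\ref{eq:supremum of l(x;y)}). For the second piece, the mean value theorem applied to $\ell'(\cdot;\bm{y}_{ij})$ together with $|\ell''|\le c_2$ and the assumption $\|\bm{\theta}_j\|_2\le c_\theta$ gives $|\ell'(\bm{z}_i'^{T}\bm{\theta}_j;\bm{y}_{ij})-\ell'(\bm{z}_i^{T}\bm{\theta}_j;\bm{y}_{ij})|\le c_2 c_\theta \|\bm{z}_i'-\bm{z}_i\|_2$, and $\|\bm{z}_i'\|_2\le 1$ follows from Lemma~\ref{lemma:AR properties} together with the per-feature normalization of Section~\ref{sec:Network Structure}. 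Summing over $i$ and invoking Definition~\ref{def:psi_Z} with Lemma~\ref{lemma:Psi_Z of PPR and APPR} yields $\|\bm{b}_j-\bm{b}_j'\|_2\le (c_1+c_2 c_\theta)\,\Psi(\bm{Z})$.

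Finally, multiplying the $c$ identical per-column bounds gives
\[
\frac{\mu(\bm{B}\mid D)}{\mu(\bm{B}'\mid D')} \;\le\; \exp\!\bigl(c\,(c_1+c_2 c_\theta)\,\Psi(\bm{Z})\,\beta\bigr),
\]
as claimed. The main obstacle is the second estimate above: one must be careful that the cancellation of the $(\Lambda+\Lambda')\bm{\theta}_j$ term is legitimate (it relies on evaluating both mappings at the same argument $\bm{\theta}_j=\bm{\Theta}_{priv}$), and that the Lipschitz bound on $\ell'$ is valid under the norm constraint $\|\bm{\theta}_j\|_2\le c_\theta$; outside this event the bound fails and must be handled by the bad-event argument in Case (ii) of Theorem~\ref{theorem:train DP}, not here.
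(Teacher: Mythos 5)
Your proposal is correct and follows essentially the same route as the paper's proof: factor the ratio into per-column ratios by independence, observe that the $r^{d-1}$ surface-area factor cancels so each column density is proportional to $e^{-\beta\|\bm{b}_j\|_2}$, apply the reverse triangle inequality, and bound $\|\bm{b}_j-\bm{b}_j'\|_2$ via the same two-term decomposition with $c_1$ and the Lipschitz bound $c_2 c_\theta$ summed against $\Psi(\bm{Z})$. Your closing caveats (the cancellation of the regularization term and the deferral of the $\|\bm{\theta}_j\|_2>c_\theta$ event to Case (ii)) match the paper's treatment exactly.
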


\begin{proof}
Since $\bm{b}_1, \bm{b}_2, \cdots, \bm{b}_c$ are sampled independently and $\|\bm{\theta}_j\|_2 \leq c_{\theta}$ for all $j\in[1,c]$ in the case (i), we have
\begin{equation}\label{eq:mu B to mu b}
\begin{aligned}
    \frac{\mu(\bm{B} | D)}{\mu(\bm{B}' | D')}
    &= \prod^c_j \frac{\mu(\bm{b}_j | D)}{\mu(\bm{b}_j' | D')} .
\end{aligned}
\end{equation}
According to Eq.~(\ref{eq:bj}), for any $j \in [1,C]$,
\begin{align*}
\|\bm{b}_j' - \bm{b}_j\|_2 =& \|\sum_{i=1}^{n_1} \Big(\bm{z}_i' \ell'(\bm{z}_i'^T\bm{\theta}_j;\bm{y}_{ij}) - \bm{z}_i \ell'(\bm{z}_i^T \bm{\theta}_j;\bm{y}_{ij})\Big) \|_2\\
=& \| \sum_{i=1}^{n_1} \Big( \bm{z}_i'(\ell'(\bm{z}_i'^T \bm{\theta}_j;\bm{y}_{ij}) - \ell'(\bm{z}_i^T \bm{\theta}_j;\bm{y}_{ij})) \\
&+ (\bm{z}_i' - \bm{z}_i)\ell'(\bm{z}_i^T \bm{\theta}_j;\bm{y}_{ij})\Big) \|_2  \\
\leq& \sum_{i=1}^{n_1} [ c_2 c_{\theta} \|\bm{z}_i' - \bm{z}_i\|_2 + c_1 \|\bm{z}_i' - \bm{z}_i\|_2 ] \\
\leq& (c_1+c_2 c_{\theta})\Psi(\bm{Z}),
\end{align*}
As $\bm{b}_j$ and $\bm{b}_j'$ are sampled with probability proportional to 

\noindent $\|\bm{b}\|_2^{d-1} \exp{(-\beta \|\bm{b}\|_2)}$, we have
\begin{equation}\label{eq:bound of mu bj}
\begin{aligned}
    \frac{\mu(\bm{b}_j | D)}{\mu(\bm{b}_j' | D')} 
    & = \frac{\mu(\|\bm{b}_j\|_2 | D)/\text{surf}(\|\bm{b}_j\|_2) }{\mu(\|\bm{b}_j'\|_2 | D')/\text{surf}(\|\bm{b}_j'\|_2)} \\
    &= \frac {\|\bm{b}_j\|_2^{d-1} e^{-\beta \| \bm{b}_j\|_2}/\text{surf}(\|\bm{b}_j\|_2)} {\|\bm{b}_j'\|_2^{d-1} e^{-\beta \| \bm{b}_j'\|_2}/\text{surf}(\|\bm{b}_j'\|_2)}\\
    &= \exp{\left(\beta(\| \bm{b}_j'\|_2 - \|\bm{b}_j\|_2)\right)} \\
    &\leq \exp{\left(\beta \|\bm{b}_j' - \bm{b}_j\|_2\right)} \\
    &\leq \exp{\left(\beta (c_1+c_2 c_{\theta})\Psi(\bm{Z}) \right)}
\end{aligned}
\end{equation}
where $\text{surf}(r)$ denotes the surface area of the sphere in $d$ dimensions with radius $r$. By Eq.~(\ref{eq:mu B to mu b}) \& (\ref{eq:bound of mu bj}), we conclude that Lemma~\ref{lemma:bound for mu ratio} holds.
\end{proof}

\subsection{Lemma~\ref{lemma:bad case}}\label{app:bad case}
\begin{lemma}\label{lemma:bad case}
Following the setting in Theorem~\ref{theorem:train DP}, we have
\begin{equation}\label{ineq:bad case bound}
    \Pr[\cup_{j=1}^{c} \mathds{1}(\|\bm{\theta}_{j}\|_2 > c_{\theta})=1] \leq \sum_{j=1}^{c} \frac{\delta}{c} = \delta ,
\end{equation}
where $\mathds{1}$ is an indicator function that $\mathds{1}(\cdot)=1$ denotes the occurrence of the event $\cdot$, and $\mathds{1}(\cdot)=0$ otherwise.
\end{lemma}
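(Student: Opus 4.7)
The plan is to show that $\|\bm{\theta}_j\|_2 > c_{\theta}$ forces the sampled noise vector $\bm{b}_j$ to have an atypically large norm, and then to use the Erlang tail bound hidden inside the definition of $c_{sf}$ (Eq.~\eqref{eq:find minimal c_sf}) together with a union bound over $j$.

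First I would use the first-order optimality condition for $\bm{\Theta}_{priv}$ applied to the $j$-th column. Differentiating $L_{priv}$ in Eq.~\eqref{eq:loss_priv} with respect to $\bm{\theta}_j$ and setting the gradient to zero gives, in parallel with Eq.~\eqref{eq:B and Theta}, the identity
\begin{equation*}
n_1(\Lambda+\Lambda')\,\bm{\theta}_j = -\bm{b}_j - \sum_{i=1}^{n_1} \bm{z}_i\,\ell'(\bm{z}_i^T\bm{\theta}_j;\bm{y}_{ij}).
\end{equation*}
Taking norms, using $\|\bm{z}_i\|_2\leq 1$ (which follows from the row normalization of $\bm{X}$ together with the row-stochasticity of each $\bm{R}_{m_i}$ established in Lemma~\ref{lemma:AR properties}) and $|\ell'|\leq c_1$, I obtain $n_1(\Lambda+\Lambda')\|\bm{\theta}_j\|_2 \leq \|\bm{b}_j\|_2 + n_1 c_1$, and hence $\|\bm{\theta}_j\|_2 \leq (\|\bm{b}_j\|_2 + n_1 c_1)/(n_1\Lambda)$ after using $\Lambda'\geq 0$.

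Second, I would show by direct algebra that the choice of $c_{\theta}$ in Eq.~\eqref{eq:c_theta} is precisely calibrated so that $\|\bm{b}_j\|_2 \leq c_{sf}/\beta$ implies $\|\bm{\theta}_j\|_2 \leq c_{\theta}$. Substituting the lower bound $\beta \geq \omega\epsilon/(c(c_1+c_2 c_{\theta})\Psi(\bm{Z}))$ from Eq.~\eqref{eq:beta} yields $c_{sf}/\beta \leq c_{sf}\,c(c_1+c_2 c_{\theta})\Psi(\bm{Z})/(\omega\epsilon)$; plugging this into the upper bound on $\|\bm{\theta}_j\|_2$ derived in the previous step and solving the resulting inequality for $c_{\theta}$ gives exactly Eq.~\eqref{eq:c_theta}, where the lower bound on $\Lambda$ from Eq.~\eqref{eq:Lambda constraint} guarantees strict positivity of the denominator $n_1\omega\epsilon\Lambda - cc_2\Psi(\bm{Z})c_{sf}$. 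Taking the contrapositive, $\|\bm{\theta}_j\|_2 > c_{\theta}$ implies $\|\bm{b}_j\|_2 > c_{sf}/\beta$.

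Third, I would convert this into a probability bound using the sampling distribution of $\bm{b}_j$. Since $\|\bm{b}_j\|_2$ is drawn with the PDF in Eq.~\eqref{eq:norm b distribution} (as guaranteed by Algorithm~\ref{alg:sampling} and Lemma~\ref{lemma:Gaussian rotational symmetry}), the variable $\beta\|\bm{b}_j\|_2$ is Erlang with shape $d$ and rate $1$, so
\begin{equation*}
\Pr\!\left[\|\bm{b}_j\|_2 > c_{sf}/\beta\right] = 1 - \int_0^{c_{sf}} \frac{x^{d-1} e^{-x}}{(d-1)!}\,dx \leq \frac{\delta}{c}
\end{equation*}
by the defining property of $c_{sf}$ in Eq.~\eqref{eq:find minimal c_sf}. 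A union bound over $j=1,\ldots,c$ then yields the lemma.

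\textbf{Main obstacle.} The conceptually straightforward steps are the optimality condition and the Erlang tail bound; the delicate part is step two, verifying that the rather opaque-looking formula for $c_{\theta}$ in Eq.~\eqref{eq:c_theta} is exactly the fixed point that makes the implication $\|\bm{b}_j\|_2\leq c_{sf}/\beta \Rightarrow \|\bm{\theta}_j\|_2\leq c_{\theta}$ go through, because $c_{\theta}$ appears on both sides via $\beta$. I would make the calculation robust by rewriting the target inequality as $c_1(c_{sf}c\,\Psi(\bm{Z}) + n_1\omega\epsilon) \leq c_{\theta}(n_1\Lambda\omega\epsilon - c_{sf}cc_2\Psi(\bm{Z}))$ and reading off Eq.~\eqref{eq:c_theta} directly, with the denominator's positivity provided by Eq.~\eqref{eq:Lambda constraint}.
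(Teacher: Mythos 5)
Your proposal is correct and follows essentially the same route as the paper's proof: the stationarity identity for $\bm{b}_j$ with $\|\bm{z}_i\|_2\leq 1$, $|\ell'|\leq c_1$ and $\Lambda'\geq 0$, the algebraic verification that $c_{\theta}$, $\beta$, and $c_{sf}$ are jointly calibrated so that $\|\bm{\theta}_j\|_2>c_{\theta}$ forces $\|\bm{b}_j\|_2>c_{sf}/\beta$, the Erlang tail bound from the definition of $c_{sf}$, and a union bound over $j$. The only cosmetic difference is that you argue via the contrapositive (small $\|\bm{b}_j\|_2$ implies small $\|\bm{\theta}_j\|_2$) while the paper proceeds in the forward direction, but the underlying computation is identical.
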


\begin{proof}
By the union bound, 
\begin{equation}\label{ineq:union bound}
    \Pr[\cup_{j=1}^{c} \mathds{1}(\|\bm{\theta}_{j}\| > c_{\theta})=1] 
    \leq \sum_{j=1}^{c} \Pr[\|\bm{\theta}_{j}\| > c_{\theta}] .
\end{equation}
Since $\|\bm{z}_i\|_2 \leq 1$ and $|\ell'| \leq c_1$, when $\|\bm{\theta}_{j}\|_2 > c_{\theta}$, by Eq. (\ref{eq:bj}),
\begin{equation*}
\|\bm{b}_j\|_2 \geq n_1 (\Lambda + \Lambda') c_{\theta} - n_1 c_1 \geq n_1(\Lambda c_{\theta} - c_1) . 
\end{equation*}
Hence,
\begin{equation}\label{ineq:theta to b}
\begin{split}
    \Pr[\|\bm{\theta}_{j}\| > c_{\theta}]
    &\leq \Pr[\|\bm{b}_j\|_2 > n_1( \Lambda c_{\theta} - c_1)] \\
    &= \Pr[\|\bm{b}_j\|_2 > \frac{\beta n_1( \Lambda c_{\theta} - c_1)}{\beta}] .
\end{split}
\end{equation}
Then we substitute $\beta, c_{\theta}$ with Eq. (\ref{eq:beta}) and (\ref{eq:c_theta}),
\begin{equation*}
\begin{split}
    \beta n_1( \Lambda c_{\theta} - c_1) = \frac{\max(\epsilon - \epsilon_{\Lambda}, \omega \epsilon)}{\omega \epsilon} c_{sf} \geq c_{sf} .
\end{split}
\end{equation*}
Hence,
\begin{equation}\label{ineq:prob b to c_sf}
    \Pr[\|\bm{b}_j\|_2 > n_1( \Lambda c_{\theta} - c_1)] \leq \Pr[\|\bm{b}_j\|_2 > \frac{c_{sf}}{\beta}] .
\end{equation}
Since the distribution of $\|\bm{b}_j\|_2$ follows Eq. (\ref{eq:norm b distribution}),
\begin{equation}\label{ineq:c_sf to delta}
\begin{split}
    \Pr[\|\bm{b}_j\|_2 > \frac{c_{sf}}{\beta}] &= 1 - \int_0^{\frac{c_{sf}}{\beta}} \frac{x^{d-1} e^{-\beta x}\beta^{d}}{(d-1)!} dx \\
    &= 1 - \int_0^{c_{sf}} \frac{x^{d-1} e^{-x}}{(d-1)!} dx \leq \frac{\delta}{c} ,
\end{split}
\end{equation}
where the last inequality substitutes $c_{sf}$ with Eq. (\ref{eq:find minimal c_sf}).
Combining Eq. (\ref{ineq:union bound}), (\ref{ineq:theta to b}), (\ref{ineq:prob b to c_sf}), and (\ref{ineq:c_sf to delta}), we get
\begin{equation*}
    \Pr[\cup_{j=1}^{c} \mathds{1}(\|\bm{\theta}_{j}\|_2 > c_{\theta})=1] \leq \sum_{j=1}^{c} \frac{\delta}{c} = \delta .
\end{equation*}
\end{proof}

\subsection{\texorpdfstring{\((\epsilon,\delta)\)}{}-Probabilistic Differential Privacy}\label{app:pDP}
\begin{definition}[($\epsilon,\delta$)-Probabilistic Differential Privacy (pDP)~\cite{zhao2019reviewing}] \label{def:pDP}
a randomized algorithm $\mathcal{A}$ on domain $\mathcal{D}$ satisfies {\it ($\epsilon, \delta$)-pDP} if given
any neighboring datasets $D, D'\subseteq \mathcal{D}$, we have
\begin{equation*} \label{eq:pDP}
\mathbb{P}_{o \sim \mathcal{A}(D)}\left[e^{-\epsilon} \leq \frac{g\left(\mathcal{A}(D)=o\right)}{g\left(\mathcal{A}(D')=o\right)} \leq e^\epsilon\right] \geq 1-\delta,
\end{equation*}
where $o \sim \mathcal{A}(D)$ denotes that $o$ follows the probabilistic distribution of the output $\mathcal{A}(D)$, and $g$ denotes the probability density function.
\end{definition}

\begin{lemma}[Lemma 3 in \cite{zhao2019reviewing}]\label{lemma:pDP to DP}
    ($\epsilon, \delta$)-pDP implies ($\epsilon, \delta$)-DP.
\end{lemma}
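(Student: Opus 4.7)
The plan is to unpack the definitions and show that the pDP bound on the density ratio translates to the DP bound on event probabilities by partitioning the output space into a ``good'' region (where the ratio is well-behaved) and its complement (whose probability mass is at most $\delta$).

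First I would fix arbitrary neighboring datasets $D, D'$ and an arbitrary measurable output set $\mathcal{O} \subseteq \textit{Range}(\mathcal{A})$. I would then define the good set
\begin{equation*}
    S \triangleq \left\{ o : e^{-\epsilon} \leq \frac{g(\mathcal{A}(D)=o)}{g(\mathcal{A}(D')=o)} \leq e^{\epsilon} \right\}.
\end{equation*}
By the definition of $(\epsilon,\delta)$-pDP applied to the pair $(D,D')$, we have $\Pr[\mathcal{A}(D) \in S] \geq 1-\delta$, equivalently $\Pr[\mathcal{A}(D) \in S^c] \leq \delta$.

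Next I would decompose $\Pr[\mathcal{A}(D) \in \mathcal{O}] = \Pr[\mathcal{A}(D) \in \mathcal{O}\cap S] + \Pr[\mathcal{A}(D) \in \mathcal{O}\cap S^c]$. The second summand is bounded by $\Pr[\mathcal{A}(D) \in S^c] \leq \delta$. For the first summand, I would pass to the density representation and use the pointwise bound $g(\mathcal{A}(D)=o) \leq e^{\epsilon} g(\mathcal{A}(D')=o)$ that holds by construction of $S$:
\begin{equation*}
    \Pr[\mathcal{A}(D) \in \mathcal{O}\cap S] = \int_{\mathcal{O}\cap S} g(\mathcal{A}(D)=o)\, do \leq e^{\epsilon}\! \int_{\mathcal{O}\cap S}\! g(\mathcal{A}(D')=o)\, do \leq e^{\epsilon} \Pr[\mathcal{A}(D') \in \mathcal{O}].
\end{equation*}
Summing the two bounds yields $\Pr[\mathcal{A}(D)\in \mathcal{O}] \leq e^{\epsilon}\Pr[\mathcal{A}(D')\in\mathcal{O}] + \delta$, which is exactly Definition~\ref{def:prelim-dp}, so $(\epsilon,\delta)$-DP holds.

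There is no real obstacle here; the argument is a textbook conversion from a high-probability density-ratio guarantee into an additive slack in the DP inequality. The only minor subtlety is being explicit that the integration over $\mathcal{O}\cap S$ uses the pointwise ratio bound that holds precisely on $S$, and that the excess mass $\delta$ absorbs everything that can go wrong on $S^c$. If the output space is discrete, the integral is simply replaced by a sum and the argument is unchanged. If $\mathcal{A}(D')$ has any region of zero density where $\mathcal{A}(D)$ does not, one interprets the ratio as $+\infty$, and such points automatically lie in $S^c$ and are covered by the $\delta$ slack.
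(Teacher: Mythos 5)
Your proof is correct. Note that the paper does not actually prove this lemma itself---it is imported by citation as Lemma~3 of \cite{zhao2019reviewing}---so there is no in-paper argument to compare against; your write-up is the standard conversion and would serve as a valid self-contained proof. The decomposition over the good set $S$ and its complement, the pointwise bound $g(\mathcal{A}(D)=o)\leq e^{\epsilon}g(\mathcal{A}(D')=o)$ on $\mathcal{O}\cap S$, and the absorption of $\Pr[\mathcal{A}(D)\in S^c]\leq\delta$ into the additive slack are exactly the right steps, and your remarks about the discrete case and about points where the ratio is $+\infty$ correctly handle the edge cases. The only thing worth making fully explicit is that Definition~\ref{def:prelim-dp} quantifies over ordered pairs of neighboring datasets, so the argument must be run once for $(D,D')$ and once for $(D',D)$; since the pDP hypothesis is likewise assumed for every such ordered pair, this is immediate, and your phrase ``fix arbitrary neighboring datasets'' already covers it.
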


\subsection{Singular Values of Sum of Matrices}\label{app:singular values of sum}
\begin{theorem}[Corollary 3.4.3 in \cite{horn1991topics}]\label{theorem:sum singular values}
For any $\bm{A},\bm{B} \in \mathbb{R}^{mxn}$, let $\sigma_1(A)\geq \cdots \geq \sigma_q(A)$ and $\sigma_1(B)\geq \cdots \geq \sigma_q(B)$ be their respective ordered singular values, where $q=\min(m,n)$. Let $\sigma_1(\bm{A}+\bm{B})\geq \cdots \geq \sigma_q(\bm{A}+\bm{B})$ be the ordered singular values of $\bm{A}+\bm{B}$. Then
\begin{equation*}
    \sum_{i=1}^l \sigma_i(\bm{A}+\bm{B}) \leq \sum_{i=1}^l \sigma_i(\bm{A})+\sum_{i=1}^l \sigma_i(\bm{B}), \quad l=1, \ldots, q
\end{equation*}
\end{theorem}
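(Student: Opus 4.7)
The plan is to derive the inequality from Ky Fan's maximum principle, namely that the partial sum $\sum_{i=1}^{l}\sigma_i(\bm{M})$ admits a variational representation as a maximum of a linear functional of $\bm{M}$; once that is in hand, subadditivity is immediate because the maximum of a sum is at most the sum of maxima. Concretely, I will establish, for any $\bm{M}\in\mathbb{R}^{m\times n}$ and any $l\le q=\min(m,n)$, the identity
\begin{equation*}
\sum_{i=1}^{l}\sigma_i(\bm{M})=\max_{\substack{\bm{U}\in\mathbb{R}^{m\times l},\,\bm{V}\in\mathbb{R}^{n\times l}\\ \bm{U}^{T}\bm{U}=\bm{I}_l,\ \bm{V}^{T}\bm{V}=\bm{I}_l}}\operatorname{tr}\bigl(\bm{U}^{T}\bm{M}\bm{V}\bigr).
\end{equation*}
Once this is proved, the theorem follows in three lines: writing $\operatorname{tr}(\bm{U}^{T}(\bm{A}+\bm{B})\bm{V})=\operatorname{tr}(\bm{U}^{T}\bm{A}\bm{V})+\operatorname{tr}(\bm{U}^{T}\bm{B}\bm{V})$ and taking the max on both sides yields $\sum_{i=1}^{l}\sigma_i(\bm{A}+\bm{B})\le\sum_{i=1}^{l}\sigma_i(\bm{A})+\sum_{i=1}^{l}\sigma_i(\bm{B})$.

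To prove the variational identity, I would first take the (thin) singular value decomposition $\bm{M}=\bm{P}\bm{\Sigma}\bm{Q}^{T}$ where $\bm{P}\in\mathbb{R}^{m\times q}$ and $\bm{Q}\in\mathbb{R}^{n\times q}$ have orthonormal columns and $\bm{\Sigma}=\mathrm{diag}(\sigma_1,\dots,\sigma_q)$ with $\sigma_1\ge\sigma_2\ge\cdots\ge\sigma_q\ge 0$. Substituting and using the cyclic property of the trace gives $\operatorname{tr}(\bm{U}^{T}\bm{M}\bm{V})=\operatorname{tr}(\tilde{\bm{U}}^{T}\bm{\Sigma}\tilde{\bm{V}})$ for $\tilde{\bm{U}}=\bm{P}^{T}\bm{U}$ and $\tilde{\bm{V}}=\bm{Q}^{T}\bm{V}$; the transformed matrices still have columns of norm $\le 1$ (and one can reinsert orthogonality constraints without loss, since extending to orthogonal columns only enlarges the feasible set). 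This reduces the problem to bounding $\operatorname{tr}(\tilde{\bm{U}}^{T}\bm{\Sigma}\tilde{\bm{V}})=\sum_{k=1}^{q}\sigma_k\,\tilde{\bm{u}}_k^{T}\tilde{\bm{v}}_k$, where $\tilde{\bm{u}}_k,\tilde{\bm{v}}_k$ are the $k$-th rows of $\tilde{\bm{U}},\tilde{\bm{V}}$. By Cauchy--Schwarz each term is at most $\sigma_k\|\tilde{\bm{u}}_k\|_2\|\tilde{\bm{v}}_k\|_2$; setting $w_k=\|\tilde{\bm{u}}_k\|_2\|\tilde{\bm{v}}_k\|_2$ one has $0\le w_k\le 1$ and $\sum_{k}w_k\le l$ (since $\sum_k\|\tilde{\bm{u}}_k\|_2^{2}=\operatorname{tr}(\tilde{\bm{U}}\tilde{\bm{U}}^{T})=l$ and likewise for $\tilde{\bm{V}}$, so by Cauchy--Schwarz in the index $k$ we get the $\sum w_k\le l$ bound). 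A standard rearrangement argument for linear maximization of $\sum\sigma_k w_k$ under the constraints $0\le w_k\le 1$, $\sum w_k\le l$, with $\sigma_k$ nonincreasing, shows the maximum equals $\sigma_1+\cdots+\sigma_l$, attained by $w_1=\cdots=w_l=1$, $w_{l+1}=\cdots=w_q=0$. Equality is achieved by the explicit choice $\bm{U}=\bm{P}_{:,1:l}$, $\bm{V}=\bm{Q}_{:,1:l}$, which finishes the identity.

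The steps in order: (i) write the SVD of $\bm{M}$; (ii) reduce the maximization to the diagonal case; (iii) apply Cauchy--Schwarz row-wise; (iv) solve the resulting $\ell_\infty$-capped linear program by rearrangement; (v) conclude the variational identity and deduce the theorem by the ``max of sum $\le$ sum of max'' one-liner. The main obstacle is step (iv), proving that the linear program $\max\sum\sigma_k w_k$ with $0\le w_k\le 1$ and $\sum w_k\le l$ is maximized by the first $l$ coordinates; this is where the ordering hypothesis on the singular values is essential. A clean way to dispatch it is to note that any feasible $w$ can be written as a convex combination of extreme points, each a $0/1$ vector with at most $l$ ones, and among these the rearrangement inequality selects the top $l$ of $\sigma_1,\dots,\sigma_q$. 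Everything else is bookkeeping.
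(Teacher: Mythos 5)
The paper does not prove this statement at all: it is imported verbatim as Corollary 3.4.3 of Horn and Johnson's \emph{Topics in Matrix Analysis} and used as a black box in the proof of Lemma~\ref{lemma:bound for Jacobian ratio}. Your proposal supplies a correct, self-contained proof along the standard route, namely Ky Fan's variational principle $\sum_{i=1}^{l}\sigma_i(\bm{M})=\max\operatorname{tr}(\bm{U}^{T}\bm{M}\bm{V})$ over matrices with orthonormal columns, from which subadditivity of the partial sums is immediate by linearity of the trace; this is essentially the argument in the cited reference. The steps all check out: the reduction to the diagonal case via the thin SVD, the row-wise Cauchy--Schwarz bound, the capped linear program whose maximum is $\sigma_1+\cdots+\sigma_l$, and the explicit attainment at the leading singular subspaces. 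One small slip: after the substitution $\tilde{\bm{U}}=\bm{P}^{T}\bm{U}$ you assert $\operatorname{tr}(\tilde{\bm{U}}\tilde{\bm{U}}^{T})=l$, but since $\bm{P}\bm{P}^{T}$ is only an orthogonal projection of $\mathbb{R}^{m}$ onto a $q$-dimensional subspace, the correct statement is $\operatorname{tr}(\tilde{\bm{U}}\tilde{\bm{U}}^{T})=\operatorname{tr}(\bm{U}^{T}\bm{P}\bm{P}^{T}\bm{U})\le l$; the inequality is all you use downstream, so the argument is unaffected.
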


\subsection{Singular Values of Product of Matrices}\label{app:singular values of product}
\begin{lemma}\label{lemma:singular values of product}
    For the matrices $\bm{M}_1, \bm{M}_2$, and $\bm{M}_1 \bm{M}_2$, their singular values satisfy
    \begin{equation*}
        \sigma_i(\bm{M}_1 \bm{M}_2) \leq \sigma_1(\bm{M}_1) \sigma_i(\bm{M}_2)
    \end{equation*}
    where $\sigma_i(\cdot), \forall i \in [1,N]$ denotes the eigenvalues of matrix $\cdot$ ordered in a non-increasing order.
\end{lemma}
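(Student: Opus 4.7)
The plan is to combine the Courant--Fischer min--max characterization of singular values (already invoked earlier in the paper's Jacobian analysis) with the fact that $\sigma_1(\bm{M}_1)$ is the operator $\mathcal{L}_2 \to \mathcal{L}_2$ norm of $\bm{M}_1$. Concretely, for any matrix $\bm{A} \in \mathbb{R}^{p \times n}$, the min--max theorem gives
\[
\sigma_i(\bm{A}) \;=\; \min_{\substack{V \subseteq \mathbb{R}^n \\ \dim V = n - i + 1}} \; \max_{\substack{\bm{x} \in V \\ \|\bm{x}\|_2 = 1}} \|\bm{A} \bm{x}\|_2,
\]
and the operator-norm identity $\sigma_1(\bm{M}_1) = \sup_{\bm{y} \neq \bm{0}} \|\bm{M}_1 \bm{y}\|_2 / \|\bm{y}\|_2$ yields $\|\bm{M}_1 \bm{y}\|_2 \leq \sigma_1(\bm{M}_1)\|\bm{y}\|_2$ for every $\bm{y}$.

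Carrying this out, I would first fix a subspace $V \subseteq \mathbb{R}^n$ with $\dim V = n - i + 1$ and a unit vector $\bm{x} \in V$. Setting $\bm{y} = \bm{M}_2 \bm{x}$ in the operator-norm bound gives the pointwise submultiplicative inequality $\|\bm{M}_1 \bm{M}_2 \bm{x}\|_2 \leq \sigma_1(\bm{M}_1)\|\bm{M}_2 \bm{x}\|_2$. Taking the maximum over unit $\bm{x} \in V$ preserves the inequality, so
\[
\max_{\bm{x} \in V,\, \|\bm{x}\|_2 = 1} \|\bm{M}_1 \bm{M}_2 \bm{x}\|_2 \;\leq\; \sigma_1(\bm{M}_1) \max_{\bm{x} \in V,\, \|\bm{x}\|_2 = 1} \|\bm{M}_2 \bm{x}\|_2.
\]
Next, taking the minimum over all such subspaces $V$ on both sides and applying the min--max identity above to both $\bm{M}_1 \bm{M}_2$ (on the left) and $\bm{M}_2$ (on the right, noting that both have the same domain $\mathbb{R}^n$ so the family of competing subspaces matches exactly) gives $\sigma_i(\bm{M}_1 \bm{M}_2) \leq \sigma_1(\bm{M}_1)\,\sigma_i(\bm{M}_2)$, which is the claim.

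There is no real obstacle; this inequality is a standard consequence of the submultiplicativity of the operator norm under the min--max characterization (see, e.g., Theorem 3.3.14 of Horn--Johnson). The only subtlety worth flagging is a dimensional compatibility check: if $\bm{M}_1 \in \mathbb{R}^{p \times q}$ and $\bm{M}_2 \in \mathbb{R}^{q \times n}$, then both $\bm{M}_1 \bm{M}_2$ and $\bm{M}_2$ act on vectors in $\mathbb{R}^n$, so the min--max is taken over the same Grassmannian for both sides, making the final step legitimate. The index $i$ is, as usual, constrained to $1 \leq i \leq \min(p,q,n)$.
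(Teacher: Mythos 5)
Your proposal is correct and follows essentially the same route as the paper: both combine the Courant--Fischer variational characterization of singular values with the pointwise bound $\|\bm{M}_1\bm{M}_2\bm{x}\|_2 \leq \sigma_1(\bm{M}_1)\|\bm{M}_2\bm{x}\|_2$ and then pass the constant through the optimization over subspaces. The only cosmetic difference is that you use the min--max (codimension $n-i+1$) form whereas the paper uses the dual max--min (dimension $i$) form; the argument is identical in substance.
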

\begin{proof}
Using the min-max theorem of singular values~\cite{courant2008methods} of $\bm{M}_1\bm{M}_2$, we have
\begin{align*}
\sigma_i(\bm{M}_1\bm{M}_2)&=\max _{S: \operatorname{dim}(S)=i} \min_{\bm{x} \in S,\|\bm{x}\|=1} \|\bm{M}_1\bm{M}_2 \bm{x}\| \\
&\leq \max _{S: \operatorname{dim}(S)=i} \min_{\bm{x} \in S, \|\bm{x}\|=1} \|\bm{M}_1\| \cdot\|\bm{M}_2 \bm{x}\| \\
&=\sigma_1(\bm{M}_1) \cdot \max _{S: \operatorname{dim}(S)=i}  \min_{\bm{x} \in S,\|\bm{x}\|=1} \|\bm{M}_2 \bm{x}\| \\
&=\sigma_1(\bm{M}_1) \sigma_i(\bm{M}_2) .
\end{align*}
\end{proof}

In this section, we present additional experiment settings and results.
\subsection{Datasets.}\label{app:Datasets Statistics}
The primary information of the used datasets is presented in Table~\ref{tab:datasets statistics}, where Homo. ratio denotes the homophily ratio. Following the convention~\cite{kolluri2022lpgnet}, we use fixed training, validation, and testing sets for Cora-ML, CiteSeer, and PubMed. These include 20 samples per class in the training set, 500 samples in the validation set, and 1000 samples in the testing set. For Actor, we generate 5 random training/validation/testing splits with proportions of 60\%/20\%/20\%, respectively, following~\cite{huang2024optimizing}.

\begin{table}[h]
\caption{Statistics of the datasets}
\centering
\setlength\tabcolsep{1.5pt}
\begin{tabular}{l|ccccc}
\toprule
\text { Dataset } & \text { Vertices } & \text { Edges } & \text { Features } & \text { Classes } & \text { Homo. ratio } \\
\midrule
\text { Cora-ML } & 2995 & 16,316 & 2,879 & 7 & 0.81 \\
\text { CiteSeer } & 3,327 & 9,104 & 3,703 & 6 & 0.71 \\
\text { PubMed } & 19,717 & 88,648 & 500 & 3 & 0.79  \\
\text { Actor } & 7600 & 30,019 & 932 & 5 & 0.22 \\
\bottomrule
\end{tabular}
\label{tab:datasets statistics}
\end{table}

\begin{definition}[Homophily ratio] \label{def:homo ratio}
Given a graph $G$ comprising vertices $V$, edges $E$, and labels $Y$, its homophily ratio is the proportion of edges connecting vertices with the same label. Formally,
\begin{equation*}
    \text{Homo. ratio} = \frac{1}{|V|}\sum_{v\in V} \frac{1}{|\mathcal{N}_v|}\sum_{u\in \mathcal{N}_v} \mathds{1}(Y_u=Y_v),
\end{equation*}
where $\mathcal{N}_v$ is the set of neighbors of $v$, $\mathds{1}$ is the indicator function, and $Y_u$ and $Y_v$ denote the labels of vertices $u$ and $v$, respectively.
\end{definition}

\subsection{Hyperparameters}\label{app:Hyperparameters}
We employ a 2-layer fully connected neural network as our encoder (as mentioned in Section~\ref{sec:encoder module}) and tune the hidden units in $\{8, 16, 64\}$. We tune the expanded train set size $n_1$ in $\{n_0, n\}$, i.e., assign pseudo-labels for none of the unlabeled vertices or all unlabeled vertices. 

We search the restart probability $\alpha$ in $\{0.2, 0.4, 0.6, 0.8\}$ and the inference weight $\alpha_I$ in $\{\alpha\} \cup \{0.1, 0.9\}$. 
For the steps $\{m_i\}_{i=1}^s$, we set $s=1$ and search $m_1$ in $\{1, 2, 5, 10, \infty\}$ on Cora-ML, CiteSeer, and PubMed. For Actor, we set $s \in \{1,2,3\}$ and $\{m_i\}_{i=1}^s \subseteq \{0,1,2,5\}$. The privacy budget allocator $\omega$ (defined in Theorem~\ref{theorem:train DP}) is fixed at $0.9$.

We try the 2 loss functions introduced in Section~\ref{sec:GCON loss}. For the pseudo-Huber Loss (Eq. (\ref{eq:ph loss})), we tune the weight $\delta_l$ in $\{0.1, 0.2, 0.5\}$. For the noisy loss function (Eq. (\ref{eq:loss_priv})), we tune the regularization coefficient $\Lambda$ in $\{0.01, 0.2, 1, 2\}$. We optimize our model using the Adam optimizer~\cite{kingma2014adam}, with a learning rate of $\{0.001, 0.01\}$ and a weight decay of $10^{-5}$.

When assessing the competitors, we follow their tuning strategies to report their best performance. Following previous work~\cite{wu2022DPGCN, sajadmanesh2023gap, sajadmanesh2023progap}, we do not account for the privacy cost of hyperparameter tuning. For a fair comparison, we do not add noise during the inference phase for the baselines if they only use individual edges of the nodes (as explained in Section~\ref{sec:GCON inference}).

\end{document}